\documentclass[a4paper,UKenglish,cleveref,autoref,thm-restate]{lipics-v2021}
\usepackage{xparse}
\usepackage{mathtools}
\usepackage{xspace}
\usepackage{cleveref}
\usepackage{graphicx}
\usepackage{tikz}
\usetikzlibrary{arrows,arrows.meta,shapes,automata,calc,chains,matrix,trees,fit,positioning,scopes,patterns,snakes}
\usetikzlibrary{decorations.pathmorphing}
\usepackage{dcolumn}
\usepackage{stackengine}
\usepackage{bm}
\usepackage[inline]{enumitem}

\usepackage{soul}
\setul{1.5pt}{0.8pt}
\setulcolor{lipicsYellow!70!white}
\usepackage{stmaryrd}
\usepackage{colonequals}
\usepackage{subcaption}

\author{Radosław Pi{\'{o}}rkowski}{University of Warwick}{radoslaw.piorkowski@warwick.ac.uk}{}{}

\authorrunning{R. Pi{\'{o}}rkowski} 

\Copyright{Radoslaw Pi{\'{o}}rkowski} 

\ccsdesc[500]{Theory of computation~Automata over infinite objects}
\ccsdesc[500]{Theory of computation~Logic and verification}
\ccsdesc[500]{Theory of computation~Regular languages}

\keywords{register automata, data words, monadic second-order logic, infinite alphabets, Kleene theorem, nominal sets}

\category{} 

\relatedversion{} 

\EventEditors{John Q. Open and Joan R. Access}
\EventNoEds{2}
\EventLongTitle{42nd Conference on Very Important Topics (CVIT 2016)}
\EventShortTitle{CVIT 2016}
\EventAcronym{CVIT}
\EventYear{2016}
\EventDate{December 24--27, 2016}
\EventLocation{Little Whinging, United Kingdom}
\EventLogo{}
\SeriesVolume{42}
\ArticleNo{23}
\newtheorem{condition}[theorem]{Condition}
\newtheorem{fact}[theorem]{Fact}

\crefname{premise}{premise}{premises}
\Crefname{premise}{Premise}{Premises}
\crefformat{premise}{#2#1#3}
\Crefformat{premise}{#2#1#3}
\crefrangeformat{premise}{#3#1#4--#5#2#6}
\crefmultiformat{premise}{#2#1#3}{ and #2#1#3}{, #2#1#3}{, and #2#1#3}

\crefname{claim}{Claim}{Claims}
\Crefname{claim}{Claim}{Claims}

\crefformat{equation}{(#2#1#3)}
\crefrangeformat{equation}{(#3#1#4) to~(#5#2#6)}
\crefmultiformat{equation}{(#2#1#3)}{ and~(#2#1#3)}{, (#2#1#3)}{ and~(#2#1#3)}

\newcommand{\of}[1]{Proof of {#1}.}

\newcommand{\automataClass}[1]{\textrm{#1}\xspace}

\newcommand{\NFA}{\parametrisedT{NFA}}
\newcommand{\DFA}{\parametrisedT{DFA}}
\newcommand{\NRA}{\parametrisedT{NRA}^{\!\textup{\textsc{g}}}}
\newcommand{\NRAweak}{\parametrisedT{NRA}^{\!\textup{\textsc{w}}}}
\newcommand{\NRAnoguess}{\parametrisedT{NRA}}

\newcommand{\NFAs}{\automataClass{NFA}}

\newcommand{\NRAs}{\automataClass{NRA${}^{\!\textup{\textsc{g}}}$}}

\newcommand{\MSO}[1][\Sigma]{\ensuremath{\textrm{MSO}[<,#1]}\xspace}
\newcommand{\MSOnoalpha}{\ensuremath{\textrm{MSO}[<]}\xspace}

\newcommand{\SMSO}[1][\Sigma, \A]{\ensuremath{\textrm{MSO}_\textrm{S}{[<,#1]}}\xspace}
\newcommand{\SMSOnoalpha}{\ensuremath{\textrm{MSO}_\textrm{S}{[<]}}\xspace}
\newcommand{\SMSOnoalphabold}{\ensuremath{\textrm{MSO}_\textrm{S}{\bm{[<]}}}\xspace}
\newcommand{\SMSOnoguess}[1][\Sigma, \A]{\ensuremath{\textrm{MSO}'_\textrm{S}{[<,#1]}}\xspace}

\newcommand{\RE}{\parametrisedT{RE}}
\newcommand{\DRE}{\parametrisedT{DRE}}
\newcommand{\DREs}{\automataClass{DRE}}

\NewDocumentEnvironment{oldtext}{O{} +b}{%
    \if\relax\detokenize{#1}\relax
    {\color{gray}\footnotesize #2}%
    \else
            {\color{blue}#1}%
    \fi
}{}
\NewDocumentEnvironment{hide}{ +b }{}{}

\newcommand{\proofCase}[2]{\smallskip\par\noindent{\color{lipicsGray}\textsf{\textbf{\hbox{Case #1}} (\hbox{\color{black}#2})}\textbf{.}}\textsf{\textbf{\hspace{0.5em}}}\ignorespaces}

\newcommand{\proofDir}[1]{\par\noindent{\color{lipicsGray}\textsf{\textbf{#1}}\textsf{\textbf{\hspace{0.5em}}}}\ignorespaces}
\newcommand{\namedparagraph}[1]{\smallskip\par\noindent\textsf{\color{black}\textbf{#1}}\hspace{0.5em}\ignorespaces}
\newcommand{\A}{\bA}

\newcommand{\N}{\bN}
\newcommand{\Q}{\bQ}

\newcommand{\B}{\set{\ubegin, \utail}}
\newcommand{\ubegin}{\mathord{\raisebox{-0.3pt}{\scalebox{1.2}{$\bullet$}}}}
\newcommand{\utail}{\mathord{\raisebox{-0.3pt}{\scalebox{1.2}{$\circ$}}}}

\renewcommand{\v}[1]{\boldsymbol{#1}}
\newcommand{\vu}{\v{u}}
\newcommand{\vv}{\v{v}}

\newcommand{\vui}[1]{\v{u}^{(#1)}}
\newcommand{\vvi}[1]{\v{v}^{(#1)}}
\newcommand{\vcu}{\boldsymbol{\check{u}}}
\newcommand{\vcui}[1]{\boldsymbol{\check{u}}^{(#1)}}
\newcommand{\vcup}{\boldsymbol{\check{u}'}}
\newcommand{\vcupp}{\boldsymbol{\check{u}''}}
\newcommand{\vcv}{\boldsymbol{\check{v}}}
\newcommand{\vcvi}[1]{\boldsymbol{\check{v}}^{(#1)}}
\newcommand{\vcvp}{\boldsymbol{\check{v}'}}
\newcommand{\vcvpp}{\boldsymbol{\check{v}''}}

\newcommand{\disjointUnion}{\mathbin{\stackon[-5.2pt]{$\cup$}{$\cdot$}}}

\newcommand{\Subscopes}{\functionT{subs}}

\newcommand{\range}[2]{[{#1}, {#2}]_\N}
\newcommand{\rangeOne}[1]{[{#1}]_\N}

\newcommand{\constr}{\functionT{constr}}
\newcommand{\pad}{\#}
\newcommand{\any}{\functionT{any}}
\newcommand{\pal}{\functionT{pal}}
\newcommand{\Rep}{\mathrm{repr}}
\newcommand{\Id}{\mathrm{Pal}}
\newcommand{\Tr}{\hat{\mathrm{E}}}

\newcommand{\type}{\functionT{type}}
\newcommand{\qfregion}{\functionT{region}}

\newcommand{\limplies}{\rightarrow}

\NewDocumentCommand{\function}{ m e{^_} O{a} m }{%
    \operatorname{%
        #1%
    }%
    \IfValueT{#3}{_{#3}}%
    \IfValueT{#2}{^{#2}}%
    \if\relax\detokenize{#5}\relax
    \else%
        \ifx#4a%
        \mathopen{}\bracketInner*{#5}%
        \else
            \bracketInner[#4]{#5}%
        \fi
    \fi
}

\newcommand{\functionT}[1]{\function{\mathrm{#1}}}
\newcommand{\functionM}{\function}

\newcommand{\X}[1]{\cX(#1)}
\newcommand{\Y}[1]{\cY(#1)}

\newcommand{\Conv}{\functionM{\otimes}}
\newcommand{\model}{\functionM{\mathbb{W}}}
\newcommand{\modelConv}{\functionM{\mathbb{W}}}

\newcommand{\Var}{\cU}
\newcommand{\VarIni}{\cU_\shortuparrow}
\newcommand{\VarOther}{\cU_\shortdownarrow}

\newcommand{\argumentDot}{\mkern2mu\cdot\mkern2mu}
\newcommand{\FOVarOf}{\functionM{\mathrm{var}_1}}

\newcommand{\FreeVarOf}{\functionM{\mathrm{free}}}
\newcommand{\FreeFOVarOf}{\functionM{\mathrm{free}_1\mkern-2mu}} 

\newcommand{\arity}{\functionT{ar}}

\newcommand{\cur}{\smash{\overset{\raisebox{-1pt}[0pt][0pt]{$\scriptstyle\triangledown$}}{y}}}
\newcommand{\valpost}[1]{\smash{\overset{\raisebox{-1pt}[0pt][0pt]{$\scriptstyle\triangleright$}}{#1}}}
\newcommand{\valpre}[1]{\smash{\overset{\raisebox{-1pt}[0pt][0pt]{$\scriptstyle\triangleleft$}}{#1}}}

\newcommand{\Eta}{\mathrm{Eta}}
\newcommand{\Run}{\mathrm{Run}}

\newcommand{\RegOK}{\mathrm{RegOK}}
\newcommand{\Op}{\mathrm{Op}}

\newcommand{\Live}{\mathrm{Live}}
\newcommand{\Pairs}{\mathrm{Pairs}}
\newcommand{\Split}{\mathrm{Split}}

\newcommand{\powerset}{\functionM{\cP}}
\newcommand{\PosOf}{\functionT{pos}}

\newcommand{\lang}{\functionM{\cL}}

\NewDocumentCommand{\parametrised}{ m e{^_} O{} }{%
        \ensuremath{{%
        #1%
    }%
    \IfValueT{#3}{_{#3}}%
    \IfValueT{#2}{^{#2}}%
            \if\relax\detokenize{#4}\relax
            \else
                \squareBracketInner*{#4}\fi}\xspace%
}
\newcommand{\parametrisedT}[1]{\parametrised{\mathrm{#1}}}

\newcommand{\emptyReg}{\textup{nil}}
\newcommand{\emptyRegElement}{\underline{\textup{nil}}}
\newcommand{\Block}{\textup{Block}}

\newcommand{\conf}[1]{C_{#1}}

\newcommand{\ini}{\mathrm{ini}}
\newcommand{\fin}{\mathrm{fin}}

\newcommand{\scope}{S}

\ExplSyntaxOn
\seq_set_from_clist:Nn \l_letter_seq {a,b,c,d,e,f,g,h,i,j,k,l,m,n,o,p,q,r,s,t,u,v,w,x,y,z}
\seq_set_from_clist:Nn \l_capital_letter_seq {A,B,C,D,E,F,G,H,I,J,K,L,M,N,O,P,Q,R,S,T,U,V,W,X,Y,Z}
\seq_map_inline:Nn \l_letter_seq {
    \str_if_eq:nnTF { #1 } { i }
    {
    }
    {
        \cs_new:cpn { f#1 } { \mathfrak{#1} }
    }
    \cs_new:cpn { tt#1 } { \mathtt{#1} }
}
\seq_map_inline:Nn \l_capital_letter_seq {
    \cs_new:cpn { c#1 } { \mathcal{#1} }
    \cs_new:cpn { b#1 } { \mathbb{#1} }
    \cs_new:cpn { tt#1 } { \mathtt{#1} }
    \cs_new:cpn { f#1 } { \mathfrak{#1} }
}
\ExplSyntaxOff
\renewcommand{\cU}{\,\mathcal{U}}

\newcommand{\suchthat}{\suchthatSymbol\PackageWarning{Radeks Macro}{Command suchthat used outside of matching PairedDelimiter was used on input line \the\inputlineno.}}
\newcommand\suchthatSymbol[1][]{\nonscript\:#1\vert\allowbreak\nonscript\:\mathopen{}}
\DeclarePairedDelimiterX{\setInner}[1]\{\}{\renewcommand\suchthat{\suchthatSymbol[\delimsize]}#1}
\NewDocumentCommand{\set}{ O{a} m }{\ifx#1a\setInner*{#2}\else\ifx#1b{\{#2\}}\else\setInner[#1]{#2}\fi\fi}
\NewDocumentCommand{\bracket}{ O{a} m }{\ifx#1a\bracketInner*{#2}\else\bracketInner[#1]{#2}\fi}
\NewDocumentCommand{\squareBracket}{ O{a} m }{\ifx#1a\squareBracketInner*{#2}\else\squareBracketInner[#1]{#2}\fi}

\DeclarePairedDelimiter {\length}        \lvert\rvert
\DeclarePairedDelimiter {\size}          \lvert\rvert
\DeclarePairedDelimiter {\squareBracketInner} []
\DeclarePairedDelimiter {\sem}           \llbracket\rrbracket

\DeclarePairedDelimiter {\bracketInner}  ()

\protected\def\verythinspace{%
    \ifmmode\mskip0.5\thinmuskip\else\ifhmode\kern0.08334em\fi\fi%
}

\let\forallSymbol=\forall
\let\existsSymbol=\exists

\newcommand{\segmentSymbol}{\reflectbox{\tikz[baseline={(char.base)}]{
    \node[anchor=base,inner xsep=-0.22ex,inner ysep=-0.032em] (char) {\phantom{${\existsSymbol}$}};
    \path (char.west) -- (char.north west) coordinate[pos=0.55] (midpoint1);
    \path (char.west) -- (char.south west) coordinate[pos=0.4] (midpoint2);
    \path (char.east) -- (char.north east) coordinate[pos=0.5] (midpoint3);
    \path (char.east) -- (char.south east) coordinate[pos=0.45] (midpoint4);
    \draw[line width=0.065em]
    (midpoint2) to[out=270,in=180] (char.south) to[out=0,in=270] (midpoint4) to[out=90,in=270] (midpoint1) to[out=90,in=180] (char.north) to[out=0,in=90] (midpoint3);
}}}

\RenewDocumentCommand{\forall}{m t.}{%
        {{\forallSymbol}\verythinspace\IfValueT{#1}{#1\IfBooleanT{#2}{.\thinspace}\;}}%
}
\RenewDocumentCommand{\exists}{m t.}{%
        {{\existsSymbol}\verythinspace\IfValueT{#1}{#1\IfBooleanT{#2}{.\thinspace}\;}}%
}
\NewDocumentCommand{\segment}{m t.}{%
        {{\verythinspace\segmentSymbol\verythinspace}_{#1}\IfBooleanT{#2}{.\thinspace}\;}%
}

\makeatletter
\newcommand*{\coloncoloneqqq}{%
    \ensuremath{%
        \mathrel{%
            \@center@colon
            \colonsep
            \@center@colon
            \doublecolonsep
            {\equiv}
        }%
    }%
}
\makeatother

\hideLIPIcs
\nolinenumbers

\title{Scoped MSO, Register Automata, and Expressions: Equivalence over Data Words}

\begin{document}
    \maketitle

    \begin{abstract}
        This paper establishes logical and expression-based characterizations for the class of languages recognized by nondeterministic register automata with guessing (NRA) over infinite alphabets. We introduce Scoped MSO, a logic featuring a novel segment modality and syntactic restrictions on data comparisons. We prove this logic is expressively equivalent to NRA over data domains where ``strong guessing'' can be eliminated. Furthermore, we define Data-Regular Expressions, a minimalist regular-expression calculus built from quantifier-free regions and equipped with $k$-contracting concatenation, and demonstrate its equivalence to NRA over arbitrary relational structures. Together, these formalisms provide a robust descriptive theory for register automata, bridging the gap between automata, logic, and expressions.
    \end{abstract}

    \section{Introduction}
    \label{sec:introduction}
    A remarkable property of the class of \emph{regular languages} is its striking \emph{robustness}: one and the same class of languages admits several radically different, yet equivalent, descriptions. Over a finite alphabet $\Sigma$, finite automata, regular expressions, and monadic second-order logic over word structures (MSO) define exactly the regular languages. This three-way equivalence (and its numerous refinements via algebra, temporal logics, varieties, etc.) forms the backbone of classical language theory and explains why ``regularity'' is such a stable and reusable concept.

When moving from finite alphabets to \emph{infinite alphabets}---modelling systems with process IDs, database keys, large alphabets like Unicode or data values in XML---this robustness generally vanishes. The standard abstraction for such traces is \emph{data words}, sequences over $\Sigma \times \A$ where $\Sigma$ is a finite set of labels and $\A$ is a data domain called \emph{atoms} modelled as a relational structure with countably-infinite universe. The intended access to elements of $\A$ (called \emph{data values}) is limited to testing relations from the signature of $\A$; common examples of atoms include equality atoms $\A_{=} =(\N, =)$ and dense order atoms $\A_{<} = (\Q, <, =)$. The landscape of formalisms for data words is fragmented, featuring a wide array of automata types and logics whose expressive powers are often incomparable and whose decision problems' complexity and decidability vary wildly~\cite{NevenSchwentickVianu04,10.1007/11874683_3,BjorklundSchwentick10,bojanczyk2019slightly,ndet-and-condet-implies-det}. Unlike the finite-alphabet case, there is no single notion that plays the role of ``\emph{the}'' regular languages for data words.

\namedparagraph{Register Automata.}
A natural candidate for a finite-state model over data words is the \emph{register automaton}, introduced by Kaminski and Francez~\cite{Kaminski1990FinitememoryA}.
These automata extend finite control with a fixed set of registers capable of storing data values and comparing them for equality.
The model was subsequently extended to arbitrary atoms $\A$ \cite{BojanczykKlinLasotaNominalAutomata}.
In this paper, we target the class of \emph{nondeterministic register automata with guessing} (\NRAs). The ``guessing'' capability is the most permissive out of three possible variants of nondeterminism in register automata. While \NRAs are a standard baseline, they have historically lacked a MSO-style logical characterisation.

\namedparagraph{The MSO Logic for NRAs.}
Establishing a correspondence between \NRAs and logic requires careful calibration of the expressive power.
Simply adding a data atomic formulas to \MSOnoalpha leads to undecidability~\cite{NevenSchwentickVianu04,10.1145/1970398.1970403}, even for the first-order (FO) fragment with three variables and the simplest variant of atoms $\A_{=}$: one may encode grid-like structures of arbitrary dimension in data words and simulate runs of Turing machines on them.
Conversely, restricting the logic too aggressively (e.g., to rigid guards, or to the two-variable FO fragment) yields decidability but fails to capture the full expressive power of \NRAs~\cite{ColcombetLeyPuppis15,BojanczykStefanski20,10.1145/1970398.1970403}.
The lack of closure of \NRAs under language complement~\cite{ndet-and-condet-implies-det} poses yet another challenge---any logic of the same expressive power must impose limitations on negation operation.

\namedparagraph{Expressions for NRAs.}
Regarding expression formalisms, Brunet and Silva define expressions featuring explicit binders for name allocation and deallocation~\cite{BrunetSilva19} and prove an analogue of the Kleene theorem: that these expressions define exactly languages of \NRAs.
Their formalism provides a versatile \emph{specification} language, allowing concise and readable expressions at the cost of more involved definition of semantics.
We aim for a complementary goal: a \emph{minimalist} expression formalism whose syntax is close in spirit to classical regular expressions.

\namedparagraph{Contributions.}
In this paper, we provide two new formalisms for languages of data words over $\Sigma \times \A$: Scoped MSO logic (\SMSO[\Sigma, \A]) and Data-Regular Expressions (\DRE[\Sigma,\A]), recovering the classical trinity for NRAs:
\begin{align*}
    \text{Automata ($\NRA[\Sigma, \A]$)} \;\equiv\; \text{Expressions ($\DRE[\Sigma, \A]$)} \;\equiv\; \text{Logic (\SMSO[\Sigma,\A])}\,.
\end{align*}

\begin{enumerate}
    \item \textbf{Data-Regular Expressions:} We introduce a compact expression formalism over $\Sigma \times \A$.
    \DREs are built from \emph{quantifier-free regions} of data words.
    The calculus relies on a parameterized operation on languages called \emph{$k$-contracting concatenation}. The parameter $k$ exposes a bounded ``interface'' between factors, reflecting the finite register set of the automaton.

    \smallskip
        {\color{lipicsGray}$\blacktriangleright$} We prove that $\lang{\DRE[\Sigma,\A]} = \lang{\NRA[\Sigma,\A]}$.\smallskip

    \item \textbf{Scoped MSO:} We define a logic tailored to the expressive power of \NRAs.
    \SMSOnoalpha extends MSO with two features:
    \begin{itemize}
        \item A \emph{segment modality} $\segment{X} \varphi$, which evaluates $\varphi$ on sub-intervals defined by a set of cut-positions $X$.
        \item \emph{Restricted atomic predicates} $R(\v x)$ with a \emph{syntactic restriction} preventing arbitrary data comparisons that would exceed the capacity of finite registers.
    \end{itemize}
        {\color{lipicsGray}$\blacktriangleright$} We prove that $\lang{\SMSO[\Sigma,\A]} = \lang{\NRA[\Sigma,\A]}$, provided the atoms $\A$ admit \emph{elimination of strong guessing}---a technical property we formalize.
    This condition holds for standard domains, including equality atoms $\A_{=}$ and dense order atoms $\A_{<}$, and implies decidability of satisfiability for $\SMSOnoalpha$ over these atoms.
    We discuss natural adaptations of $\SMSOnoalpha$ for $\omega$-words and subclasses of \NRAs.
\end{enumerate}

\namedparagraph{Implications and Outlook.}
These results provide a descriptive theory for NRAs that parallels the finite-alphabet picture.
Furthermore, this new characterisation offers a toolkit for attacking open problems in the theory of data languages.
For instance, the logical formalism of $\SMSOnoalpha$ may provide the missing insights to resolve Colcombet's conjecture regarding the separability of disjoint \NRA languages by Unambiguous Register Automata~\cite{10.1007/978-3-319-19225-3_1}\footnote{In the cited paper \cite{10.1007/978-3-319-19225-3_1} stated as a theorem, later regained the status of a conjecture.}.
Additionally, we leave open whether there is any strengthening
of $\SMSO[\Sigma,\A]$ which retains decidable satisfiability, while going beyond the class of \NRA languages.

    \section{Preliminaries}
    \label{sec:preliminaries}
        We use standard notations to denote natural numbers $\N$, positive natural numbers $\N_+$ or rationals $\Q$.
    We denote ranges of natural numbers by $\range{i}{j} \coloneqq \N \cap [i,j]$ and $\rangeOne{j} \coloneqq \range{1}{j}$.

    \namedparagraph{Words.}
    Fix an alphabet $\Sigma$ and a word $w = a_1 a_2 \cdots a_n \in \Sigma^*$.
    We write $w[i] \coloneqq a_i$ and, for an interval $I = \range{i}{j}$, we write $w[I]$ for the infix $a_i a_{i+1} \cdots a_j$.
    The reversed word $a_n \cdots a_1$ is denoted as $w^R$.
    The length $n$ of $w$ is denoted as $\length{w}$, and the set of \emph{positions} of $w$ is $\PosOf{w} = \rangeOne{n}$.
    The unique word of length $0$ is called an \emph{empty word} and denoted by $\varepsilon$.
    We identify the set of families of words $(Y^*)^X$ with the function space $X \to Y^*$.
    Accordingly, we introduce families as mappings $\v u\colon X \to Y^*$ but retain the subscript notation $u_x$ for components.
    We lift concatenation and infix extraction operations to families of words.

    \namedparagraph{Word structures.}
        \label{def:word-structure-finite-alphabet}
        Given a $w \in \Sigma^*$, a \emph{word structure}
        $\model{w}$ is the relational structure defined as
        $\model{w} = (\PosOf{w}, <, ({\sigma})_{\sigma \in \Sigma})$ where ``$<$'' is interpreted as the natural linear ordering on $\PosOf{w}$,
        and for each letter $\sigma \in \Sigma$ we have a unary relation symbol $\sigma = \set{ p \in \PosOf{w} \suchthat w[p] = \sigma }$ interpreted the set of positions in $w$ labelled with $\sigma$.
        While we use $\sigma$ both for letters of $\Sigma$ and position predicates, the intended meaning is clear from the context.

    \subsection{Formalisms for regular languages}
    \label{subsec:formalisms-for-regular-languages}
    This paper's starting point is the equivalence of the following three formalisms defining regular languages.
    Fix a finite alphabet $\Sigma$.

    \namedparagraph{Nondeterministic finite automata.}
    A \emph{Nondeterministic Finite Automaton} (NFA) is a tuple $\cA = (Q, \Sigma, Q_\ini, Q_\fin, \delta)$, where $Q$ is a finite set of states, $Q_\ini$ and $Q_\fin$ its initial and accepting subsets, and $\delta \subseteq Q \times \Sigma \times Q$.
    We call the elements of $\delta$ \emph{transitions} and denote $(q, a, r) \in \delta$ using the arrow notation $q \xrightarrow{a} r$.
    A \emph{run} of $\cA$ of length $n$ over $w = a_1 a_2 \cdots a_n$ is a $w$-labelled path $\pi = t_1 t_2 \cdots t_n \in \delta^*$ in the labelled graph $(Q, \delta)$, i.e., a sequence such that $t_i$ has the form $(\argumentDot, a_i, \argumentDot)$ for every $i$, and for any two consecutive transitions $p \xrightarrow{a} q$ and $r \xrightarrow{b} s$ we have $q = r$.
    We say that $\cA$ \emph{accepts} $w \in \Sigma^*$ if there exists a run of $\cA$ over $w$ that begins in some initial state $q_\ini \in Q_\ini$ and ends in an accepting state $q_\fin \in Q_\fin$.
    The language of $\cA$ is $\lang{\cA} = \set{w \in \Sigma^* \suchthat \text{$\cA$ accepts $w$}}$.
    We write $\NFA[\Sigma]$ for the family of all \NFAs~over~$\Sigma$.

    \namedparagraph{Regular expressions.}
    \emph{Regular expressions} over $\Sigma$ ($\RE[\Sigma]$) are terms generated by the grammar
    \begin{align*}
        E, F \coloncoloneqqq \emptyset \mid \epsilon \mid a \in \Sigma \mid E + F \mid E \cdot F \mid E^*\,.
    \end{align*}
    The language $\lang{E}$ of an expression $E$ is defined as follows
    \begin{align*}
        \lang{\emptyset} &= \emptyset &
        \lang{\epsilon} &= \set{\epsilon} &
        \lang{a} &= \set{a} \\
        \lang{E + F} &= \lang{E} \cup \lang{F} &
        \lang{E \cdot F} &= \lang{E} \cdot \lang{F} &
        \lang{E^*} &= \textstyle\bigcup_{n \ge 0} \lang{E}^n
    \end{align*}
    where the language concatenation is $K\cdot L \coloneqq \set{uv\suchthat u \in K, v \in L}$ for $K, L \subseteq \Sigma^*$, and $K^n$ denotes $n$-fold concatenation of $K$ with itself; $K^0 \coloneqq \set{\varepsilon}$.

    \begin{theorem}[Kleene \cite{kleene1956representation}]
        \label{thm:kleene-regular}
        $\lang{\NFA[\Sigma]} = \lang{\RE[\Sigma]}$ for any finite $\Sigma$.
    \end{theorem}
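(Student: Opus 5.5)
We prove the two inclusions separately. Throughout, languages are over the fixed finite alphabet $\Sigma$, and we use the paper's notation for NFAs and regular expressions.

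\textbf{From expressions to automata.} For $\lang{\RE[\Sigma]} \subseteq \lang{\NFA[\Sigma]}$ we proceed by structural induction on $E$.

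The base cases have direct automata. For $\emptyset$ take a single state that is not accepting. For $\epsilon$ take a single state that is both initial and accepting. For $a$ take two states joined by one $a$-transition.

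The inductive cases are handled as follows.
\begin{itemize}
\item For $E+F$, take the disjoint union of the two automata.
\item For $E\cdot F$, take the disjoint union of $\cA_E$ and $\cA_F$ and, for every transition $p\xrightarrow{a} r$ of $\cA_F$ with $p\in Q_\ini^F$, add a transition $q\xrightarrow{a} r$ for every $q\in Q_\fin^E$. The initial states are $Q_\ini^E$, together with $Q_\ini^F$ when $\varepsilon\in\lang{E}$. The accepting states are $Q_\fin^F$, together with $Q_\fin^E$ when $\varepsilon\in\lang{F}$.
\item For $E^*$, add a fresh state that is both initial and accepting, so that $\varepsilon$ is accepted. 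Then, for every accepting state $q$ and every initial transition $p\xrightarrow{a}r$ with $p\in Q_\ini$, add $q\xrightarrow{a}r$.
\end{itemize}
Since the model has no $\varepsilon$-transitions, these explicit copies of initial transitions are needed. Correctness in each case is a routine decomposition of runs at the points where a new transition is used.

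\textbf{From automata to expressions.} For $\lang{\NFA[\Sigma]} \subseteq \lang{\RE[\Sigma]}$ we use McNaughton--Yamada/Kleene dynamic programming. Enumerate $Q=\{q_1,\dots,q_m\}$. Let $L^{(k)}_{ij}$ be the set of words $w$ that admit a run from $q_i$ to $q_j$ whose intermediate states, meaning the targets of all transitions except the last, lie in $\{q_1,\dots,q_k\}$. We build expressions $E^{(k)}_{ij}$ with $\lang{E^{(k)}_{ij}} = L^{(k)}_{ij}$, by induction on $k$.

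For the base case, $E^{(0)}_{ij}$ is the sum of all letters $a$ with $q_i\xrightarrow{a}q_j$, plus $\epsilon$ when $i=j$; an empty sum is $\emptyset$. The inductive step is
\[E^{(k)}_{ij} = E^{(k-1)}_{ij} + E^{(k-1)}_{ik}\cdot (E^{(k-1)}_{kk})^*\cdot E^{(k-1)}_{kj}.\]
It is justified by splitting a run at each of its visits to $q_k$. Finally,
\[\lang{\cA}=\bigcup_{q_i\in Q_\ini,\,q_j\in Q_\fin}\lang{E^{(m)}_{ij}},\]
which is a finite sum of expressions.

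The main care point is the inductive step of this second direction. One must check that the split at visits to $q_k$ is exact: the prefix up to the first visit to $q_k$, the segments between consecutive visits, and the suffix after the last visit use only intermediate states in $\{q_1,\dots,q_{k-1}\}$. One must also handle the convention for $\varepsilon$ when $i=j$ correctly. Everything else is bookkeeping.
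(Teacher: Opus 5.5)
Your proposal is correct. The $\varepsilon$-free constructions for $E\cdot F$ and $E^*$ fit the paper's NFA model, which allows sets of initial states but no $\varepsilon$-moves. These are copying initial transitions, adding $Q^F_\ini$ to the initial states or $Q^E_\fin$ to the accepting states exactly when $\varepsilon$ lies in the respective language, and using a fresh initial-and-accepting state for the star. The McNaughton--Yamada recurrence is also exact under your convention that intermediate states are the targets of all transitions but the last.

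The paper gives no proof of its own: it cites the classical result of Kleene. It also only uses the direction from automata to expressions (to obtain $E_\cB$ in the proof of \cref{lem:nra-reduce-to-expressions}), which is your second half.
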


    \namedparagraph{Syntax of MSO.}
        \label{def:mso-logic}
        The syntax of \emph{Monadic Second Order Logic over Word Structures}~(\makebox{$\MSO$}) features first-order variables $x, y, \ldots$, second-order variables $X, Y, \ldots$ and is given by the grammar
        {{\newcommand{\entry}[1]{#1}
    \newcommand{\entryb}[1]{\hspace{0.21cm}\clap{$#1$}\hspace{0.21cm}}
    \begin{align*}
        \varphi, \psi \coloncoloneqqq {}
        &\entry{\exists{x} \varphi} \mid \entry{\exists{X} \varphi} \mid
        \entry{\varphi \land \psi} \mid \entry{\varphi \lor \psi} \mid \entry{\neg \varphi} \mid
        \entry{x < y} \mid \entry{X(x)} \mid \entry{a(x)}\,,
    \end{align*}}
        where $a \in \Sigma$.
        We write $\varphi \in \MSO$ when $\varphi$ is generated by the above grammar.
        A variable in $\varphi$ is free, if it is not bound by any quantifier.
        We write $\FreeVarOf{\varphi}$ (or $\FreeFOVarOf{\varphi}$) for the set of free variables (of first-order variables, respectively) of $\varphi$.
        A formula is a \emph{sentence}, if it has no free variables.

    \namedparagraph{Semantics of MSO.}
        The truth value of $\MSO$ formulas depends on the interpretation $\fI = (\model{w}, \nu)$ consisting of the word structure $\bW = \model{w}$ and valuation $\nu \colon \FreeVarOf{\varphi} \to \PosOf{w} \cup \powerset{\PosOf{w}}$.
        Boolean connectives and atomic formulas ($x<y, X(y), a(x)$) are interpreted in a standard way over $\bW$.
        In particular, $\fI \models a(x)$ iff $\nu(x) \in a^\bW$.
        Quantifiers range over $\PosOf{w}$:
        \begin{itemize}
            \item \makebox[16mm][l]{$\fI \models \exists{x} \varphi$} iff there is a position $p \in \PosOf{w}$ such that $\model{w}, \nu[x \mapsto p] \models \varphi$,
            \item \makebox[16mm][l]{$\fI \models \exists{X} \varphi$} iff there is a set $P \subseteq \PosOf{w}$ such that $\model{w}, \nu[X \mapsto P] \models \varphi$.
        \end{itemize}
        For a sentence $\varphi \in \MSO[\Sigma]$, we write $\model{w} \models \varphi$ whenever $\model{w}, \nu_0 \models \varphi$, where $\nu_0$ is the empty valuation.
        The language of $\varphi$ is defined as $\lang{\varphi} = \set{w \in \Sigma^* \suchthat \model{w}\models \varphi }$.

    \begin{theorem}[B\"uchi, Elgot, Trakhtenbrot \cite{Buchi-mso-automata,elgot1961decision,zbMATH03186871}]
        \label{thm:buchi-elgot-trakhtenbrot}
        For every finite $\Sigma$, the languages definable by $\MSO[\Sigma]$ are exactly the regular languages over $\Sigma$.
    \end{theorem}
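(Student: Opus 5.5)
The plan is to prove the two inclusions separately, following the classical route.

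\emph{From automata to logic.} Take an \NFA $\cA = (Q, \Sigma, Q_\ini, Q_\fin, \delta)$ with $Q = \set{q_1, \ldots, q_m}$. Write a sentence
\[
\exists{X_1} \cdots \exists{X_m}.\; \varphi_{\text{part}} \land \varphi_{\text{ini}} \land \varphi_{\text{trans}} \land \varphi_{\text{fin}}\,.
\]
Here $X_i$ is intended to hold the positions after which the run is in state $q_i$. The conjuncts say the following.
\begin{itemize}
\item $\varphi_{\text{part}}$: the sets $X_1, \ldots, X_m$ partition $\PosOf{w}$.
\item $\varphi_{\text{ini}}$: at the first position $x$ (no $y<x$), there are $q \in Q_\ini$ and $q_i$ with $q \xrightarrow{a} q_i \in \delta$, $a(x)$ and $X_i(x)$.
\item $\varphi_{\text{trans}}$: for every pair of consecutive positions $x,y$ (i.e.\ $x<y$ with no $z$ strictly between), there is a transition $q_i \xrightarrow{a} q_j$ with $X_i(x)$, $a(y)$ and $X_j(y)$.
\item $\varphi_{\text{fin}}$: the last position lies in some $X_i$ with $q_i \in Q_\fin$.
\end{itemize}
The empty word is handled by adding the disjunct $\neg\exists{x}(x<x \lor \neg x<x)$ exactly when $Q_\ini \cap Q_\fin \neq \emptyset$. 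Correctness is a direct correspondence between accepting runs and valuations of $X_1, \ldots, X_m$.

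\emph{From logic to automata.} Go through formulas with free variables. Encode a word $w$ together with a valuation $\nu$ of free variables $\cV$ as a word over the extended alphabet $\Sigma \times \set{0,1}^{\cV}$. First-order variables are treated as singleton second-order variables, and well-formedness (each first-order track has exactly one $1$) is a regular condition. Then show by induction on $\varphi$ that the set of encodings of models of $\varphi$ is recognised by an \NFA.
\begin{itemize}
\item \emph{Atomic formulas.} $x<y$, $X(x)$ and $a(x)$ are handled by small explicit automata.
\item \emph{Disjunction.} Use disjoint union of automata.
\item \emph{Conjunction.} Use the product construction, after first aligning the variable sets by cylindrification, i.e.\ ignoring extra tracks.
\item \emph{Existential quantification.} Project away the track of the quantified variable; nondeterminism guesses it. For a first-order variable, first intersect with the singleton condition.
\item \emph{Negation.} Use the subset construction to determinise, complement the accepting set, and intersect with the well-formedness language.
\end{itemize}
For a sentence the alphabet is just $\Sigma$, which gives a regular language. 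One could also invoke \cref{thm:kleene-regular} for closure properties, but direct automata constructions suffice.

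\emph{Main obstacle.} The main difficulty is bookkeeping rather than ideas. The encodings must be consistent across subformulas with different free-variable sets, the projection step must not accidentally admit non-singleton first-order tracks, and negation must be complemented relative to well-formed encodings only. I will state one invariant lemma: for every $\varphi$ with $\FreeVarOf{\varphi} \subseteq \cV$, the language $\set{ w\otimes\nu \suchthat \model{w},\nu\models\varphi }$ over $\Sigma\times\set{0,1}^{\cV}$ is regular. I will prove this lemma by induction; the theorem is then its instance $\cV=\emptyset$.
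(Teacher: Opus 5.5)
The paper does not prove this theorem itself; it quotes it as the classical result of B\"uchi, Elgot and Trakhtenbrot. Your proposal is the standard argument from those sources and is correct: states become set variables in one direction, and in the other you run an induction over the extended alphabet $\Sigma\times\{0,1\}^{\cV}$ using union, product, projection, and determinise-then-complement.

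One detail to watch is the empty word. Write $\varphi_{\text{ini}}$ or $\varphi_{\text{fin}}$ existentially (``there is a first position $x$ such that \ldots''). A universal reading (``if $x$ is the first position then \ldots'') is vacuously true on $\varepsilon$, so your main conjunction would accept $\varepsilon$ even when $Q_\ini\cap Q_\fin=\emptyset$.
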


    \namedparagraph{Syntactic sugar.}
    We use $\forall{\xi} \varphi$ as a shortcut for $\neg \exists{\xi} (\neg \varphi)$; position equality is defined as $x=y \equiv \neg(x < y \lor y < x)$; Boolean constants can be defined as $\top \equiv \exists{x} x = x \lor \neg \exists{x} x = x$ and $\bot \equiv \neg \top$.
    Additionally, we add easily MSO-definable relations like subset relation $S' \subseteq S$.

    \subsection{Words over infinite alphabets}
    While the definitions of a word and a language do not require \emph{finiteness of the alphabet}, all the basic language-defining formalisms we have recalled in \cref{subsec:formalisms-for-regular-languages} critically depend on that assumption.
    For instance, a finite automaton for the language $\Sigma$ consisting of one-letter words has to contain a transition for every $a \in \Sigma$, and making $\Sigma$ infinite would cause the transition relation to become infinite as well.
    One solution to this problem is to add more structure to the alphabet and allow the model to make less precise queries about the letters.
    This gives rise to the realm of \emph{atoms} and \emph{register automata}.
    \namedparagraph{Atoms.} We call atoms any relational structure $\A = (U, R_1, \ldots, R_\ell)$ with a countably infinite universe $U$ and a finite number of relational symbols in its signature.
    We assume that that one of $R_i$ is the equality of universe elements.
    We deliberately confuse $\A$ and its universe $U$ and write $\alpha \in \A$.
    \namedparagraph{Data words.} Data words are words over a \emph{two-track} alphabet $\Sigma \times \A$, consisting of a finite label from $\Sigma$ and a data value from $\A$.
    It is often convenient to decompose multi-track words into separate tracks using \emph{convolution}.
    For alphabets $A_1, \dots, A_k$, the convolution of words $w_1 \in A_1^*, \dots, w_k \in A_k^*$ of equal length is the word $\Conv{w_1, \dots, w_k} \in (A_1 \times \cdots \times A_k)^*$ defined by $\Conv{w_1, \dots, w_k}[i] \coloneqq (w_1[i], \dots, w_k[i])$.
    Thus, a data word can be viewed as the convolution of a label word over $\Sigma$ and a data sequence over $\A$.

    \begin{example} Let $\A = (\N, =)$ and $\Sigma = \set{a, b}$.
    A two-track word $\Conv{w, \varpi}$ over $\Sigma \times \A$ is a convolution of two tracks $w \in \Sigma^*$ and $\varpi \in \A^*$:
    \begin{align*}
        &\begin{aligned}
             w &= \text{abaab} \\[-1mm]
             \varpi &= 10, 42, 5, 97, 42
        \end{aligned}
        &
        w &=
        \begin{bmatrix}
            \text{a} \\
            10
        \end{bmatrix}
        \,
        \begin{bmatrix}
            \text{b} \\
            42
        \end{bmatrix}
        \,
        \begin{bmatrix}
            \text{a} \\
            \,5\,
        \end{bmatrix}
        \,
        \begin{bmatrix}
            \text{a} \\
            97
        \end{bmatrix}
        \,
        \begin{bmatrix}
            \text{b} \\
            42
        \end{bmatrix}
    \end{align*}
    Above, $\varpi[2] = \varpi[5] = 42$ and $w[3]$ is $a$.
    \end{example}

    \namedparagraph{Quantifier-free formulas and regions over $\bm{\A}$.}
    Let $\cX$ be a finite set of variables.
    The set of \emph{atomic formulas} over $\A$ using variables $\cX$ consists of all expressions $R(x_1, \dots, x_r)$ where $R$ is a relation symbol from the signature of $\A$ and $x_i \in \cX$.
    A \emph{quantifier-free formula} over $\A$ using variables $\cX$ is a Boolean combination of atomic formulas:
    \begin{align*}
        \varPhi, \Psi \coloncoloneqqq \top \mid \neg \varPhi \mid \varPhi \land \Psi \mid R(x_1, \dots, x_r)\,.
    \end{align*}
    The formulas are interpreted in a standard way over a pair $(\A, \nu)$, where $\nu\colon \cX \to \A$.

    For a word $w = a_1 \cdots a_k \in \A^k$, the \emph{quantifier-free type} of $w$, denoted $\type{w}$, is the set of all literals (atomic formulas or their negations) using variables $\cX_k \coloneqq \set{x_1, \ldots, x_k}$ that are satisfied when variables are interpreted as data values in $w$. Formally, $\type{w} \coloneqq \set{ \phi \in \text{Lit}(\cX_k) \suchthat \A, \nu \models \phi }$ where $\nu(x_i) = a_i$.
    A \emph{quantifier-free region} of a word $w \in \A^*$, denoted $\qfregion_\A{w}$, is the set of all words that have the same quantifier-free type as $w$.
    Intuitively, words in $\qfregion_\A{w}$ are not distinguishable from $w$ by quantifier free formulas.
    We lift $\qfregion_\A{}$ to data words in a natural way.
    It is easy to see that $\A^k$ is a finite union of quantifier-free regions for any $k$.

    \namedparagraph{Data word structures.}
        \label{def:data-word-structure}
        Fix atoms $\A=(U,R_1^\A,\ldots,R_\ell^\A)$ and let $\Conv{w,\varpi}\in(\Sigma\times U)^*$ with
        $\length{w}=\length{\varpi}$.
        The \emph{data word structure} is the expansion of $\model{w} = (\PosOf{w},<,(P_a)_{a\in\Sigma})$
        \[
            \modelConv{w,\varpi} \coloneqq (\PosOf{w},<,(\sigma)_{\sigma\in\Sigma},(R_i)_{i\in\rangeOne{\ell}})
        \]
        where, if $R_i$ has arity $m_i$, then for all $p_1,\ldots,p_{m_i}\in\PosOf{w}$,
        \[
            R_i(p_1,\ldots,p_{m_i})
            \quad\text{iff}\quad
            R_i^{\A}(\varpi[p_1],\ldots,\varpi[p_{m_i}]).
        \]

    \subsection{Register automata---a formalism for data word languages}
    \label{subsec:a-formalism-for-data-word-languages}
    One of the common formalisms for defining languages of data words is a nondeterministic register automaton (\NRA).
    Fix atoms $\A = (U, R_1, \ldots, R_\ell)$ and a finite $\cR$ whose elements we call \emph{registers}.
    Let $k \coloneqq \size{\cR}$ and let $\A_\emptyReg \coloneqq \A \disjointUnion \set{\emptyRegElement}$, where $\emptyReg$ is a fresh element modelling an empty register, and the signature of $\A_\emptyReg$ is extended with a unary symbol $\emptyReg$ interpreted as $\set{\emptyRegElement}$.
    A register valuation is a function $\mu\colon \cR \to \A_\emptyReg$.

    \namedparagraph{Register constraints.}
    Let $\cV_\cR \coloneqq \bigcup_{r \in \cR} \set[\big]{\valpre{r}, \valpost{r}}$ be the variables representing register values before and after a transition, and let $\cur$ represent the current input symbol.
    A \emph{register constraint} $\varPhi$ is a quantifier-free formula over $\A_\emptyReg$ with the variable set $\set{\cur} \cup \cV_\cR$.
    We denote the set of all register constraints for $\cR$ as $\constr_\A{\cR}$.
    To evaluate a constraint, consider two valuations $\mu, \mu' \colon \cR \to \A \cup \{\emptyReg\}$ and an input symbol $\alpha \in \A$.
    These induce a combined valuation $\nu$ on the variables $\set{\cur} \cup \cV_\cR$ defined by $\nu(\valpre{r}) = \mu(r)$, $\nu(\valpost{r}) = \mu'(r)$, and $\nu(\cur) = \alpha$.

    \namedparagraph{Register automata.}
    A \emph{nondeterministic register automaton with guessing} (\NRA) is a tuple $\cA = (Q, \Sigma, \A, \cR, Q_\ini, Q_\fin, \delta)$, comprising a finite set of states $Q$, a finite alphabet $\Sigma$, initial states $Q_\ini \subset Q$, accepting states $Q_\fin \subseteq Q$ and a set of \emph{transition rules} $\delta\colon Q \times \Sigma \times \constr_\A{\cR} \times Q$.
    A transition rule $(p, \sigma, \varPhi, q) \in \delta$ is written using double arrow notation $p \xRightarrow{\sigma, \varPhi} q$.
    The semantics is defined by an infinite transition system $\sem{\cA} \coloneqq (\conf{\cA}, \Delta_\cA)$ over configurations $\conf{\cA} \coloneqq Q \times (\cR \to \A_\emptyReg)$, where $\Delta \subseteq \conf{\cA} \times \Sigma \times \A \times \conf{\cA}$.
    A transition $(p, \mu) \xrightarrow{\sigma, \alpha} (q, \mu') \in \Delta_\cA$ exists iff there is a rule $p \xRightarrow{\sigma, \varphi} q$ in $\delta$ such that $\A, \nu(\mu, \alpha, \mu') \models \varPhi$.
    A configuration $(q, \mu)$ is initial if $q \in Q_\ini$ and $\mu(r) = \emptyReg$ for every $r \in \cR$.
    A configuration $(q, \mu)$ is \emph{accepting} when $q \in Q_\fin$.
    The notion of a run is analogous as in \NFAs; it is accepting if it starts in some initial configuration and ends in some accepting one.
    The language $\lang{\cA}$ consists of data words labelling runs of $\sem{\cA}$.
    Every run $\pi$ of $\sem{\cA}$ has an associated sequence $t_1\cdots t_n \in \delta^*$ of transition rules.
    We denote the family of $k$-register NRA over $\Sigma \times \A$ by $\NRA_k[\Sigma, \A]$.

    \namedparagraph{Variants of nondeterminism in register automata.}
    Fix $\cA \in \NRA_k$ and its run $\pi = (q_0, \mu_0) \xrightarrow{\sigma_1,\alpha_1} (q_1, \mu_1) \rightarrow\cdots\rightarrow (q_{n-1}, \mu_{n-1}) \xrightarrow{\sigma_n, \alpha_n} (q_n, \mu_{n})$ over $w \in (A \times \Sigma)^n$.
    We say that $\cA$ is \emph{guessing} the value of $j$th register in valuation $\mu_a$, if $\xi = \mu_{a}(r_{j})$ is different from all previous register values $\mu_{a-1}(\cR)$ and from the data value $\alpha_a$.
    The guess can be either \emph{weak} or \emph{strong}, depending on the run from configuration $(q_a, \mu_a)$th onwards.
    We say that the \emph{live interval} of $\xi$ is $\range{a}{b}$ whenever $\xi \in \mu_m(\cR)$ for all $m \in \range{a}{b}$ and $\xi \notin \mu_{b+1}(\cR)$.
    The guess of the value of $j$th register in valuation $\mu_a$ is strong whenever $\alpha_m \neq \xi$ for every $m$ in the live interval $\range{i}{j}$.
    
    We distinguish two restricted subclasses of \NRA:
    \emph{weakly-guessing nondeterministic register automata} (\NRAweak) and \emph{deterministic register automata without guessing} (\NRAnoguess), which arise as syntactic limitations on the transition constraints $\varPhi$ and the determinism of $\delta$.
    We say that $\cA \in \NRA[\Sigma, \A]$ is weakly guessing, if $\cA$ is not guessing strongly in its accepting runs.
    It is without guessing, if it is never guessing a value in any of its runs.
    
    \begin{definition}[Elimination of strong guessing.]
        \label{def:elimination-of-strong-guessing}
        We say that atoms $\A$ have elimination of strong guessing, whenever for every $\cA \in \NRA[\Sigma, \A]$ there exists a weakly guessing $\cB \in \NRA[\Sigma, \A]$ that recognises the same language.
    \end{definition}

    \section{New formalisms equivalent to register automata}
    \label{sec:new-formalisms-equivalent-to-nra}
    \subsection{Data regular expressions}
\label{sec:contribution}

    Fix atoms $\A$ and a finite $\Sigma$.
    We introduce a notion of \emph{data regular expressions} over $\Sigma \times \A$, which are generated by the following grammar
    \begin{align*}
        E, F \coloncoloneqqq \emptyset \mid [w] \mid E + F \mid E \cdot_k F \mid E^{k,*}\,,
    \end{align*}
    where $k \in \N$ and $w \in (\Sigma \times \A)^*$.
    Let $\DRE[\Sigma, \A]$ be the set of all such data expressions.
    The language $\lang{E}$ of $E \in \DRE[\Sigma, \A]$ is defined as follows:
    \begin{align*}
        \lang{\emptyset} &= \emptyset &
        \lang{[w]} &= \qfregion_\A{w} \\
        \lang{E + F} &= \lang{E} \cup \lang{F} &
        \lang{E \cdot_k F} &= \lang{E} \cdot_k \lang{F} &
        \lang{E^{k,*}} &= \textstyle\bigcup_{n \geq 0} \lang{E}^{k,n}
    \end{align*}
    where $\cdot_k$ is \emph{$k$-contracting concatenation} defined as $K\cdot_k L \coloneqq \set{uw\suchthat uv \in K, v^Rw \in L, \length{v} = k}$ for data word languages $K, L$, and $K^{n,k}$ for $n > 0$ denotes the $n$-fold $k$-contracting concatenation of $K$ with itself, whereas for $n=0$ we define $K^{0,k} \coloneqq \set{vv^R \suchthat \length{v} = k}$, the set of palindromes of length $2k$.
\smallskip
\begin{theorem}
    \label{thm:expressions-NRA-equivalence}
    $\lang{\NRA[\Sigma, \A]} = \lang{\DRE[\Sigma, \A]}$\marginpar{\hfill$\star$}
    for any finite $\Sigma$ and atoms $\A$.
\end{theorem}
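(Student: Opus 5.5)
We prove the two inclusions separately. Both directions are organised around one idea: a $k$-register configuration can be pictured as a word of length $k$ that is written at the end of one factor and read back, reversed, at the start of the next. The $k$-contracting concatenation $K\cdot_k L$ implements exactly this. The suffix $v$ of a word in $K$, of length $k$, records the register contents at the cut. The prefix $v^R$ of a word in $L$ re-reads them, and both are then erased. The palindromes $vv^R$ in $K^{k,0}$ act as the identity for $\cdot_k$, in the same way as $\varepsilon$ does for ordinary concatenation.

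\textbf{From expressions to automata.} We proceed by structural induction on $E$, constructing an \NRA for the language $\lang{E}$. To make the induction go through, we prove a stronger statement: for every $k$ there are automata for the ``$k$-bordered'' variants of $\lang{E}$, i.e. automata that read $u$ while guessing an invisible suffix or prefix $v$ of length $k$. For a region $[w]$ of fixed length $n$, an automaton with $n$ registers suffices. It stores every letter it reads and checks the quantifier-free type of the whole word at the last step; this is possible because $\type{w}$ is a finite conjunction of literals. Union is handled by disjoint union of automata. For $K\cdot_k L$, the automaton first simulates the automaton for $K$ on $u$. Instead of reading the suffix $v$, it \emph{guesses} $v$ into $k$ fresh registers, and this is where guessing is essential. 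It then checks the $K$-constraints that would apply while reading $v$, and afterwards simulates the automaton for $L$ as if it had first read $v^R$ from those registers. The checks on $v$ are symbolic: because constraints are quantifier-free, reading a stored value amounts to substituting a register variable for $\cur$. This is the reason we carry registers holding $v$ throughout. The star $E^{k,*}$ is handled in the same way by looping, and the $n=0$ case is an automaton that accepts the empty word. (For $k>0$ we have $vv^R$ contracted with nothing, so we should double-check the intended semantics of $K^{k,0}$.)

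\textbf{From automata to expressions.} Given $\cA\in\NRA_k$ with registers $r_1,\dots,r_k$, we use a McNaughton--Yamada / state-elimination construction. The languages built along the way encode register contents as explicit data. For states $p,q$ and a set $S$ of intermediate states, let $L^S_{p,q}$ be the set of words $v\,u\,v'^R$ such that $|v|=|v'|=k$ and there is a run from $(p,v)$ to $(q,v')$ over $u$ that passes only through states in $S$. Here $v$ and $v'$ are read as register valuations, with an extra label in $\Sigma$ marking the border positions and encoding $\emptyReg$. Such bordering labels can be enlarged and removed at the end, because the final expression only needs to produce $v\,u\,v'^R$ with $v$ consisting of nil markers; this requires $\emptyReg$ to be expressible, which we do by choosing an extra label letter in $\Sigma$. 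Removal itself is achieved by a final $\cdot_k$ against a fixed region.

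\emph{Base case.} For a single transition, the words $v\,(\sigma,\alpha)\,v'^R$ with $\nu(v,\alpha,v')\models\varPhi$ form a finite union of regions of length $2k+1$, namely the quantifier-free types that satisfy $\varPhi$.

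\emph{Inductive step.} Composition of runs corresponds exactly to $\cdot_k$ (matching $v'^R$ against $v'$), so the standard recurrence
\[
L^{S\cup\{s\}}_{p,q}=L^S_{p,q}+L^S_{p,s}\cdot_k (L^S_{s,s})^{k,*}\cdot_k L^S_{s,q}
\]
carries over. The empty-run case is given by the palindromes.

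\textbf{Main obstacle.} The hardest step is making the encodings consistent. The nil register value lies outside $\A$, so $\emptyReg$ must be represented by labels. Contraction drops the middle copy, but the bordering conventions must stay aligned across regions, and the $K^{k,0}$ palindromes must not admit spurious valuations. The region base case and the fact that $\cdot_k$ performs existential projection of the shared border are the conceptual core; the remaining work is bookkeeping.
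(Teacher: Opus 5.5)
Your expressions-to-automata direction is the paper's construction. You guess the overlap $v$ into $k$ fresh registers, run the automaton for $K$ on $v$ with $\cur$ replaced by the stored values, and then replay $v^R$ into the automaton for $L$ before continuing on real input. The paper realises this ``symbolic'' simulation with a bounded burst of $\varepsilon$-transitions and an $\varepsilon$-elimination lemma that keeps the intermediate register valuations in extra guessed registers. You need the same device, because a single quantifier-free constraint cannot quantify over intermediate valuations. Your ``$k$-bordered'' strengthening of the induction hypothesis is unnecessary, since the gadget works for arbitrary automata for $K$ and $L$.

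For the converse you take a different route: you redo McNaughton--Yamada directly on the encoded languages $L^S_{p,q}$. The paper instead applies Kleene's theorem as a black box to the NFA over the alphabet $\delta$ of transition rules. It then translates the resulting regular expression compositionally ($t\mapsto\Block(t)$, a union of regions of length $2k+1$; $\cdot\mapsto\cdot_k$; ${}^*\mapsto{}^{k,*}$; $\epsilon\mapsto\Id(k)$), and strips the two borders at the end with $G\cdot_k(\,\cdot\,)\cdot_k G$. Both arguments rest on the same facts: on languages of words of length at least $2k$, $\cdot_k$ is associative, glues encoded runs exactly along the shared valuation, and has the length-$2k$ palindromes as a two-sided identity. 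The paper's version is more modular; yours is self-contained but has to re-verify the recurrence by hand.

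There is, however, a concrete error in your star case. By definition $K^{k,0}=\{vv^R : |v|=k\}$, as you state yourself in your opening paragraph. So for $k>0$, an automaton that accepts only $\varepsilon$ for the zero-iteration part gives the wrong language for $E^{k,*}$. It adds $\varepsilon$, which need not belong to $\lang{E^{k,*}}$, and it misses the palindromes of length $2k$, which always do.

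The repair is the one the paper uses. A word of length $2k$ is a palindrome iff its label track is a palindrome and its data satisfy $x_i=x_{2k+1-i}$, and this is determined by its region. Hence the palindromes form a finite union of regions, and the base case yields an automaton for them. Take its union with the looping automaton for $\bigcup_{n\ge1}K^{k,n}$. When looping, reset the registers of the automaton for $K$ to $\emptyReg$ before replaying $v^R$; a constraint can do this.

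A smaller point concerns $\emptyReg$. All regions in an expression over $\Sigma\times\A$ carry labels from $\Sigma$, and no operation erases labels, so you cannot use auxiliary marker letters even temporarily. Encoding nil by a border label therefore needs two distinct letters of $\Sigma$ and breaks down when $|\Sigma|=1$. The paper instead uses a single padding letter $\pad\in\Sigma$ and drops the $\emptyReg$-atoms when forming blocks.

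The clean way to justify that step, and to repair your encoding, is to normalise $\cA$ first. Keep in the control state which registers are empty, so that every rule determines which registers are empty before and after it. Then every atom involving an empty register has a fixed truth value, empty registers may carry arbitrary junk data in the encoding, and no nil marker is needed.
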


\subsection{Scoped MSO}
\namedparagraph{Syntax of Scoped MSO.}
    Fix finite $\Sigma$ and atoms $\A = (U, R_1, \ldots, R_\ell)$.
    The syntax of Scoped MSO uses first-order variables $x, y, \ldots$, second-order variables $X, Y, \ldots$, and is generated by
    \newcommand{\entry}[1]{#1}
    \begin{align*}
        \varphi, \psi \coloncoloneqqq {}
        &\entry{\exists{x} \varphi} \mid \entry{\exists{X} \varphi} \mid \entry{\segment{X} \varphi} \mid
        \entry{\varphi \land \psi} \mid \entry{\varphi \lor \psi} \mid \entry{\neg \varphi} \mid
        \entry{x < y} \mid \entry{X(x)} \mid \entry{a(x)} \mid \entry{R(\v x)}\,,
    \end{align*}
    for all letters $a \in \Sigma$, relations $R \in \set{R_1, \ldots, R_\ell}$ and where $\v{x} = (x_1, \ldots, x_r)$ matches the arity of $R$.
    There are two additions compared to the grammar of MSO (as per \cref{def:mso-logic}): $\segmentSymbol_X$ is the \emph{scope modality} and $R(\v x)$ is a \emph{data atomic formula}.

A first-order variable in $\varphi$ is \emph{top-level} if its binding existential quantifier does not occur within the scope of a negation; otherwise, it is \emph{nested}.
(Note that variables in $\forall x \varphi \equiv \neg \exists x \neg \varphi$ are inherently nested).

\begin{condition}[Well-formedness]
    \label{cond:SMSO-condition}
    A formula $\varphi$ is \emph{well-formed} if every data atomic formula $R(\v x)$ occurring in $\varphi$ contains at most one nested variable.
\end{condition}
\begin{example}
    Fix $\A = (\N, \sim)$, where $\sim$ is the equality relation on $\N$.
    The formula $\exists{x} \neg [\exists{y}. x \sim y \land x \neq y]$ is well-formed because $x$ is top-level.
    However, $\neg [\exists{x} \exists{y}. x \sim y \land x \neq y]$ is not, as the outer negation makes both $x$ and $y$ nested.
    The modality $\segmentSymbol_X$ preserves the top-level status of variables; thus, $\exists{X} \segment{X} \exists{x} \neg [\exists{y}. x \sim y \land x \neq y]$ remains well-formed.
\end{example}
We denote the set of all well-formed formulas as $\SMSO$.

\namedparagraph{Subscopes.}
    Let $I = \range{a}{b}$ and let $X \subseteq \N$.
    We view $X$ as a set of \emph{split points}, and we define the set $\Subscopes{I, X}$ of the contiguous
    sub-ranges obtained by cutting $I$ at every position in $X$.
    Formally, let $\Subscopes{I, X}$ be the set of all $\range{a'}{b'}$ such that $a', b'+1$ are two consecutive elements of the sorted set
    $(X \cap I) \cup \{a,\, b+1\}$.
\begin{example}
    Let $I = \range{a}{b}= \range{10}{20}$ and $X = \set{5, 7, 11, 13, 17, 23}$.
    The set of split-points $(X \cap I) \cup \{a, b+1\}$ is $\set{10, 11, 13, 17, 21}$, and therefore
    \begin{align*}
        \Subscopes{I, X} =
        \{\underset{\mathclap{\stackrel{\kern1pt\rotatebox{90}{{\phantom{$\in$}\llap{$=$}}}}{\phantom{X}a\phantom{X}}}}{10}\},
        \{\underset{\mathclap{\stackrel{\rotatebox{90}{{$\ni$}}}{X}}}{11}, 12\},
        \{\underset{\mathclap{\stackrel{\rotatebox{90}{{$\ni$}}}{X}}}{13}, 14, 15, 16\},
        \{\underset{\mathclap{\stackrel{\rotatebox{90}{{$\ni$}}}{X}}}{17}, 18, 19, \underset{\mathclap{\stackrel{\kern1pt\rotatebox{90}{{\phantom{$\in$}\llap{$=$}}}}{b+1}}}{20}\}\,.
    \end{align*}
\end{example}

\namedparagraph{Semantics of Scoped MSO.}
    We interpret $\SMSO$ formulas over a tuple $\fI = (\modelConv{w, \varpi}, \nu, I)$ consisting of the data word structure $\bW = \modelConv{w, \varpi}$, valuation $\nu \colon \FreeVarOf{\varphi} \to \PosOf{w} \disjointUnion \powerset{\PosOf{w}}$ and a discrete range $I = \range{i}{j} \subseteq \PosOf{w}$ called a \emph{scope}.
    Boolean connectives and atomic formulas ($x<y, X(y), a(x), R(\v x)$) are interpreted in a standard way over $(\bW, \nu)$, ignoring $I$.
    In particular, $\fI \models R(\v x)$ iff $\nu(\v x) \in R^{\bW}$.
    Quantifiers range strictly over the current scope $I$:
    \begin{itemize}
        \item \makebox[16mm][l]{$\fI \models \exists{x} \varphi$} iff \kern2pt there is a position $p \in I$ such that $\modelConv{w, \varpi}, \nu[x \gets p], I \models \varphi$,
        \item \makebox[16mm][l]{$\fI \models \exists{X} \varphi$} iff \kern2pt there is a set of positions $P \subseteq I$ such that $\modelConv{w, \varpi}, \nu[X \gets P], I \models \varphi$.
    \end{itemize}
    Finally, $\segmentSymbol_X$ enforces the formula on all induced sub-scopes:
    \begin{itemize}
        \item \makebox[16mm][l]{$\fI \models \segment{X} \varphi$} iff \kern2pt $\modelConv{w, \varpi}, \nu, I' \models \varphi$ for every $I' \in \Subscopes{I, \nu(X)}$.
    \end{itemize}
    When the scope interval is the entire $\PosOf{w}$, we drop it in the interpretation tuple and write $\modelConv{w, \varpi}, \nu \models \varphi$.
    When $\varphi$ is a sentence, we drop the valuation and write $\modelConv{w, \varpi} \models \varphi$, and
    we define the \emph{language} of $\varphi$ as $\lang{\varphi} = \set{ \Conv{w, \varpi} \suchthat \modelConv{w, \varpi} \models \varphi }$.
\vspace{1mm}
\begin{theorem}
    \label{thm:logic-NRA-equivalence} $\lang{\NRA[\Sigma, \A]} = \lang{\SMSO}$ for \marginpar{$\star$}any finite $\Sigma$ and atoms $\A$ that have elimination of strong guessing (as per \cref{def:elimination-of-strong-guessing}).
\end{theorem}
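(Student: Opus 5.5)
We prove the two inclusions separately. Elimination of strong guessing is needed only for $\lang{\NRA[\Sigma,\A]} \subseteq \lang{\SMSO}$.

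\proofDir{From automata to logic.}
Fix $\cA \in \NRA_k[\Sigma,\A]$. By \cref{def:elimination-of-strong-guessing} we may assume $\cA$ is weakly guessing. In an accepting run of such an automaton, every register value stored at some step either is the current input datum or equals an input datum read later, during its live interval. The idea is to name each stored value by a position carrying it. We existentially quantify, at top level, second-order variables encoding:
\begin{enumerate*}[label=(\roman*)]
    \item the sequence of transition rules (finitely many sets $T_t$, one per rule);
    \item for each register $r$, the set $X_r$ of positions where the live interval of a new value starts, i.e.\ where $r$ is (re)loaded or guessed;
    \item for each register $r$, the set $W_r$ of witness positions, namely the position inside each live interval where the stored value is read.
\end{enumerate*}
Within a subscope of $\Subscopes{\PosOf{w}, X_r}$, that is, a single live interval of $r$, the formula $\segment{X_r}\exists{x}(W_r(x)\land\ldots)$ picks a witness. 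Local consistency of the rules can then be written as a universal check over positions $y$, together with the constraint $\varPhi$ of the rule at $y$ evaluated with $\valpre{r},\valpost{r}$ replaced by the witnesses of the appropriate intervals.

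The hard point is that this check sits under a negation, so $y$ is nested. The witnesses must therefore be top-level. For this we do not use a universal quantifier over positions. Instead we use nested segment modalities: from $\segment{X}$ with $X$ the set of all positions we descend to singleton scopes, and inside a singleton we obtain top-level access to $y$ via $\exists{y}$. The witnesses of the intervals containing $y$ are then fetched by existential quantifiers in the enclosing segment scopes. Wrapping positive existentials inside $\segmentSymbol$ keeps every variable top-level, so every atom contains at most one nested variable, and \cref{cond:SMSO-condition} holds. I expect the main obstacle to be organising this so that all $k$ witnesses of the registers live at $y$ are simultaneously top-level. Registers' intervals overlap in an interleaved way, so I would nest $k$ segment modalities, one per register, with the transition check innermost.

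\proofDir{From logic to automata.}
We go through \cref{thm:expressions-NRA-equivalence}, or directly build automata by induction on $\varphi$, with free variables encoded as extra labels in $\Sigma$. The key invariant is that a subformula with at most one nested free variable behaves like a regular property once the finitely many top-level variables are fixed. So we translate a formula with top-level free variables $x_1,\dots,x_m$ into an \NRAs that stores $\varpi[x_i]$ in registers, guessing it before position $x_i$ and verifying it there. Nested subformulas have only one nested data variable per atom, so they compare the current datum against registers only. Such a subformula is therefore a finite-alphabet MSO property over the word annotated with atom-types relative to the registers. By \cref{thm:buchi-elgot-trakhtenbrot} it is regular, hence closed under negation, and we determinise it as an NFA over the annotations.

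The remaining constructs are handled as follows:
\begin{itemize}
    \item Existentials are handled by projection.
    \item Disjunction is handled by union.
    \item $\segment{X}\varphi$ is handled by a Kleene-star-like iteration that restarts the automaton for $\varphi$ at each split point and resets the registers belonging to variables local to the scope.
\end{itemize}
Guessing is essential here, since values of top-level variables occurring later in the word must be guessed in advance.
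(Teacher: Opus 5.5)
Your logic-to-automata direction is fine in outline. It is a compositional version of what the paper does in one shot: translate $\varphi_0$ via $G(\varphi,S)$ into a single locally testing, $\segmentSymbol$-free formula over $\Sigma'\times\A'$, apply \cref{thm:buchi-elgot-trakhtenbrot} once, and project the tracks away at the end. There, the tracks $v_x$, constant on the scope of each top-level $x$, play the role of your registers. The paper's packaging spares you the bookkeeping for conjunction with shared registers and for free first-order variables lying outside the current subscope.

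The genuine gap is in the automata-to-logic direction, exactly at the step you flag as the main obstacle. Descending to singleton scopes to make $y$ top-level is harmless; the problem is where the witnesses can be quantified. In $\segment{X_{r_1}}\exists{p_1}\,\segment{X_{r_2}}\exists{p_2}\cdots$, the scope after the first modality is a live interval $I_1$ of $r_1$. Cutting it at $X_{r_2}$ yields scopes $I_1\cap I_2$, and $\exists{p_2}$ ranges only over that intersection, whereas the only occurrence of the value held on $I_2$ may lie in $I_2\setminus I_1$. Concretely, let $r$ have live intervals $\range{1}{10},\range{11}{20}$ and $s$ have $\range{1}{4},\range{5}{15},\range{16}{20}$. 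Suppose the $r$-value on $\range{1}{10}$ occurs in $\varpi$ only at position $2$, and the $s$-value on $\range{5}{15}$ only at position $14$ (both weak guesses). A guard comparing $\valpost{r}$ with $\valpost{s}$ needs both values at every position of $\range{5}{10}$. Yet whichever of $r,s$ is cut later must have its witness chosen inside $\range{1}{10}\cap\range{5}{15}=\range{5}{10}$, which contains neither $2$ nor $14$. No reordering or extra nesting of register-wise cuts repairs this.

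The missing idea is to cut the word into \emph{unions} of live intervals rather than intersections. For $(I_x,I_y)\in\Pairs(x,y)$ the paper takes the operating interval $J(I_x,I_y)=I_x\cup I_y$, which contains both live intervals entirely, so both witnesses can be chosen inside it. Operating intervals of a fixed type may overlap, but \cref{lem:alternating-disjointness} gives $J_i\cap J_{i+2}=\emptyset$. Hence two parallel segmentations with cut sets $X^0_{x,y},X^1_{x,y}$ isolate each of them; inside each one picks $p_x,p_y$ existentially and recurses on the remaining variables, for every ordered pair $(x,y)$. This covers every position $p$. Let $x$ be the variable whose live interval through $p$ starts leftmost, and $y$ the one whose live interval through $p$ ends rightmost. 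These two intervals form a pair in $\Pairs(x,y)$ whose operating interval contains every live interval through $p$, so the remaining witnesses for $p$ stay available deeper in the recursion.

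Finally, your claim that every stored value is the current datum or is read later during its live interval holds for live intervals of \emph{values}, not of registers. To use per-register sets $X_r,W_r$ you need the paper's normalisation to simple NRAs, which forbids moving values between registers. Otherwise a register may hold a copied value that never reappears in $\varpi$ during that register's interval.
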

\begin{corollary}
    If atoms $\A$ have elimination of strong guessing, then
    satisfiability is decidable for $\SMSO$ iff $\NRA[\Sigma,\A]$ emptiness is decidable.
\end{corollary}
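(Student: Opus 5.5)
The corollary follows from \cref{thm:logic-NRA-equivalence}, once we check that both translations in its proof are effective. I will argue that they are, and then transfer decidability in each direction.

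\emph{Effectivity.} For the direction from logic to automata, the proof of \cref{thm:logic-NRA-equivalence} builds the automaton by induction on the formula: disjunction by union, existential quantification by projection, the segment modality by an iteration construction, and negation only on subformulas whose data atoms involve at most one nested variable. Each step is an explicit construction on finite objects: states, registers, and transition rules labelled by quantifier-free constraints. These constraints range over a finite signature and finitely many variables, so they can be enumerated. Hence a sentence $\varphi \in \SMSO$ yields an effectively computable $\cA_\varphi \in \NRA[\Sigma,\A]$ with $\lang{\cA_\varphi} = \lang{\varphi}$.

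For the direction from automata to logic, we first make $\cA$ weakly guessing, which is possible by elimination of strong guessing (\cref{def:elimination-of-strong-guessing}). We then write a formula that guesses the run: monadic variables for states and for register live intervals, with register contents referenced through witnessing positions, and the segment modality delimiting live intervals. This formula is built uniformly from $\cA$, so it is computable.

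\emph{Transfer of decidability.} Suppose emptiness of $\NRA[\Sigma,\A]$ is decidable. Given a sentence $\varphi \in \SMSO$, compute $\cA_\varphi$; then $\varphi$ is satisfiable iff $\lang{\cA_\varphi} \neq \emptyset$, so satisfiability is decidable. Conversely, suppose satisfiability of $\SMSO$ is decidable. Given $\cA$, compute a weakly guessing equivalent $\cB$ and then a sentence $\psi_\cB$ with $\lang{\psi_\cB} = \lang{\cA}$. Then $\cA$ is nonempty iff $\psi_\cB$ is satisfiable.

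\emph{Main obstacle.} The weak point is the step from $\cA$ to $\cB$. \Cref{def:elimination-of-strong-guessing} only asserts that $\cB$ exists, not that it can be computed. I would handle this as follows. Emptiness is invariant under replacing $\cA$ by any automaton with the same language. So we may enumerate all weakly guessing candidates $\cB$ of increasing size, knowing one with $\lang{\cB} = \lang{\cA}$ exists. But certifying language equality is not available in general, so this enumeration does not directly yield a procedure.

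I would therefore read the hypothesis effectively. The proof of \cref{thm:logic-NRA-equivalence} for the concrete atoms ($\A_=$ and $\A_<$) uses a constructive elimination, and I would make this effectiveness explicit as part of the assumption. Alternatively, I would avoid the elimination step entirely: a strongly guessed value is never compared with any input, so it can be replaced by a "fresh" token while preserving nonemptiness. Strong guesses can then be removed for the purposes of emptiness alone, without needing full language equivalence.
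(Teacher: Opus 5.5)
Your overall route is the one the paper intends: check that both translations behind \cref{thm:logic-NRA-equivalence} are computable, then transfer decidability along them. The paper states the corollary with no separate argument. You are right that \cref{def:elimination-of-strong-guessing} is purely existential; the paper passes over this by silently treating the elimination as effective. Note also that the direction from decidable emptiness to decidable satisfiability needs no hypothesis, since \cref{lem:logic-reduces-to-NRA} holds for arbitrary atoms.

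Your account of that translation is not the paper's, and as sketched it has a hole at negation. NRAs are not closed under complement, and you give no reason why automata for negated subformulas could be complemented. The paper never builds automata for subformulas. It compiles $\varphi_0$ syntactically into a locally testing, modality-free formula $F(\varphi_0)$ over a multi-track alphabet: values of top-level variables are carried on auxiliary tracks, and $\segmentSymbol_X$ is replaced by universal quantification over MSO-definable subscopes. It then applies \cref{thm:buchi-elgot-trakhtenbrot} over a finite alphabet enriched with quantifier-free types, and finishes with product and projection (\cref{lem:locally-testing-have-nras,lem:etas-cap-nra-projection-can-be-recognised-by-an-nra}). Every step is computable, including the test whether $\varepsilon\in\lang{\varphi_0}$ used to define $F(\varphi_0)$. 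So your conclusion stands.

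Your fallback for the converse direction is wrong as justified. A strongly guessed value only has to differ from the inputs on its live interval. It is still related to them by the other relations of $\A$: over $\A_{<}$, a guard $\cur<\valpre{r}$ compares them at every step. A ``token'' outside $\A$ gives such constraints no meaning.

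What works for emptiness is to make guesses visible in the input. From $\cA$ with registers $\cR$, build $\cA'$ over labels $\Sigma\cup\set{\#}$, with an extra staging register $r'$ for each $r\in\cR$.
\begin{itemize}
\item Before each genuine step, $\cA'$ reads $\size{\cR}$ positions labelled $\#$ and stores their data values in the staging registers, leaving all other registers unchanged.
\item At the genuine step it checks $\varPhi$, together with the requirement that each $\valpost{r}$ with $r\in\cR$ is $\emptyReg$, equals $\cur$, or equals some $\valpre{s}$ with $s$ a register or staging register. The staging registers are kept.
\end{itemize}
Then $\cA'$ never guesses, hence is weakly guessing. Moreover, $\lang{\cA'}\neq\emptyset$ iff $\lang{\cA}\neq\emptyset$. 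For one direction, erase the $\#$-positions from an accepting run of $\cA'$. For the other, insert before the $i$-th step of an accepting run of $\cA$ a block of $\#$-positions carrying $\mu_i$, with arbitrary values where $\mu_i$ is $\emptyRegElement$.

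Feeding $\cA'$ to \cref{lem:from-NRA-to-formulas} computably reduces emptiness to satisfiability. This closes the gap without any effectivity assumption, and indeed without elimination of strong guessing at all, at the price of one extra label.
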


\begin{lemma}
    \label{lem:elimination-of-strong-guessing}
    Equality atoms $\A_{=} = (\bN, =)$ and dense order atoms $\A_{<} = (\bQ, <)$ both have elimination of strong guessing.
\end{lemma}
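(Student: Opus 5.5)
The idea is to show that strong guesses carry so little information that a finite control, possibly helped by a few extra registers holding only values already seen, can simulate them. Fix $\cA \in \NRA_k[\Sigma,\A]$. Call a register content $\xi$ \emph{phantom} if it was introduced by a strong guess. By definition, $\xi$ differs from every input $\alpha_m$ in its live interval. Moreover, no other register can come to hold $\xi$ except by copying: inputs never equal $\xi$, and a later guess must differ from all current register values. So a phantom value is never observed by the input. Its only effect on the run is through the truth values of constraints comparing it with $\cur$, with other registers, and with other phantoms.

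The automaton $\cB$ keeps the phantom registers of $\cA$ empty ($\emptyRegElement$) and represents them symbolically in its finite state. The real registers of $\cA$ are simulated as they are. At every guess, $\cB$ nondeterministically decides whether the guess is strong or weak.
\begin{itemize}
\item If it chooses \emph{weak}, $\cB$ really guesses the value. It then carries a flag in the state saying that this value is still unconfirmed, and rejects if the value leaves all registers before some input $\cur$ equals it. This makes every accepting run of $\cB$ weakly guessing.
\item If it chooses \emph{strong}, the register is marked as phantom in the state. The copy structure among phantoms is tracked as a partition of the phantom registers into groups holding the same value.
\end{itemize}
A constraint $\varPhi$ of $\cA$ is then evaluated by $\cB$ as follows: its real part on the actual values, and its phantom literals by the symbolic information in the state. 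Completeness is direct, because an accepting run of $\cA$ induces a run of $\cB$ by making the correct choices. For soundness, we must turn an accepting run of $\cB$ back into a run of $\cA$ by choosing concrete phantom values. That is where the two atoms differ.

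For $\A_{=}$ the symbolic information is just the partition. Every literal $\cur = r$ or $r = s$ with $r$ phantom and $s$ real must be false, and every literal between two phantoms is decided by the partition. Soundness holds because $\bN$ is infinite. Pick, for each phantom group, a fresh value never occurring in the (finite) input word and distinct across groups. These values are different from all inputs and real register contents, as the simulated constraints required.

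For $\A_{<}$ the same approach fails if the state only records the order position of each phantom relative to the current real registers. A real bound $v_1<\xi$ may leave the registers, and later $\cB$ could assert $\xi<v_2$ with $v_2<v_1$, which no single $\xi$ satisfies. I would fix this by giving $\cB$, for each phantom group, two extra registers $\ell_\xi, u_\xi$. They hold the tightest lower and upper bounds of $\xi$ observed so far, always copies of values already present in input or registers, so storing them is not guessing. The state additionally records the linear order type among the phantoms, the bounds and the real registers. Whenever a constraint asserts $v<\xi$ (resp.\ $\xi<v$), $\cB$ checks this against $u_\xi$ (resp.\ $\ell_\xi$) and tightens the bound if $v$ is stronger. $\cB$ requires throughout that the recorded type among phantoms and bounds stays consistent: strict inequalities, so each open interval $(\ell_\xi,u_\xi)$ stays nonempty and phantoms are ordered compatibly with one another.

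The main obstacle is the soundness argument for $\A_{<}$: from these local consistency checks, we must produce one simultaneous assignment of rationals to all phantom groups satisfying every order constraint ever asserted. I would prove it by noting that all constraints on phantoms are captured by the final bounds together with the recorded inter-phantom order (interval bounds propagate transitively). Density of $\bQ$ without endpoints then lets us place finitely many phantoms in the required order inside their open intervals. Since each phantom is strictly between real values or beyond them, it also avoids every input value. Hence the reconstructed $\cA$-run is valid, and $\lang{\cB}=\lang{\cA}$ with $\cB$ weakly guessing.
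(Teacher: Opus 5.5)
Your argument for $\A_{=}$ is fine; the paper only cites it as an exercise in Bojańczyk's book. One small point: $\cB$ must also refuse to accept while some weak guess is still unconfirmed at the end of the word.

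For $\A_{<}$ you take a genuinely different route from the paper. The paper keeps a strongly guessed value concrete but shifts its live interval onto the nearest data value, weakening strict guards $r_1<r_2$ to $r_1\le r_2$. It then adds one register $r'$ per register to bound how far the value may be shifted back. You instead erase phantoms and reduce soundness to satisfiability of strict order constraints over $\bQ$. That reduction is a clean idea, but as you describe $\cB$ it is unsound, exactly at the step you flag.

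As written, $\ell_\xi,u_\xi$ are tightened only when a constraint of $\cA$ compares $\xi$ itself with a real value, and they are released when the group dies. But the order between phantoms transfers bounds between groups whose lifetimes merely overlap. Take a run of $\cA$ with registers $r,s$:
\begin{itemize}
\item step~1 reads $\alpha_1$ and strongly guesses $\xi_1>\alpha_1$ into $r$;
\item step~2 reads $\alpha_2$ and strongly guesses $\xi_2$ into $s$ with $\xi_1<\xi_2<\alpha_2$;
\item step~3 overwrites $r$ by the input $\alpha_3>\xi_2$;
\item step~4 reads $\alpha_4$ with $\xi_2<\alpha_4<\alpha_3$.
\end{itemize}
Along this path $\cA$ forces $\alpha_1<\alpha_4$. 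In your $\cB$, no real value is ever asserted below $\xi_2$, so $\ell_{\xi_2}$ stays empty. Also $\ell_{\xi_1}=\alpha_1$ is discarded at step~3, and the type recorded at step~2 ($\ell_{\xi_1}<\xi_1<\xi_2<u_{\xi_2}$) is consistent. Hence $\alpha_1,\ldots,\alpha_4=10,20,30,5$ passes every check, and $\cB$ accepts a word that $\cA$ rejects.

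The repair is to make propagation part of the transitions. Whenever $\xi<\xi'$ are alive together, keep $\ell_{\xi'}\ge\ell_\xi$ and $u_\xi\le u_{\xi'}$. In particular, a newborn phantom inherits the bounds of its alive neighbours. This is still only copying, so no new guesses arise.

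Even with propagation, ``density lets us place the phantoms inside their final intervals'' is not yet an argument. The order to respect is the transitive closure, over the whole run, of the per-step orders of alive phantoms. It relates phantoms with disjoint lifetimes, so you must show $L_\xi<U_{\xi'}$ whenever $\xi\preceq\xi'$ in that closure. This does hold, by the following argument.
\begin{itemize}
\item Take a shortest chain $c<\xi_1<\dots<\xi_m<c'$ with $c\ge c'$, where $c$ is asserted at step $t_1$ and $c'$ at step $t_2$.
\item Any two of its phantoms alive together are compared in the chain direction. A reverse comparison would give a cycle. Chordality of interval graphs and the Helly property reduce such a cycle to three phantoms alive at a single step, which is impossible because each per-step order is linear.
\item So no chord exists. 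The lifetimes, together with the one-step intervals $\{t_1\}$ and $\{t_2\}$, form an induced path of intervals and therefore progress monotonically in time.
\item Consequently, the bound asserted first is inherited link by link along the chain. It is present in the bound register of the phantom that receives the later assertion, where $\cB$ rejects.
\end{itemize}
With explicit propagation and this argument your route goes through; without them the soundness direction fails.
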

Section \cref{sec:extensions} discusses how assumption about strong guessing elimination can be lifted at the cost of making the logic more convoluted.

    \section{Proofs of main theorems}
    \label{sec:proofs}
    We briefly outline the core ideas behind \cref{thm:logic-NRA-equivalence,thm:expressions-NRA-equivalence}, while
their detailed proofs were relegated to the appendix, \cref{sec:expressions--automata-equivalence,sec:logic--automata-equivalence}.

\subsection{Expressions vs.\ automata}
A $k$-register automaton can only transport $k$ data values across a cut in the
input. The operator $\cdot_k$ of \DREs is designed to reflect exactly this:
a word belongs to $K\cdot_k L$ iff it can be split as $uw$ such that there
exists an overlap $v\in(\Sigma\times\A)^k$ with $uv\in K$ and $v^R w\in L$.
Intuitively, $v$ is a length-$k$ \emph{interface} that transfers the bounded
number of data values needed to continue recognition on the right.
We provide reductions between $\DRE$ and $\NRAs$, as the formalism of expressions with binders~\cite{BrunetSilva19} proved
conceptually difficult to be simulated using $k$-contracting concatenation.

\namedparagraph{From \DREs to \NRAs.}
The compilation is a structural induction on expressions.  The only
non-classical step is $\cdot_k$ (and the corresponding $({k,*})$-iteration):
to realise the existentially-quantified overlap, the automaton
\emph{guesses} the interface word $v$ nondeterministically, stores its $k$ data
values in registers, and then \emph{replays} the overlap (in reverse) when
starting the right component.

\namedparagraph{From \NRAs to \DREs.}
Here the main idea is to separate \emph{finite control} from \emph{data
consistency}.  We first forget data and view a run as a word over the finite
alphabet of transition rules; the set of control-consistent rule sequences is
regular, hence described by an ordinary regular expression (via Kleene).
We then re-inject data by replacing each transition rule by a local
\emph{block language} of length $2k{+}1$ that encodes, in its data track, the
pre-valuation of the $k$ registers, the current datum, and the post-valuation,
and checks the transition constraint as a quantifier-free condition on this
tuple.

\subsection{Logic vs.\ automata}
The equivalence $\lang{\SMSOnoalpha}=\lang{\NRA}$ has one central obstacle: if a logic
can compare data values at unrelated positions freely, satisfiability quickly
becomes undecidable, and in any case it exceeds the power of finitely many
registers.
Our two ingredients address exactly this.
First, well-formedness restricts each data atom to contain at most one nested
(first-order) variable, preventing arbitrary cross-position dependencies under
negation.
Second, the modality $\segment{X}$ provides controlled recursion over
sub-intervals, letting us enforce properties in scopes loosely corresponding to live intervals of registers.

\namedparagraph{From \SMSO\ to \NRAs.}
The idea is to compile away the scoping features, and extract the data value dependencies between positions until only \emph{local} data
tests remain.
Top-level variables are handled by introducing auxiliary tracks that carry, on
each scope, a single chosen witness value propagated across the scope.
This turns nonlocal references (``use the value chosen for $x$ inside the
current scope'') into local track access.

Next, $\segment{X}\varphi$ is unfolded into a quantification over MSO-definable
subscopes and a conjunction of copies of $\varphi$ on those subscopes; after
this step, data predicates in the translated formula become \emph{locally
testing}, i.e.\ they only inspect the current position in the (expanded) word.
Locally testing formulas can be recognised by a $0$-register NRA: each position
has only finitely many quantifier-free types over the data signature, so we can
reduce to a finite-alphabet MSO formula and then to a finite automaton, and
finally re-interpret the types back as quantifier-free constraints on the
current datum.
The auxiliary data value tracks are eliminated by nondeterministic projection: the NRA
guesses them in the registers online and checks the local constraints, additionally
ensuring that the ``witness track'' is constant along the intended segments.
This yields an \NRA over the original alphabet $\Sigma\times\A$.

\namedparagraph{From weakly guessing \NRAs to \SMSOnoalphabold.}
In the reverse direction we assume weak guessing (w.l.o.g.\ under elimination
of strong guessing) and first normalise to a simple form where register updates
are explicit and values are not shuffled between registers.
The MSO part guesses a \emph{run skeleton}: second-order variables label each
position by the chosen transition rule and enforce control consistency.
The remaining challenge is to express that all register constraints hold,
without violating well-formedness.
The skeleton induces, for each register-value
variable (and for the current datum), a partition of the word into
\emph{live intervals} on which that value is constant.
Weak guessing guarantees visibility: every live interval carrying a value has
some witness position inside it where that value occurs in the input data
track.
We then use $\segment{X}$ as a mechanism for \emph{choosing witnesses
interval-by-interval} rather than as a function of a universally quantified
position.
Roughly, we cut the scope into a family of disjoint ``operating intervals'',
and inside each such interval we existentially pick witness positions for the
currently-unwitnessed values and recurse.
A simple disjointness/alternation property of these operating intervals is what
makes the segmentation work: it lets us select witnesses independently on many
intervals without interference.
Once witnesses for all relevant values are fixed, the final check is easy:
we universally quantify the current position and evaluate each transition guard
by referring to the fixed witness positions, so each data atom contains only
one nested variable.
This yields an $\SMSOnoalpha$ sentence equivalent to the automaton.

    \section{Extensions and special cases}
    \label{sec:extensions}
    \subsection{Infinite words}

Fix a finite alphabet $\Sigma$ and atoms $\A$.
An \emph{infinite} word over $\Sigma\times\A$ is a sequence
$(a_i)_{i\in\N_+}$ with $a_i\in\Sigma\times\A$.
We write $(\Sigma\times\A)^\omega$ for the set of \emph{infinite data words}.
The notion of a data word structure extends verbatim to this setting by taking
the set of positions to be $\N_+$.

Register automata require no syntactic changes for $\omega$-words; only the
acceptance condition is replaced by the standard \emph{B\"uchi} condition.
For $\cA\in\NRA[\Sigma,\A]$, we define $\lang^\omega(\cA)$ as the set of
$\omega$-words admitting an infinite run that starts in an initial
configuration and visits accepting states from $Q_{\fin}$ infinitely often.

\namedparagraph{Scoped MSO for $\omega$-words.}
The semantics of $\SMSO[\Sigma,\A]$ is unchanged over infinite data word
structures: formulas are interpreted over positions $\N_+$, with scopes now
being intervals of $\N_+$ as before.
The correspondence between runs and MSO-definable \emph{run skeletons} carries
over: infinite runs of any $\cA\in\NRA[\Sigma,\A]$ can be described in
$\SMSO[\Sigma,\A]$ by enforcing the transition labelling at every position and
adding the usual B\"uchi recurrence condition.

For the converse direction, we reuse the translation from
\cref{lem:formula-translation-desired-property}: for every sentence
$\varphi_0\in\SMSO[\Sigma,\A]$ we construct a formula $F(\varphi_0)$ such that
$\lang^\omega(\varphi_0)$ is the existential track projection of
$\lang^\omega(F(\varphi_0)) \cap
\lang^\omega\!\bigl(\bigwedge_{x\in\VarIni}\eta_x\bigr)$, and moreover
$F(\varphi_0)$ belongs to a finite-alphabet MSO fragment (over the expanded
alphabet $\Sigma''$).
Hence, by the $\omega$-regular analogue of the
B\"uchi--Elgot--Trakhtenbrot theorem~\cite{RICHARDBUCHI19661}, $F(\varphi_0)$ is
recognised by a B\"uchi automaton on $\Sigma''$.
Following the same projection-of-product construction as in
\cref{subsec:from-formulas-to-automata}, we obtain an
$\NRA[\Sigma,\A]$ with B\"uchi acceptance that recognises
$\lang^\omega(\varphi_0)$.
In particular:

\begin{claim}
    For every finite $\Sigma$ and atoms $\A$,
    \[
        \lang^\omega{\SMSO[\Sigma,\A]}
        \;=\;
        \lang^\omega{\NRA[\Sigma,\A]}.
    \]
\end{claim}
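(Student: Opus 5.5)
The plan is to prove the two inclusions separately. The inclusion $\lang^\omega{\SMSO[\Sigma,\A]} \subseteq \lang^\omega{\NRA[\Sigma,\A]}$ reuses the finite-word translation verbatim; the reverse inclusion adapts the run-skeleton construction to Büchi runs. The second direction is the delicate one: the statement quantifies over \emph{all} atoms, so it must be obtained without invoking \cref{def:elimination-of-strong-guessing}.

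\textbf{Logic to automata.} Fix a sentence $\varphi_0$ and apply the translation of \cref{lem:formula-translation-desired-property}, which yields $F(\varphi_0)$ and the track-constancy formulas $\eta_x$ for $x \in \VarIni$.
\begin{enumerate}
\item Check that each step of the translation is sound over the position set $\N_+$. The step that unfolds $\segment{X}\varphi$ into a quantification over MSO-definable subscopes needs care, because a scope may now be an unbounded final interval; I would treat that interval simply as one more subscope.
\item Conclude that $F(\varphi_0)\wedge\bigwedge_x\eta_x$ is an MSO formula over the finite expanded alphabet $\Sigma''$ of quantifier-free types.
\item By Büchi's theorem \cite{RICHARDBUCHI19661}, obtain a nondeterministic Büchi automaton over $\Sigma''$ for this formula.
\item Build the same projection-of-product as in \cref{subsec:from-formulas-to-automata}. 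The registers guess the auxiliary witness tracks online, and each type letter is reinterpreted as a quantifier-free constraint on the current datum and the registers. The Büchi condition is inherited from the finite-alphabet automaton's state component.
\end{enumerate}
Nothing in this direction depends on guessing, so it holds for arbitrary atoms.

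\textbf{Automata to logic.} Fix $\cA$ with Büchi acceptance.
\begin{enumerate}
\item Normalise $\cA$ as in the finite case, making register updates explicit and removing value shuffling.
\item Guess the run skeleton with second-order variables. Control consistency is enforced by first-order conditions. Büchi acceptance is expressed as $\forall x\,\exists y\,(x<y\wedge\bigvee_{q\in Q_\fin}\mathrm{Acc}_q(y))$. This formula is data-free, so well-formedness is unaffected.
\item Handle witnessed register values exactly as in the finite-word construction: segment into operating intervals, existentially pick a witness position in each interval, and universally check the guards with one nested variable per data atom. Live intervals may now be infinite, but each witnessed value still has a witness occurring at some finite position, so the same segmentation applies.
\end{enumerate}

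\textbf{Main obstacle: strongly guessed values.} For general atoms a strongly guessed value never occurs in the input during its live interval, so it cannot be named by a position. What the run actually uses about such a value $\xi$ is only the set of quantifier-free facts relating $\xi$ to the data of positions in its live interval. The plan has four steps.
\begin{enumerate}
\item In the skeleton, additionally label each position with the quantifier-free type of $\xi$ relative to the current datum and the witnessed register values.
\item Enforce the guards using these labels instead of a witness.
\item Impose a consistency requirement: for every finite set of positions in the live interval, some element of $\A$ realises all the labelled types simultaneously.
\item Express this requirement with only one nested variable per data atom, by reading the quantifier-free type that the witnessed values already determine and consulting the resulting finite compatibility table. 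This is the same finite-type reasoning the paper uses when it reduces locally testing formulas to finite-alphabet MSO.
\end{enumerate}
I expect step 4 to be the crux, and I am not confident it goes through for arbitrary atoms. The consistency requirement is an existential statement about elements of $\A$, and for arbitrary atoms it is not obviously determined by local types. If it cannot be made local in this way, this direction only holds under an extra hypothesis (for instance elimination of strong guessing over $\omega$-runs), and the claim as stated for all atoms would not follow from this plan.
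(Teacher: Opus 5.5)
Your hesitation about step~4 is justified, and the gap cannot be closed. For arbitrary atoms, the inclusion of B\"uchi-NRA languages into Scoped MSO $\omega$-languages is false. A Scoped MSO sentence only sees the data word structure, i.e.\ the relations of $\A$ among data values that actually occur. Your consistency requirement (some element of $\A$ realises the labelled types) is an existential statement about $\A$ itself, and no table of local types can capture it.

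Concretely, take $\A=(\mathbb{Z},<,=)$, $\Sigma=\{a\}$, and the one-register automaton with rules $q_0\xRightarrow{a,\,\cur<\valpost{r}}q_1$, $q_1\xRightarrow{a,\,\valpre{r}<\cur}q_2$ and $q_2\xRightarrow{a,\,\top}q_2$, where $q_2$ is accepting. It accepts exactly the $\omega$-words whose data satisfy $d_2\ge d_1+2$, because it must strongly guess an integer strictly between $d_1$ and $d_2$. Now compare the data sequences $0,1,2,3,\ldots$ and $0,2,3,4,\ldots$, both with all labels $a$. Both are strictly increasing, so they induce literally the same structure on $\mathbb{N}_+$: data $<$ and $=$ coincide with position $<$ and $=$. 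Hence they satisfy the same Scoped MSO sentences. Yet only the second is accepted, so this language is B\"uchi-NRA recognisable but not Scoped MSO definable.

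The paper's own sketch does not address this either. It asserts that the run-skeleton encoding carries over to infinite runs of any automaton, but the construction it reuses, \cref{lem:from-NRA-to-formulas}, is stated only for weakly guessing automata, and \cref{thm:logic-NRA-equivalence} assumes elimination of strong guessing. What can actually be proved is the equality for atoms with an $\omega$-analogue of \cref{def:elimination-of-strong-guessing}, namely that every B\"uchi NRA is equivalent to a weakly guessing one. Under that hypothesis, drop the strong-guessing detour entirely. Your steps 1--3 plus the data-free recurrence formula are then exactly the paper's argument.

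Your logic-to-automata direction matches the paper's route (translate to $F(\varphi_0)$, apply B\"uchi's theorem over $\Sigma''$, project the product), and it does hold for all atoms. Treating the unbounded last subscope as one more subscope is a detail the paper omits. One correction there: $\eta_x$ compares $v_x$ at two adjacent positions, so it is not locally testing and cannot be folded into the MSO formula over $\Sigma''$. Only $F(\varphi_0)$ goes to the finite-alphabet B\"uchi automaton. The conditions $\eta_x$ are enforced by the register check $\valpre{r}_x=\valpost{r}_x$, as in \cref{lem:etas-cap-nra-projection-can-be-recognised-by-an-nra}, which your item~4 effectively does anyway.
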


\subsection{Atoms requiring strong guessing}

Weak guessing is essential in our main logical characterisation: $\SMSO$ can
only refer to data values that occur in the input word, whereas an NRA with
\emph{strong} guessing may introduce fresh data values that never appear on the
input track.
To accommodate such behaviour, one can enrich the multi-track variant of
$\SMSO$ (used in \cref{lem:logic-reduces-to-NRA}) with top-level existential
quantification over fresh \emph{data tracks}.
Formally, extend the syntax with quantifiers of the form $\exists\,t[\cdot]\,
\varphi$, where $t[\cdot]$ is a new track-accessor term ranging over $\A$.

\smallskip\noindent
\emph{Semantics.}
Let $\vu\colon A\to\Sigma^L$ and $\vv\colon B\to\A^L$ be families of tracks of
equal length $L$.
Then
\[
    \modelConv{\vu,\vv},\nu,I \models \exists\,t[\cdot]\ \varphi
    \quad\text{iff}\quad
    \exists\,v'\in\A^L\ \text{such that}\
    \modelConv{\vu,\Conv{\vv,v'}},\nu,I \models \varphi,
\]
where inside $\varphi$ the term $t[x]$ is interpreted as $v'[\nu(x)]$ for each
first-order variable~$x$.
Intuitively, $\exists\,t[\cdot]$ equips the logic with a supply of fresh data
values, mirroring strong guesses performed by the automaton.

\subsection{NRA without guessing}

A simple refinement of the logic suffices to capture the formalism of \NRAnoguess.
We write $\SMSOnoguess[\Sigma,\A]$ for the variant obtained by replacing the
scope modality with a \emph{scoping quantifier} $\scope{X,y}$ and adapting the
notion of top-level variables accordingly:
\begin{itemize}
    \item $\bW,\nu,I \models \scope{X,y}\,\varphi$ iff for every subscope
    $I'=\range{a}{b}\in\Subscopes{I,\nu(X)}$ we have
    $\bW,\nu[y\gets a],I' \models \varphi$;
    \item a first-order variable is \emph{top-level} if it is bound by some
    $\scope{X,y}$ that does not occur under negation in the syntax tree, and it
    is \emph{nested} otherwise;
    \item the well-formedness condition on data atoms is unchanged: every
    occurrence of $R(\v x)$ may contain at most one nested variable.
\end{itemize}
As before, scopes are intended to correspond to live intervals in the run.
The variable $y$ bound by $\segment{X,y}$ pins each register value to the left endpoint of the live interval.
Intuitively, this matches the ``no-guessing'' discipline: values may be copied
forward from earlier positions (or taken from $\cur$), but cannot be guessed
out of thin air.

    \bibliography{bibliography}

    \newpage
    \appendix

    \section{Expressions--automata equivalence}
    \label{sec:expressions--automata-equivalence}
    For the rest of this section, fix a finite alphabet $\Sigma$ and atoms $\A = (U, R_1, \ldots, R_\ell)$.

        \subsection{From expressions to automata}
        \label{subsec:from-expressions-to-automata}
        In this section we prove one direction of \cref{thm:expressions-NRA-equivalence}.

\begin{lemma}
    \label{lem:expressions-reduce-to-NRA}
    For every data regular expression $E \in \DRE[\Sigma,\A]$, there exists an
    $\cA \in \NRA[\Sigma,\A]$ such that $\lang{E} = \lang{\cA}$.
\end{lemma}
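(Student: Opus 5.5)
We proceed by structural induction on $E$, building for each subexpression an automaton in $\NRA[\Sigma,\A]$; the number of registers may grow with the nesting of $\cdot_k$ and $(k,*)$. To make the composition steps simple, we first strengthen the inductive hypothesis to a normal form. Each automaton $\cA_E$ has a single initial state with no incoming transitions and a single accepting state with no outgoing transitions, plus a separate flag when $\varepsilon\in\lang{E}$. Moreover, all its registers are empty ($\emptyReg$) in the initial configuration, and we may always add fresh registers that the automaton ignores.

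The base cases are routine. For $\emptyset$ take an automaton with no accepting states. For $[w]$ with $w=(\sigma_1,\alpha_1)\cdots(\sigma_n,\alpha_n)$, use $n$ registers and a chain of $n$ transitions. The $i$th transition reads $\sigma_i$, stores $\cur$ into register $r_i$, keeps the other registers unchanged, and checks all literals of $\type{\alpha_1\cdots\alpha_n}$ involving $x_i$ and $x_1,\dots,x_{i-1}$, with $x_i$ read from $\cur$ and $x_j$ ($j<i$) from $\valpre{r_j}$. Since the type is determined by the literals over every tuple of variables, checking each literal at the step where its last variable is read gives exactly $\qfregion_\A{w}$. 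The label track must also match. For $E+F$ take the disjoint union of the two automata, with registers identified up to the maximum of the two counts.

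The core case is $E\cdot_k F$. Let $\cA_E$ use registers $\cR_E$ and $\cA_F$ use $\cR_F$. The new automaton $\cB$ has registers $\cR_E\disjointUnion\cR_F\disjointUnion\set{g_1,\dots,g_k}$, together with finitely many extra state bits recording guessed labels $\tau_1,\dots,\tau_k\in\Sigma$. $\cB$ reads $u$ by simulating $\cA_E$. At the cut, $v$ is not part of the input, so $\cB$ guesses it instead. The guessing capability lets the last transition reading $u$ (or a preliminary step if $u=\varepsilon$) nondeterministically set $g_1,\dots,g_k$ to arbitrary values; the labels $\tau_i$ are guessed in the state. $\cB$ must then check that $\cA_E$ can continue from its current configuration over $v$ into an accepting state, and that $\cA_F$ starting from the initial configuration can consume $v^R$. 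No input is consumed during this check, so we compose transitions symbolically. Concretely, a sequence of $k$ rules of $\cA_E$ over $v$ requires the existence of intermediate register valuations satisfying the chain of constraints. Merging that whole chain into the single transition reading the last letter of $u$ would need existential quantification over the intermediate valuations, which quantifier-free constraints cannot express.

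To avoid this, $\cB$ runs the two simulations interleaved with the real input. It uses the auxiliary registers $g_i$ as the "virtual letters" and keeps copies of the intermediate valuations in extra registers, which are guessed as well. Every step along the virtual chain then becomes a quantifier-free check between stored registers. In total, $k$ copies of $\cR_E$ and $k$ copies of $\cR_F$ suffice, all guessed at once in one transition whose constraint is the conjunction of the $2k$ rule constraints plus the requirement that the final $\cA_E$ state is accepting. This conjunction is quantifier-free because every intermediate value is stored in a register. After that transition, $\cB$ continues simulating $\cA_F$ on $w$ from its post-$v^R$ configuration, copying register values into $\cR_F$ and emptying the rest. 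Edge cases are $u=\varepsilon$ and $w=\varepsilon$ (handled via an $\varepsilon$-prefix state or by attaching the guessed block to the first letter of $w$), and $uw=\varepsilon$, where acceptance requires an unread block. For the last, $\varepsilon\in\lang{E\cdot_kF}$ is decided by whether some $v$ exists with $v\in\lang{E}$ and $v^R\in\lang{F}$. This is a fixed bit, and we can set it semantically since the statement only asks for existence of $\cA$.

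Iteration $E^{k,*}$ chains the same gadget: after each factor the automaton guesses the interface $v$, checks its consumption in the finished copy and its reversed prefix in a fresh copy, then loops. The $n=0$ case $\set{vv^R}$ is the single word language of even palindromes of length $2k$, accepted by a small automaton storing the first half and comparing the second half in reverse. It is added by union. The main obstacle, I expect, is precisely the symbolic composition at the cut: ensuring that the existential quantification over the unread interface and the intermediate valuations is realised by guessed registers inside quantifier-free constraints, and handling the empty-word boundary cases correctly.
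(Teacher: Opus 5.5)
Your base cases and the $\cdot_k$ case are sound and follow the paper's route. The paper also guesses the overlap into fresh registers $s_1,\dots,s_k$. It feeds $v$ to $\cA_E$ and $v^R$ to $\cA_F$ by $\varepsilon$-steps, then removes these steps via \cref{lem:eps-elim} by storing intermediate valuations in extra guessed registers. That is exactly your fused transition. It works because each cut costs only $2k$ virtual steps.

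\textbf{The gap is in the iteration case.} ``Chaining the same gadget'' is not a bounded construction, because a factor may contribute no real input at all. If a middle factor has length exactly $2k$, it is pure interface, $v_{i-1}^R v_i\in\mathcal{L}(E)$. Arbitrarily many such factors can sit between two consecutive real letters. Your fused transition would then have to absorb an unbounded number of gadgets, each needing $2k$ fresh copies of the registers, and no fixed register set suffices.

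The paper's proof has the same hole. \Cref{lem:eps-elim} needs a bound $N$ on consecutive $\varepsilon$-steps, but identifying $(q,\mathtt{P1})$ with $(q,\mathtt{P4})$ allows unboundedly many gadget rounds without consuming input.

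For arbitrary atoms the gap cannot be closed, because the statement itself fails. Let $\A=(V,E,=)$, where $(V,E)$ is a disjoint union of finite paths of every length. Let $\Sigma=\{a\}$, and take $F=[(a,\alpha_0)(a,\beta_0)]^{1,*}$ for an edge $\alpha_0\beta_0$. Then:
\begin{itemize}
\item Every word of every $K^{1,n}$ has length $2$, and $K^{1,n}$ consists of the walks of length $n$. Hence $\mathcal{L}(F)=\{(a,\alpha)(a,\gamma)\mid \alpha,\gamma \text{ lie in the same component}\}$.
\item For any NRA, the set of accepted two-letter words is existentially definable in $\A$: register values become existentially quantified variables, with a case split on $\emptyReg$.
\item ``Same component'' is not existentially definable. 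Suppose $\exists\bar z\,\psi$ with $|\bar z|=m$ holds for the endpoints $\alpha,\gamma$ of a path of length greater than $m+1$, via witnesses $\bar c$. In the induced subgraph on $\{\alpha,\gamma\}\cup\bar c$, the component of $\alpha$ cannot contain $\gamma$. So the part of the witnesses on that path lying outside $\alpha$'s component (including $\gamma$) can be moved into an unused longer path without changing the quantifier-free type. This yields a satisfying pair in different components, a contradiction.
\end{itemize}

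What survives is the homogeneous case, e.g.\ $\A_{=}$ and $\A_{<}$. There, reachability between configurations along arbitrarily long virtual chains is automorphism-invariant, hence a finite union of quantifier-free types. An unbounded chain therefore collapses into one quantifier-free constraint over a single extra guessed copy of the registers, and your fused-transition argument goes through.
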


\namedparagraph{NRAs with $\bm{\varepsilon}$-transitions.}
The base model of \NRAs~(as in \cref{subsec:a-formalism-for-data-word-languages}) reads exactly one input symbol per step.
For the constructions below, it is convenient to allow $\epsilon$-transitions
that do not consume input.
    A nondeterministic register automaton with $\varepsilon$-transitions ($\NRA_\varepsilon[\Sigma, \A]$) is a tuple $\cA = (Q,\Sigma,\A,\cR,Q_\ini,Q_\fin,\delta,\delta_\varepsilon)$ where $(Q,\Sigma,\A,\cR,Q_\ini,Q_{\fin},\delta)$ is an \NRA as before and
    $\delta_\varepsilon \subseteq Q\times \constr_\A{\cR} \times Q$
    is a finite set of \emph{$\varepsilon$-transition rules} with constraints not using variable $\cur$.
    We write $(p,\sigma,\varPhi,q)\in\delta$ as $p \xRightarrow{\sigma,\varPhi} q$ and
    $(p,\varPhi,q)\in\delta_\varepsilon$ as $p \xRightarrow{\varPhi} q$.
    The induced infinite transition system uses the same configurations
    $\conf{\cA}=Q \times (\cR\to\A_{\emptyReg})$ as in \cref{subsec:a-formalism-for-data-word-languages}.
    For a data word $u=(\sigma_1,\alpha_1) \cdots (\sigma_n,\alpha_n) \in (\Sigma \times \A)^*$,
    a \emph{run} is a sequence alternating input-consuming transitions and $\varepsilon$-transitions:
    \begin{itemize}
        \item \emph{input-consuming transitions:} $(p,\mu)\xrightarrow{\sigma,\alpha}(q,\mu')$ exists iff
        $p \xRightarrow{\sigma,\varPhi} q \in \delta$ and $\nu(\mu,\alpha,\mu') \models \varPhi$,
        \item \emph{$\varepsilon$-transitions:} $(p,\mu)\xrightarrow{\varepsilon}(q,\mu')$ exists iff
        $p \xRightarrow{\varPhi} q \in \delta_\varepsilon$ and $\nu(\mu, \mu')\models\varPhi$, where $\nu(\mu,\mu')$ is defined similarly to $\nu(\mu,\argumentDot,\mu')$ but does not specify the value for $\cur$.
    \end{itemize}
    Acceptance is defined as usual: $u$ is accepted if there is a run from $(q_\ini,\mu_\ini)$ for some $q_\ini \in Q_\ini$
    labelled by $u$ that ends in some $(q_\fin, \mu)$ for $q_\fin \in Q_\fin$.

\begin{lemma}[$\varepsilon$-elimination]
    \label{lem:eps-elim}
    Fix $N \in \N$.
    For every $\cA \in \NRA_\varepsilon[\Sigma,\A]$ that executes at most~$N$ $\varepsilon$-transitions in a row there exists $\cB\in\NRA[\Sigma,\A]$ with $\lang{\cA}=\lang{\cB}$.
\end{lemma}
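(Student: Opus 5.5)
The idea is to compose each maximal block of at most $N$ consecutive $\varepsilon$-transitions with the adjacent input-consuming transition into a single constraint. The composite step then becomes one ordinary \NRA transition rule. Intermediate register valuations that arise inside the block are existentially quantified, and we remove these quantifiers by guessing them in auxiliary registers. Concretely, I would first normalise runs: any accepting run of $\cA$ over $u$ has the shape
$\varepsilon^{\le N}\, t_1\, \varepsilon^{\le N}\, t_2 \cdots t_n\, \varepsilon^{\le N}$.
I would attach every $\varepsilon$-block that follows the input transition $t_i$ to $t_i$, so each macro-step is "input transition followed by at most $N$ $\varepsilon$-moves". The leading block, which occurs before the first input symbol, and the trailing block are handled separately; the empty word is also treated separately.

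For a macro-step with rules $p \xRightarrow{\sigma,\varPhi_0} q_0 \xRightarrow{\varPhi_1} q_1 \cdots \xRightarrow{\varPhi_m} q_m$ with $m\le N$, the composed condition on $(\mu,\alpha,\mu')$ is
$\exists \mu_1,\dots,\mu_{m}$ with $\mu_0=\mu$ (after reading) and $\mu_m = \mu'$, such that all $\varPhi_j$ hold.
This is existential and generally not quantifier-free over $\A$, since atoms need not admit quantifier elimination. So instead of composing formulas, I would let $\cB$ use registers $\cR \times \rangeOne{N+1}$, i.e.\ $N{+}1$ copies of $\cR$. The copy with index $j$ holds the intermediate valuation $\mu_j$, and after the step the last used copy equals $\mu'$. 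A single transition of $\cB$ may relate pre- and post-values of all its registers arbitrarily, so it can guess all intermediate valuations at once as its post-values. It then checks $\varPhi_0$ between copy $1$ before the step, $\cur$, and copy $1$ after the step, and checks each $\varPhi_j$ between consecutive copies after the step. The result is a conjunction of quantifier-free constraints, as required, and the state of $\cB$ remembers which copy currently holds the "real" valuation; alternatively, $\cB$ can copy the final valuation back into copy $1$ within the same constraint. Copies with index greater than $m$ are left unconstrained, or are required to be $\emptyReg$. Since $\cB$ is allowed to guess, this is sound.

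The leading $\varepsilon$-block starts from the all-$\emptyReg$ valuation but may end in an arbitrary $\mu$. I would prepend it to the first input macro-step, in which case $\cB$ also guesses and checks the pre-first-symbol valuations in the extra copies (the pre-values of copy $1$ being $\emptyReg$). Acceptance of $\varepsilon$ itself is decided by marking an initial state of $\cB$ accepting, provided some sequence of at most $N$ $\varepsilon$-rules leads from $Q_\ini$ with empty registers to $Q_\fin$. This condition is not purely syntactic, because constraints must be satisfiable in $\A$; but $\cB$ is only required to exist, not to be computed, so the non-constructive case split suffices. The trailing block is absorbed by making $\cB$'s macro-steps end in states reachable within the block, and declaring a state accepting when the macro-step guessed an $\varepsilon$-path ending in $Q_\fin$.

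The step I expect to be the main obstacle is correctness bookkeeping rather than any deep difficulty. First, the constraint language refers only to pre/post copies of registers and $\cur$, so the intermediate valuations must be encoded in post-values. Second, the fresh element $\emptyReg$ and equality must be handled uniformly across copies. Third, the states of $\cB$, namely pairs (state of $\cA$, index of the live copy), must make both inclusions $\lang{\cA}\subseteq\lang{\cB}$ and $\lang{\cB}\subseteq\lang{\cA}$ go through by direct run translation: a run of $\cB$ is cut into macro-steps, and a run of $\cA$ is reconstructed by reading the intermediate valuations off the auxiliary copies.
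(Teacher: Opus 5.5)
Your proposal is correct and follows essentially the same route as the paper's sketch: each input step is fused with a bounded block of adjacent $\varepsilon$-steps into one rule whose constraint is the conjunction of the constituent constraints, with the intermediate valuations guessed into fresh copies of the registers (your asymmetric ``input step followed by its $\varepsilon$-block'' decomposition avoids double-counting more cleanly than the paper's symmetric ``$N$ $\varepsilon$-steps, input, $N$ $\varepsilon$-steps'' pattern). The only slip is in the register count: the first macro-step also absorbs the leading $\varepsilon$-block, so it may need up to $2N+1$ intermediate valuations, and you should use $2N+1$ copies of $\cR$ rather than $N+1$.
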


\begin{proof}[Proof sketch]
    This is the standard $\varepsilon$-elimination construction, applied at the level of transition rules.
    A bound $N \in \N$ on the number of $\varepsilon$-steps executed in a row is required, because the register valuations visited between $\varepsilon$-transitions need to be stored in additional registers.

    Intuitively, one replaces any pattern ``$N$ $\varepsilon$-steps, then one input step, then $N$ $\varepsilon$-steps''
    by a single input step whose constraint is the (finite) conjunction of the involved constraints with
    fresh copies of registers for intermediate valuations; the intermediate register values are existentially guessed
    by the automaton.
\end{proof}

Hence, it suffices to construct \NRAs with $\varepsilon$-transitions, ensuring they are used a bounded number times in a row between input-consuming steps; \cref{lem:eps-elim} then yields an \NRA.

\begin{proof}[Proof of \cref{lem:expressions-reduce-to-NRA}]
    We define a mapping $A$ that compiles a data regular expression $E$ into $A(E) \in \NRA_\varepsilon[\Sigma, \A]$ by structural induction on $E$, ensuring that $\lang{\cA_E}=\lang{E}$.
    Applying \cref{lem:eps-elim} yields an \NRA as required.

    Throughout, we use the convention that every transition constraint $\varPhi\in\constr_\A{\cR}$
    implicitly ensures that values of registers $r\in\cR$ not mentioned in $\varPhi$ are preserved, i.e., $\valpost{r} = \valpre{r}$.

    \proofCase{1}{$E=\emptyset$}
    Let $A(E)$ have no accepting states.
    Then $\lang{A(E)}=\emptyset=\lang{E}$.

    \proofCase{2}{$E=[\Conv{w, \varpi}]$}
    Let $\cR=\set{r_1,\ldots,r_n}$.
    Define $A(E)=(Q,\Sigma,\A,\cR,\set{0},Q_{\fin},\delta,\emptyset)$ with $Q=\range{0}{n}$ and $Q_{\fin}=\set{n}$.
    For each $i\in\rangeOne{n}$, put the rule $i-1 \xRightarrow{\sigma_i,\varPhi_i} i$ where
    \begin{align*}
        \varPhi_i \coloneqq \bigl(\valpost{r_i} = \cur\bigr)
        \land
        \begin{cases}
            \top & \text{if $i < n$,}\\
            \textstyle\bigwedge_{\psi \in \type_\A{\varpi}} \psi(\valpost{r_1},\dots,\valpost{r_n}) & \text{if $i = n$.}
        \end{cases}
    \end{align*}
    Clearly, $A(E)$ reads exactly $n$ input symbols with $\Sigma$-track $\sigma_1\cdots\sigma_n$,
    stores the data track into $(r_1,\ldots,r_n)$, and finally checks that the stored tuple lies in
    $\qfregion_\A{\varpi}$.
    Therefore $\lang{A(E)}=\qfregion_\A{\Conv{w, \varpi}}=\lang{E}$.

    \proofCase{3}{$E=E'+E''$} This case is immediate using the standard product construction.

    \proofCase{4}{$E=E'\cdot_k E''$}
    By \cref{lem:eps-elim}, we assume that automata for subexpressions do not use $\varepsilon$-transitions.
    Let $\cA'=A(E')=(Q',\Sigma,\A,\cR',Q'_\ini,Q'_{\fin},\delta')$ and
    $\cA''=A(E'')=(Q'',\Sigma,\A,\cR'',Q''_\ini,Q''_{\fin},\delta'')$,
    with $\cR',\cR''$ disjoint.
    Let $\cR''' = \set{s_1,\dots,s_k}$ be fresh registers for intended to store the data part of the guessed overlap word.
    We build $A(E) = (Q, \Sigma, \A, \cR, Q_\ini, Q_\fin, \delta, \delta_\varepsilon)$ that operates in four phases:
    \begin{enumerate}[label={(\texttt{{P\arabic*}})}, ref={(\texttt{{P\arabic*}})}, leftmargin=*,topsep=\smallskipamount, partopsep=0pt]
        \item simulates $\cA'$ on the real input prefix $u$,
        \item guesses an overlap word $v\in(\Sigma\times \A)^k$ using $\varepsilon$-steps and continues the simulation of $\cA'$ on $v$ until it is accepting, and then
        \item simulates $\cA''$ starting by feeding it $v^R$ via $\varepsilon$-steps, and finally
        \item runs $\cA''$ on the remaining real input suffix $w$.
    \end{enumerate}

    Let $\Sigma^{\le k} = \bigcup_{j=0}^k \Sigma^j$.
    Define the state space
    \begin{align*}
        Q \coloneqq \
        (Q'\times\set{\mathtt{P1}})
        &\disjointUnion
        (Q'\times\range{0}{k}\times \Sigma^{\le k}\times\set{\mathtt{P2}}) \disjointUnion {}\\
        (Q''\times\set{\mathtt{P4}})
        &\disjointUnion
        (Q''\times\range{0}{k}\times \Sigma^k\times\set{\mathtt{P3}})\,.
    \end{align*}
    Intuitively:
    \begin{itemize}
        \item $(q,\mathtt{P1})$ and $(q,\mathtt{P4})$ simulate $\cA'$ and $\cA''$, respectively, on real input,
        \item $(q,j,v,\mathtt{P2})$ has guessed $j$ overlap letters, remembers $v \in \Sigma^j$,
        and simulates $\cA'$ on the guessed overlap, and
        \item $(q,j,v,\mathtt{P3})$ replays the remembered $v\in\Sigma^k$ in reverse to simulate $\cA''$.
    \end{itemize}
    Set $Q_\ini = \set{(q'_\ini,\mathtt{P1}) \suchthat q_\ini \in Q'_\ini}$ and $Q_{\fin}=\set{(q_\fin,\mathtt{P4})\suchthat q_\fin\in Q''_{\fin}}$.
    The register set is $\cR=\cR' \disjointUnion \cR'' \disjointUnion \cR'''$.
    We construct $\delta$ and $\delta'$ as follows.
    \begin{itemize}
        \item Phases $\mathtt{P1}$ and $\mathtt{P4}$ are straightforward to implement:
        for rule \smash{$p\xRightarrow{\sigma,\varPhi} q\in\delta'$} add the input rule $(p,\mathtt{P1}) \xRightarrow{\sigma,\varPhi} (q,\mathtt{P1})$, where $\varPhi$ is read as a constraint over $\cR$ (it only mentions registers in $\cR'$).
        We add analogous transition rules for $\cA''$ and $\mathtt{P4}$.
        \item Switch from $\mathtt{P1}$ to $\mathtt{P2}$.
        For every $q\in Q'$ add the $\varepsilon$-rule $(q,\mathtt{P1}) \xRightarrow{\top} (q,0,\varepsilon,\mathtt{P2})$.
        \item Phase $\mathtt{P2}$.
        For each $j\in\range{0}{k-1}$, each $v\in\Sigma^j$, each $\sigma\in\Sigma$, and each rule
        $p\xRightarrow{\sigma,\varPhi} q\in\delta'$, let $\varPhi \equiv \varPhi[\cur \gets \valpost{s}_{j+1}] \land (\valpost{s}_{j+1} = \valpre{s}_{j+1} \lor \valpost{s}_{j+1} \neq \valpre{s}_{j+1})$ and add the $\varepsilon$-rule
        \begin{align*}
        (p,j,v,\mathtt{P2}) \xRightarrow{\varPhi'} (q,j+1,v\sigma,\mathtt{P2})\,.
        \end{align*}
        In particular, this allows the automaton to guess a fresh value of $s_{j+1}$ and remembers the guessed $\Sigma$-letter in the state.
        \item Switch from $\mathtt{P2}$ to $\mathtt{P3}$:
        After $k$ guesses, require acceptance of $\cA'$ and jump to replay phase:
        for every $v\in\Sigma^k$, every $q'_\fin\in Q'_{\fin}$ and $q''_\ini \in Q''_\ini$, and every $\sigma\in\Sigma$, add $(q'_\fin,k,v,\mathtt{P2}) \xRightarrow{\top} (q''_\ini,0,v,\mathtt{P3})$.
        \item Phase $\mathtt{P3}$: replay $v^R$ to start $\cA''$.
        Fix $v=\sigma_1\cdots\sigma_k\in\Sigma^k$.
        For each $j\in\range{0}{k-1}$ and each rule $p\xRightarrow{\sigma_{k-j},\varPhi} q\in\delta''$,
        let $\varPhi' \equiv \varPhi[\cur \gets \valpre{s}_{k-j}]$ add the $\varepsilon$-rule
        \begin{align*}
        (p,j,v,\mathtt{P3})
            \mathbin{\smash{\xRightarrow{\varPhi}}}
            (q,j+1,v,\mathtt{P3}).
        \end{align*}
        (Thus the $j$-th replay step forces the guessed datum $\cur$ to equal the stored overlap datum.)
        \item Switch from $\mathtt{P3}$ to $\mathtt{P4}$:
        for every $q\in Q''$ and every $\sigma\in\Sigma$, add $(q,k,v,\mathtt{P3}) \xRightarrow{\top} (q,\mathtt{P4})$.
    \end{itemize}

    \proofDir{Correctness.}
    We show that $\lang{A(E)}=\lang{E' \cdot_k E''}$.

    \proofDir{Direction ``$\bm{\subseteq}$''.}
    Let $uw \in \lang{A(E)}$ and fix an accepting run such that
    is $u$ the maximal prefix of $uw$ consumed while the run stays in the $\mathtt{P1}$-component.
    The run then takes the switch and performs exactly $k$ $\varepsilon$-transitions in $\mathtt{P2}$,
    thereby fixing some $\sigma_1\cdots\sigma_k\in\Sigma^k$ and storing some data values
    $\alpha_1,\dots,\alpha_k\in \A$ in $s_1,\dots,s_k$.
    Let $v=(\sigma_1,\alpha_1)\cdots(\sigma_k,\alpha_k)\in(\Sigma\times \A)^k$.
    By construction, the projection of the run to the $\cR'$-registers witnesses that $\cA'$ has an accepting run on $uv$,
    hence $uv\in\lang{\cA'}=\lang{E'}$ by the induction hypothesis.
    Next, during $\mathtt{P3}$ the automaton feeds $v^R$ (via $\varepsilon$-transitions) into $\cA''$ and then consumes $w$
    while simulating $\cA''$ on real input, ending in a final state.
    Thus $v^R w\in\lang{\cA''}=\lang{E''}$.
    Therefore, by definition of $k$-contracting concatenation, $x=uw\in\lang{E'}\cdot_k\lang{E''}=\lang{E'\cdot_k E''}$.

    \proofDir{Direction ``$\bm{\supseteq}$''.}
    Conversely, let $uw\in\lang{E'\cdot_k E''}$ such that $uv\in\lang{E'}$ and $v^R w\in\lang{E''}$ for some $v\in(\Sigma\times \A)^k$.
    By the induction hypothesis, $\cA'$ accepts $uv$ and $\cA''$ accepts $v^R w$.
    We build a run of $A(E)$ over $uw$ in a straightforward way.

    \proofCase{5}{$E=F^{k,*}$}
    Let $K=\lang{F}$.
    Recall $\lang{F^{k,*}}=\bigcup_{n\ge 0} K^{k,n}$ where $K^{k,0}$ is the set of palindromes of length $2k$.

    \proofDir{Automaton for $K^{k,0}$.} Observe that the language of palindromes of length $2k$ are definable by an expression $[w_1] + [w_2] + \cdots + [w_m]$ by enumerating representatives $w_1, \ldots, w_m$ of all orbits of words of length $2k$, thus by the inductive hypothesis it has a corresponding automaton $\cA_0$.

    \proofDir{Automaton for $\bigcup_{n\ge 1} K^{k,n}$.}
    This is an iterated variant of the construction for $E \cdot_k F$.
    We construct $\cA_{\ge 1} \in \NRA_\varepsilon[\Sigma, \A]$ that starts simulating $A(F)$ on real input, and can nondeterministically decide to  perform the same ``guess+replay'' gadget as in the case
    $E \cdot_k F$, but now from $A(E)$ it returns to the simulation in $A(E)$.
    Concretely, take the above construction for $A{F \cdot_k F}$ and identify states $(q, \mathtt{P1})$ and $(q, \mathtt{P4})$.
    This yields an automaton that recognizes $\bigcup_{n\ge 1} K^{k,n}$.
    The correctness argument is the same as above, by counting the number of completed gadgets in the run:
    each completed ``guess+replay'' gadget corresponds to one application of $\cdot_k$, and the final accepting segment corresponds to the last factor.

    Finally, define $A(E)$ as the union (case $+$) of $\cA_0$ and $\cA_{\ge 1}$.
    Then $\lang{\cA_E}=\lang{F^{k,*}}$.

    This completes the inductive construction, hence $\lang{A(E)}=\lang{E}$ for every $E$.
    Applying \cref{lem:eps-elim} yields an \NRA without $\varepsilon$-rules recognizing the same language.
\end{proof}

        \subsection{From automata to expressions}
        \label{subsec:from-automata-to-expressions}
            In this section we prove the converse direction of \cref{thm:expressions-NRA-equivalence}.

    \begin{lemma}
        \label{lem:nra-reduce-to-expressions}
        Let $\cA \in \NRA[\Sigma,\A]$.
        There exists a data regular expression $E(\cA)\in\DRE[\Sigma,\A]$ such that $\lang{\cA}=\lang{E(\cA)}$.
    \end{lemma}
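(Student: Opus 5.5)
Following the outline in the proofs section, the plan is to separate the finite control of $\cA$ from its data consistency. Let $\cA=(Q,\Sigma,\A,\cR,Q_\ini,Q_\fin,\delta)$ with $\cR=\set{r_1,\dots,r_k}$.

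\textbf{Step 1 (finite control).} Consider the finite alphabet $\delta$ of transition rules and the regular language $L_\delta\subseteq\delta^*$ of rule sequences $t_1\cdots t_n$ that form a path from $Q_\ini$ to $Q_\fin$ in the graph $(Q,\delta)$. By \cref{thm:kleene-regular} there is a regular expression $G\in\RE[\delta]$ with $\lang{G}=L_\delta$. Using the normal form of regular expressions, we can assume every starred subexpression avoids $\varepsilon$. The empty word is handled separately: it is accepted iff $Q_\ini\cap Q_\fin\neq\emptyset$.

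\textbf{Step 2 (blocks for single rules).} For each rule $t=p\xRightarrow{\sigma,\varPhi}q$, define a block language $B_t$ of data words of length $2k+1$. Its data track is $\mu(r_k)\cdots\mu(r_1)\,\alpha\,\mu'(r_1)\cdots\mu'(r_k)$ and the middle label is $\sigma$, with the other labels fixed, say to some letter. The block records that $\nu(\mu,\alpha,\mu')\models\varPhi$. Since $B_t$ is defined by a quantifier-free condition on a tuple of fixed length, it is a finite union of regions, hence expressible as $[w_1]+\dots+[w_m]$.

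The empty register value $\emptyReg$ does not live in $\A$, so it must be encoded. I would index blocks additionally by the set of registers that are empty before and after the step; these sets are then checked to be consistent along the control path, which is again a regular condition. Empty registers are then treated in the block as being absent from the tuple. This may change lengths, so it is simpler to pad: represent an empty register by a copy of the current datum $\alpha$, with a flag remembered in the rule alphabet. The only place where the register values are genuinely unconstrained is the initial configuration.

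\textbf{Step 3 (contract the blocks).} Define the homomorphism-like translation $\Tr$ by
\[
\Tr(t)=B_t,\quad \Tr(G+H)=\Tr(G)+\Tr(H),\quad \Tr(G\cdot H)=\Tr(G)\cdot_k\Tr(H),\quad \Tr(G^*)=\Tr(G)^{k,*}.
\]
Under $\cdot_k$, the suffix $\mu'(r_1)\cdots\mu'(r_k)$ of one block is identified with the reversed prefix of the next block. This identification enforces exactly that the post-valuation of step $i$ equals the pre-valuation of step $i+1$. After contraction, the concatenation of the blocks of a run of length $n$ yields a word of length $k + n + k$: it consists of a $k$-prefix $\mu_0$ reversed, then the data $\alpha_1\cdots\alpha_n$, then $\mu_n$.

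\textbf{Step 4 (strip the interface and prove correctness).} Finally, we contract away the outer $k$ letters on each side. On the left, use $P_0\cdot_k(\cdots)$ with $P_0$ the region expressing an all-empty initial valuation under the chosen encoding. On the right, use $(\cdots)\cdot_k \Pi$, where $\Pi$ is the regions of words of length $k$ (possibly after first appending $k$ arbitrary letters with a palindrome trick). This gives exactly the data words $\alpha_1\cdots\alpha_n$ with labels $\sigma_1\cdots\sigma_n$.

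Correctness is proved by induction on $G$, with the invariant
\[
\lang{\Tr(G)}=\set{\bar\mu_0^R\,(\sigma_i,\alpha_i)_i\,\bar\mu_n \suchthat \text{some run labelled by a word in } \lang{G} \text{ goes } \mu_0\to\mu_n}.
\]

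\textbf{Main obstacle.} I expect the main obstacle to be the star case together with $\varepsilon$. The palindromes in $K^{k,0}$ act as the identity interface, so $G^*$ matches provided the contraction is associative. Associativity of $\cdot_k$ holds only when all factors have length at least $2k$, which blocks satisfy. The second delicate point is the precise encoding of $\emptyReg$ together with the label track in the boundary positions. I would handle it by fixing a designated label on interface positions and checking that it is consistent on both sides.
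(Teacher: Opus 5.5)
Your plan is the paper's proof in all essentials. Both use Kleene's theorem over the rule alphabet $\delta$, and $(2k+1)$-letter blocks holding the reversed pre-valuation, the current datum and the post-valuation. Both send $\cdot\mapsto\cdot_k$ and ${}^*\mapsto{}^{k,*}$, and both strip the two outer interfaces by contracting with length-$k$ guards. The paper's version is $E(\cA)=G\cdot_k\Tr(E_\cB)\cdot_k G$, with $\Tr(\epsilon)=\Id(k)$ rather than splitting off the empty word.

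You do not need associativity as a separate fact: your invariant survives one contraction of two words of that shape under any bracketing. You should, however, let the invariant allow arbitrary labels on the zero-iteration palindromes in $\mathcal{L}(F)^{k,0}$. This is harmless, because contracting such a palindrome with a neighbouring factor returns that factor unchanged.

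The one real departure is how you treat $\emptyReg$, and here your version is the more faithful one. The paper replaces every atom $\emptyReg(\cdot)$ by $\top$ and puts the unconstrained guard $G$ on the left, which ignores the all-$\emptyReg$ initial valuation. Take a one-register automaton whose only rule is $q_0\xRightarrow{a,\ \valpre{r}=\cur}q_1$ with $q_1$ accepting. It accepts nothing, yet the paper's expression contains every $(a,\alpha)$. Conversely, a rule guarded by $\neg\emptyReg(\valpre{r})\land\cdots$ gets an empty block.

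What actually makes the correctness argument go through is what you propose:
\begin{itemize}
\item refine the rule alphabet by emptiness flags;
\item check in the regular part that the flags are consistent along the path and all-empty at the start;
\item evaluate every atom that mentions a flagged register to its fixed truth value.
\end{itemize}

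One caution about your padding. If empty registers are filled with the current datum, impose this only on the post-valuation half of a block. Better still, leave those values unconstrained, since the flags already decide every atom that mentions them. If you impose it on the pre-valuation half as well, contraction identifies the padded post-value $\alpha_i$ of block $i$ with the padded pre-value $\alpha_{i+1}$ of block $i+1$. That forces $\alpha_i=\alpha_{i+1}$ whenever some register is empty in $\mu_i$ with $0<i<n$.
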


    \begin{proof}
        Fix an $\cA=(Q,\Sigma,\A,\cR,Q_\ini,Q_{\fin},\delta) \in \NRA_k[\Sigma, \A]$.
        Fix once and for all an enumeration $\cR=\set{r_1,\dots,r_k}$.
        Also fix a distinguished \emph{padding letter} $\pad\in\Sigma$ (assume $\Sigma\neq\emptyset$).

        \namedparagraph{The transition-label NFA.}
        Let $\Gamma\coloneqq\delta$ be the finite set of transition rules of $\cA$.
        Define $\cB = (Q,\Gamma,Q_\ini,Q_{\fin},\delta') \in \NFA[\Gamma]$,
        where $\delta' \subseteq Q \times\Gamma\times Q$ contains the transition
        $p \xrightarrow{t} q$ iff $t\in\delta$ is the rule $p \xRightarrow{\sigma,\varPhi} q$ for some $\sigma\in\Sigma$ and $\varPhi\in\constr_\A{\cR}$.
        Thus $\cB$ reads words over alphabet $\Gamma$ and accepts exactly the \emph{control-consistent} sequences of rules:
        \[
            \lang{\cB}=\{\pi \in\delta^* \mid \text{$\pi$ is a path from some $q_\ini \in Q_\ini$ to some $q_\fin \in Q_{\fin}$}\}.
        \]
        By Kleene's theorem (\cref{thm:kleene-regular}), there exists a (classical) regular expression $E_\cB\in\RE[\Gamma]$ such that $\lang{E_\cB}=\lang{\cB}$.

        \namedparagraph{Unions of regions.}
        Observe that the set of quantifier-free regions of words in $\A^m$ is finite.
        Let $\Rep_m$ be the set of representatives of these regions.
        Let $\psi(x_1,\dots,x_m)$ be a quantifier-free first-order formula over $\A$,
        with free variables among $x_1,\dots,x_m$.
        For $w \in \Sigma^m$ define $\any{m,\psi}\in\DRE[\Sigma,\A]$ as the union of quantifier-free regions~satisfying~$\psi$
        \begin{align*}
            \any{w,\psi}
            \coloneqq
            \textstyle\sum_{\substack{\varpi \in \Rep_m \\ \A\models\psi(\varpi)}}
            \Bigl[\Conv{w, \varpi}\Bigr].
        \end{align*}
        Finally, define \emph{guard} $G \coloneqq \any{\pad^k,\top}$.

        \namedparagraph{Encoding transition rules.}
        Let $t\in\delta$ be a rule $p \xRightarrow{\sigma,\varPhi} q$.
        We convert $\varPhi$ into a formula $\hat{\varPhi}(x_0,\dots,x_{2k})$ by substituting: $\valpre{r_i}\mapsto x_{k-i},\ \cur\mapsto x_{k},\ \valpost{r_i}\mapsto x_{k+i}$ for all $i\in\rangeOne{k}$, and replacing every atomic formula $\emptyReg(r)$ by $\top$.
        Intuitively, the first $k$ variables encode, in reverse order, the register valuation before taking $t$, and the last $k$ encode the valuation after taking $t$.
        Initial live intervals extend to the very beginning of the encoding, eliminating the need for $\emptyReg$.
        Define the data expression associated with $t$ as
        \[
            \Block(t) \coloneqq \any{\pad^k\sigma\pad^k,\hat{\varPhi}}\ \in\DRE[\Sigma,\A].
        \]
        Intuitively, $\Block(t)$ consists of all $(2k+1)$-letter blocks whose middle symbol has finite letter $\sigma$,
        whose padding letters are $\pad$, and whose data values satisfy the constraint $\varPhi$ when interpreted
        as encoding of some valuation $\mu$, current data value, and encoding of some $\mu'$.

        \namedparagraph{Interpreting $\bm{\varepsilon}$ as the neutral element of $\bm{(\cdot_k)}$.}
        We will translate $E_\cB$ into a $k$-contracting data expression; for this we need an expression that plays the role of $\varepsilon$.
        Define the formula
        \[
            \pal_k{x_1,\dots,x_{2k}} \coloneqq \textstyle\bigwedge_{i\in\rangeOne{k}} x_i = x_{2k-i+1}
        \]
        and set $\Id(k) \coloneqq \any{\pad^{2k},\pal_k{}}$.
        It is easy to see that $\lang{\Id(k)}$ is the set
        \begin{align*}
            \set{\Conv{\pad^k,\varpi}\Conv{\pad^k,\varpi}^R \suchthat \varpi\in\A^k}\,.
        \end{align*}

        \namedparagraph{Translating $\bm{E_\cB}$ into a data regular expression.}
        Define a translation $\Tr(\cdot)$ from classical regular expressions over $\Gamma$ to data expressions in $\DRE[\Sigma,\A]$ by:
        \begin{align*}
            \Tr(\emptyset) &\coloneqq \emptyset & \Tr(E+F) &\coloneqq \Tr(E)+\Tr(F)\\
            \Tr(\epsilon) &\coloneqq \Id(k) & \Tr(E\cdot F) &\coloneqq \Tr(E)\cdot_k \Tr(F)\\
            \Tr(t) &\coloneqq \Block(t)\quad(t\in\Gamma) & \Tr(E^*) &\coloneqq \Tr(E)^{k,*}\,.
        \end{align*}
        Finally, let $E(\cA) \coloneqq G \cdot_k \Tr(E_\cB) \cdot_k G$.

        \namedparagraph{Correctness.}
        We prove both inclusions of $\lang{E(\cA)}=\lang{\cA}$.

        \proofDir{Direction $\lang{\cA}\subseteq\lang{E(\cA)}$.}
        Let $w=(\sigma_1,\alpha_1)\cdots(\sigma_n,\alpha_n)\in(\Sigma\times \A)^*$ be accepted by $\cA$.
        Fix an accepting run
        \[
            (Q_\ini,\mu_0)\xrightarrow{\sigma_1,\alpha_1}(q_1,\mu_1)\xrightarrow{\sigma_2,\alpha_2}\cdots
            \xrightarrow{\sigma_n,\alpha_n}(q_n,\mu_n),
        \]
        where $\mu_0(r)=\emptyReg$ for all $r\in\cR$ and $q_n\in Q_{\fin}$.
        Fix the corresponding sequence $t_1 t_2 \cdots t_n$ of transition rules.
        By construction, it belongs to $\lang{\cB} = \lang{E_\cB}$, and for every $i \in \rangeOne{n}$ we have
        \begin{align*}
            t_i = (q_{i-1}\xRightarrow{\sigma_i,\varPhi_i} q_i)
            \quad\text{with}\quad
            \nu(\mu_{i-1},\alpha_i,\mu_i)\models\varPhi_i.
        \end{align*}
        Consider, for each $i$, the $(2k+1)$-block
        \begin{align*}
            B_i \coloneqq
            \Conv{\pad^k\,\sigma_i\,\pad^k,\ \
                \mu_{i-1}(r_k)\cdots\mu_{i-1}(r_1)\ \alpha_i\ \mu_i(r_1)\cdots\mu_i(r_k)}\,.
        \end{align*}
        By construction and the definition of $\hat{\varPhi_i}$, we have $B_i\in\lang{\Block(t_i)}$.
        Moreover, the $k$-contracting concatenation precisely matches post- and pre-valuations:
        the last $k$ symbols of $B_i$ are
        $\Conv{(\pad^k,\mu_i(r_1)\cdots\mu_i(r_k))}$,
        which, after reversal, equals the first $k$ symbols of $B_{i+1}$.
        Hence the word $W \coloneqq B_1 \cdot_k B_2 \cdot_k \ldots \cdot_k B_n$ is well-defined and satisfies $W\in\lang{\Block(t_1)\cdot_k\cdots\cdot_k\Block(t_n)}\subseteq\Tr(E_\cB)$.
        A direct computation of $k$-contractions yields
        \[
            W =
            \Conv{\pad^k,\varpi'} \cdot
            w \cdot
            \Conv{\pad^k,\varpi'}
        \]
        for some $\varpi', \varpi'' \in \A^k$.
        By definition of the guard $G$, $w\in \lang[\big]{G \cdot_k \Tr(E_\cB) \cdot_k G}=\lang{E(\cA)}$.

        \proofDir{Direction $\lang{E(\cA)}\subseteq\lang{\cA}$.}
        Let $w\in\lang{E(\cA)}$.
        By the semantics of $\cdot_k$, there exist words
        $u, v\in\lang{G}$ and $W\in\Tr(E_\cB)$
        such that $w = u\cdot_k W \cdot_k v$.
        Since $\lang{G} = (\set{\pad}\times \A)^k$,
        the effect of the two outer $k$-contractions is to remove a length-$k$ prefix and a length-$k$ suffix of $W$.
        In particular, $W = u^R w v^R$.

        Now, since $W\in\Tr(E_\cB)$, unfolding the definition of $\Tr(\cdot)$ yields
        \begin{align*}
            W = u_0^R a_1 u_1\ \  \cdot_k\ \  u_1^R a_2 u_2 \ \ \cdot_k\ \  u_2^R \cdots u_{n-1}\ \  \cdot_k\ \  u_{n-1}^R a_n u_n
        \end{align*}
        for some $u_i \in (\set{\pad} \times \A)^k$ such that $B_i \coloneqq u_{i-1}^R a_i u_i \in \lang{\Block(t_i)}$ for $i \in \rangeOne{n}$.
        and some sequence $t_1 \cdots t_n \in \lang{\cB}$ of transition rules beginning in some initial $q_{\ini} \in Q_\ini$ and ending in some accepting $q_\fin \in Q_\fin$.
        Each block $B_i$ determines (by its data track) a triple of register valuations and a current datum
        $(\mu_{i-1},\alpha_i,\mu_i)$ such that $\nu(\mu_{i-1},\alpha_i,\mu_i)\models\varPhi_i$ for the constraint of $t_i$ (potentially after replacing some data values by $\emptyRegElement$).

        Thus we obtain a well-defined accepting run of $\cA$ on the word $w = a_1 a_2 \cdots a_n$:
        \[
            (q_0,\mu_0)\xrightarrow{\sigma_1,\alpha_1}(q_1,\mu_1)\cdots\xrightarrow{\sigma_n,\alpha_n}(q_n,\mu_n)\,,
        \]
        where $\mu_0$ is the initial valuation, $q_0 = q_{\ini}$ and $q_n = q_\fin$.
        Hence $w\in\lang{\cA}$.
    \end{proof}

    \section{Logic--automata equivalence}
    \label{sec:logic--automata-equivalence}
    For the rest of this section, fix a finite alphabet $\Sigma$ and atoms $\A = (U, R_1, \ldots, R_\ell)$.

        \subsection{From formulas to automata}
        \label{subsec:from-formulas-to-automata}
        In this section we prove one direction of \cref{thm:logic-NRA-equivalence},
captured by the following lemma.

\begin{lemma}
    \label{lem:logic-reduces-to-NRA}
    For every sentence $\varphi_0 \in \SMSO$, there exists an
    $\cA \in \NRA[\Sigma,\A]$ such that $\lang{\varphi_0} = \lang{\cA}$.
\end{lemma}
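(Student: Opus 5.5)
We compile $\varphi_0$ into a finite-alphabet MSO formula over an expanded alphabet, apply \cref{thm:buchi-elgot-trakhtenbrot}, and then recover data tests by a projection-of-product construction performed by an \NRA. This follows the outline in \cref{sec:proofs}.

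\textbf{Step 1: top-level variables become witness tracks.} First put $\varphi_0$ in a normal form. Push negations inward only through connectives that do not cross a quantifier binding a top-level variable. Rename variables apart, so that each first-order variable is bound exactly once and is either top-level or nested. For each top-level variable $x$, collected in a set $\VarIni$, introduce an auxiliary data track $t_x$ and a label track marking the chosen position. The intended meaning is as follows. Within each scope in which the quantifier $\exists x$ is evaluated, the track $t_x$ is constant on that scope and equals $\varpi[p]$, where $p$ is the chosen witness position. Because $x$ is top-level, its binder never lies under negation. Hence for each scope the witness can be fixed once, existentially, and uniformly for all evaluations reaching that scope. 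Scopes are determined by the segment modalities above the binder, and sibling scopes are disjoint intervals, so the witnesses for different scopes do not interfere and fit on a single track. Let $\eta_x$ be the constraint stating that $t_x$ is constant on each relevant scope and agrees with the data value at the marked witness position. Then every data atom $R(\v x)$, which by \cref{cond:SMSO-condition} has at most one nested variable $y$, can be rewritten as a test of $R$ on the tuple $(\varpi[y], t_{x_1}[y], \ldots)$. This is a test at the single position $y$. Call the resulting formulas \emph{locally testing}.

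\textbf{Step 2: eliminate the segment modality.} Replace $\segment{X}\psi$ by an MSO formula that relativises $\psi$ to every maximal block between consecutive cut points of $X$ within the current scope. Concretely, universally quantify over a pair of positions $a,b$ that are consecutive split points, and relativise all quantifiers of $\psi$ to $\range{a}{b-1}$. Nested scopes are handled by relativising to intervals defined by first-order parameters. The result is an ordinary MSO formula $F(\varphi_0)$ over the word, augmented with witness marks and track accesses, whose data atoms are all locally testing. Expand the alphabet to $\Sigma''$, consisting of the $\Sigma$-label, the marks, and the quantifier-free type of the tuple $(\varpi[i], (t_x[i])_{x})$ over $\A$. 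There are finitely many such types. Then $F(\varphi_0)$ becomes an $\MSO[\Sigma'']$ formula, and by \cref{thm:buchi-elgot-trakhtenbrot} it is recognised by an NFA.

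\textbf{Step 3: build the \NRA.} Use one register per witness track. At each position the automaton does the following:
\begin{enumerate*}[label=(\roman*)]
\item it guesses the current value of each $t_x$, keeping the register unchanged inside a segment and allowing a fresh guess at segment boundaries, which are guessed via the marks;
\item it checks equality with the input datum at the marked witness position, enforcing $\eta_x$;
\item it guesses the $\Sigma''$-letter and verifies that the type component is consistent with the quantifier-free constraint on $\cur$ and the registers;
\item it simulates the NFA.
\end{enumerate*}
Correctness then reduces to showing that $\lang{\varphi_0}$ equals the projection of $\lang{F(\varphi_0)}\cap\lang{\bigwedge_{x \in \VarIni}\eta_x}$.

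\textbf{Main obstacle.} The hard part is Step 1: justifying that replacing a top-level existential by a per-scope witness track is sound. The issue arises when a top-level binder sits beneath several nested $\segmentSymbol$ modalities and other existentials, so that the choice of witness may depend on enclosing bound variables. To handle this, first Skolemise the dependency. Top-level existentials under existentials are fine, since all the outer choices are themselves existential and can be fixed globally. The only universal-like branching that can occur above a top-level variable is the conjunction over subscopes in $\segment{X}$, and that branching is exactly what the per-scope constancy of $t_x$ captures. I would prove the equivalence by induction on the formula, carrying as an invariant a valuation together with the current scope.
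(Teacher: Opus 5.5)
Your proposal takes essentially the same route as the paper. The paper likewise uses per-scope constant witness tracks for top-level variables (its $u_x$/$v_x$ with $\eta_x$), unfolds the segment modality into MSO quantification over subscopes so that every data atom becomes locally testing, folds quantifier-free types into a finite alphabet to apply B\"uchi--Elgot--Trakhtenbrot, and builds an NRA that guesses the tracks in registers and enforces constancy between boundary marks. Two details are glossed and should be made explicit: the translation of a top-level quantifier must itself force the boundary marks to lie exactly at the left end of the current scope and the track to agree with $\varpi$ at the witness (the paper's $\theta_{x,S}$), since otherwise the automaton could re-guess mid-scope and soundness fails; and atoms with no nested variable must be tested at some position of the current scope, which your rewriting ``at position $y$'' does not cover.
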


For the rest of the section fix a sentence $\varphi_0 \in \SMSO$.
Without loss of generality, all variables bound by quantifiers in $\varphi_0$
are pairwise distinct.
Let $\Var$ be their set and write
$\Var = \VarIni \disjointUnion \mkern2mu\VarOther$,
where $\VarIni$ consists of the top-level variables.
Set $N \coloneqq |\VarIni|$.

\begin{remark}
    \label{rem:at-most-one-nested}
    By \cref{cond:SMSO-condition}, every atomic formula of the form
    $R(x_1,\ldots,x_r)$ in $\varphi_0$ contains at most one variable from
    $\VarOther$.
\end{remark}

\namedparagraph{A multi-track alphabet.}
In this section, we work with formulas of $\SMSO[\Sigma',\A']$, where
\begin{align*}
    \Sigma' &\coloneqq \Sigma \times \{\ubegin,\utail\}^N,
    &
    \A' &\coloneqq \A^{(N+1)}.
\end{align*}
Here, $\A^{(N+1)}$ denotes the structure with universe $U^{N+1}$ and an expanded
signature: for every relation symbol $R$ of $\A$ of arity $r$ and every tuple
$(i_1,\ldots,i_r) \in \rangeOne{N+1}^r$, the signature of $\A'$ contains a
relation symbol $R^{(i_1,\ldots,i_r)}$ that tests $R$ on the corresponding
coordinates.

Given a word $\Conv{w, \varpi} \in (\Sigma \times \A)^*$ and families of words
$\vu\colon \VarIni \to \B^*$ and $\vv\colon \VarIni \to \A^*$ of the same length,
we write $\modelConv{w,\vu,\varpi,\vv}$ for the data word structure
$\modelConv{\Conv{w,\vu},\Conv{\varpi,\vv}}$.
It is convenient to view such structures as \emph{multi-track} words.

We refer to the components of $\Sigma'$ and $\A'$ as follows:
\begin{itemize}
    \item finite tracks: $w$ and $u_x$ for $x\in\VarIni$,
    \item infinite tracks: $\varpi$ and $v_x$ for $x\in\VarIni$.
\end{itemize}
Assuming fixed bijections between track names and tuple indices, we use the
following syntactic sugar:
\begin{align*}
    R(t_1[x_1],\ldots,t_r[x_r])
    &\equiv R^{(i_1,\ldots,i_r)}(x_1,\ldots,x_r), &
    t[x] = a
    &\equiv a_1(x)\lor\cdots\lor a_n(x),
\end{align*}
where each $t_j$ is the track corresponding to index $i_j$, $t$ is a finite
track, $a\in\Sigma\cup\{\ubegin,\utail\}$, and $\{a_1,\ldots,a_n\}$ is the set of
letters of $\Sigma'$ whose coordinate for $t$ equals~$a$.

This notation can be read either as syntactic sugar for $\SMSO[\Sigma',\A']$ or,
equivalently, as a mild multi-sorted extension with \emph{track-accessor} terms
$t[x]$.
We freely switch between these two viewpoints.

\namedparagraph{Preparatory definitions.}
For every $x \in \VarIni$ define the formulas
\begin{align*}
    \eta_x &\coloneqq
    {\underbrace{\forall{s,t}
    (\functionT{Succ}{s,t} \land u_x[t]=\utail)
    \rightarrow
    v_x[s]=v_x[t]}_{\substack{\text{For all $s,s+1$, if $u_x[s+1]=\utail$ then $v_x$ is constant.}}}} \land
    {\underbrace{\forall{s} \functionT{First}{s} \rightarrow u_x[s] = \ubegin}_{\text{The track $u_x$ starts with $\ubegin$.}}}\,,
    \\[2pt]
    \theta_{x,S} &\coloneqq {\underbrace{v_x[x] = \varpi[x]}_{
        \mathclap{\substack{\text{The value $\varpi[x]$ is also} \\ \text{present on the $v_x$ track.}}}
    }} \land {\underbrace{\forall{s} S(s) \rightarrow (u_x[s] = \ubegin \leftrightarrow \functionT{FirstIn}{s, S})}_{
        \substack{\text{On the scope $S$, the track $u_x$ forms}\\\text{a word in $\ubegin \utail^*$.}}
    }}\,.
\end{align*}
where $\functionT{Succ}{s,t} \equiv \neg \exists{p}. s < p \land p < t$, $\functionT{First}{s} \equiv \neg \exists{p}. p < s$ and $\functionT{FirstIn}{s, S} \equiv \neg \exists{p}. S(p) \land p < s$.
We call a formula $\varphi$ \emph{locally testing} if every data atomic formula
present in $\varphi$ is of the form $R(t_1[x],\ldots,t_r[x])$ for some variable
$x$; that is, $\varphi$ never compares data values from different positions of the
word.

\begin{lemma}[Subscopes are MSO-definable]
    \label{lem:subscope-is-mso-definable}
    There is an \MSO formula $\functionT{Subs}{S',X,S}$ such that for every data word structure $fS$
    and valuation $\nu$ the satisfaction relation
    $fS,\nu \models \functionT{Subs}{S',X,S}$ holds if and only if
    $\nu(S)$ is a nonempty range and $\nu(S') \in \Subscopes{\nu(S), \nu(X)}$.
\end{lemma}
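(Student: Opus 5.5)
We write the formula down explicitly as a conjunction of two parts: one saying that $S$ is a nonempty range, and one saying that $S'$ is a maximal block of $S$ not internally cut by $X$. All quantifiers range over the whole word (the ambient scope $\PosOf{w}$), so the formula is plain \MSO with no data atoms and no use of $\segmentSymbol$. Well-formedness is therefore trivial.

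\textbf{Step 1: $S$ is a nonempty range.} Set
\[
\functionT{Range}{S} \coloneqq \exists{p} S(p) \land \forall{p}\forall{q}\forall{r} \bigl(S(p)\land S(r)\land p<q\land q<r \rightarrow S(q)\bigr).
\]
This holds exactly when $S$ is nonempty and convex, i.e.\ a range $\range{a}{b}$.

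\textbf{Step 2: $S'$ is a subscope.} We unfold the definition of $\Subscopes{I,X}$ with $I=\nu(S)=\range{a}{b}$. A set $I'=\range{a'}{b'}$ is a subscope iff:
\begin{enumerate}[label=(\roman*)]
\item $I'$ is a nonempty range with $I'\subseteq I$;
\item its left endpoint $a'$ equals $a$ or lies in $X$;
\item no position $p\in I'$ with $p\neq a'$ lies in $X$;
\item its right endpoint $b'$ equals $b$, or $b'+1\in X\cap I$.
\end{enumerate}
Each clause is directly expressible using $\functionT{FirstIn}{\cdot,\cdot}$, an analogous $\functionT{LastIn}{\cdot,\cdot}$, and $\functionT{Succ}{\cdot,\cdot}$ as defined above. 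Concretely, the formula is
\[
\functionT{Range}{S'} \land S'\subseteq S \land \forall{p}\bigl(S'(p)\land \functionT{FirstIn}{p,S'} \rightarrow (\functionT{FirstIn}{p,S}\lor X(p))\bigr) \land \forall{p}\bigl(S'(p)\land\neg\functionT{FirstIn}{p,S'}\rightarrow\neg X(p)\bigr) \land \forall{p}\bigl(S'(p)\land\functionT{LastIn}{p,S'}\rightarrow(\functionT{LastIn}{p,S}\lor\exists{q}(\functionT{Succ}{p,q}\land S(q)\land X(q)))\bigr).
\]
We take $\functionT{Subs}{S',X,S}$ to be the conjunction of $\functionT{Range}{S}$ with this formula.

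\textbf{Step 3: correctness.} The only real content is checking that (i)--(iv) characterise the consecutive pairs of the sorted set $(X\cap I)\cup\set{a,b+1}$.

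For the forward direction, suppose $a',b'+1$ are consecutive split points. Then $a'\in\set{a}\cup(X\cap I)$, which gives (ii). There is no split point strictly between $a'$ and $b'+1$, which gives (iii). Finally $b'+1=b+1$ or $b'+1\in X\cap I$, which gives (iv).

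For the converse, (ii) says $a'$ is a split point and (iv) says $b'+1$ is a split point. Clause (iii), together with the fact that $b+1\notin I$ and that positions of $I'$ other than $a'$ are never $a$, says that no split point lies strictly between them.

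Two details need care, and this is the main (minor) obstacle:
\begin{itemize}
\item A point of $X$ that coincides with $a$ is harmless. It is handled because (ii) is a disjunction.
\item $X$ may contain points outside $I$. These are handled by requiring $S(q)$ in (iv), and by (iii) only looking inside $S'\subseteq S$.
\end{itemize}
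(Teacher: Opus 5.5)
The paper states this lemma without proof and treats it as routine, so there is no argument to compare against. Your direct construction is the natural proof and is correct in substance: you require convexity of $S$ and $S'$, and you check clauses (i)--(iv) against consecutive split points.

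One concrete repair is needed. You use $\mathrm{Succ}$ ``as defined above'', but the paper's definition $\mathrm{Succ}(s,t)\equiv\neg\exists p.\,(s<p\land p<t)$ lacks the conjunct $s<t$. It therefore holds vacuously for every $t\le s$. Read literally, your clause (iv) is then witnessed by any $q\le b'$ lying in $\nu(S)\cap\nu(X)$. For example, take $\nu(S)=[1,10]$ and $\nu(X)=\{3\}$. Your formula accepts $\nu(S')=[3,5]$ via $q=3$, although the only subscopes are $[1,2]$ and $[3,10]$.

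Writing the witness as $\exists q\,\bigl(p<q\land\neg\exists r\,(p<r\land r<q)\land S(q)\land X(q)\bigr)$ fixes this. Your Step 3 then goes through as written. With the true successor, the conjunct $S(q)$ is in fact redundant, since $b'<b$ already forces $b'+1\in\nu(S)$.

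In the forward direction you should also say a word about (i), which you currently skip. It holds because every split point lies in $[a,b+1]$ and $a'<b'+1$.
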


\namedparagraph{Formula translation.}
    For the proof of \cref{lem:formula-translation-desired-property}, we define a
    translation $F(\varphi)$ that maps each subformula $\varphi$ of $\varphi_0$
    to a formula of $\SMSO[\Sigma',\A']$ that
    \begin{itemize}
        \item is locally testing, and
        \item does not contain the scope modality $\segmentSymbol$.
    \end{itemize}
    We define $F$ by structural recursion on~$\varphi$, using a helper function
    $G(\varphi, S)$ that takes an additional scope variable as a parameter.
    Consider a data atomic formula $\varphi \equiv R(x_1,\ldots,x_n)$.
    By \cref{rem:at-most-one-nested},
    $\lvert \set{x_1,\ldots,x_n}\cap\VarOther\rvert \le 1$.
    We set
    \begin{align}
        G(\varphi,S) &\coloneqq
        \begin{cases}
            R(t_{1,x},\ldots,t_{n,x}) &
            \text{if $\{x_1,\ldots,x_n\}\cap\VarOther= \set{x}$,}\\
            \exists{z}. z\in S \land R(t_{1,z},\ldots,t_{n,z}) &
            \text{if $\{x_1,\ldots,x_n\}\cap\VarOther=\emptyset$,}
        \end{cases}
        \intertext{where $z$ is fresh and $t_{i,y}$ is the track-accessor term
            $\varpi[y]$ when the variable symbols $x_i$ and $y$ coincide, and
            $v_{x_i}[y]$ otherwise. For the remaining constructs, define:}
        G(a(x),S) &\coloneqq w[x]=a,\\
        G(X(y),S) &\coloneqq X(y),\\
        G(x < y,S) &\coloneqq x < y,\\
        G(\gamma\land\gamma',S) &\coloneqq G(\gamma,S)\land G(\gamma',S),\\
        G(\gamma\lor\gamma',S) &\coloneqq G(\gamma,S)\lor G(\gamma',S),\\
        G(\neg\gamma,S) &\coloneqq \neg G(\gamma,S),\\
        G(\exists{x}\gamma,S) &\coloneqq
        \begin{cases}
            \exists{x}. x\in S \land G(\gamma,S) \land \theta_{x, S} &\text{if $x \in \VarIni$}\\
            \exists{x}. x\in S \land G(\gamma,S) &\text{otherwise,}
        \end{cases}\\
        G(\exists{X}\gamma,S) &\coloneqq \exists{X}. X\subseteq S \land G(\gamma,S),\\
        G(\segment{X}\gamma,S) &\coloneqq
        \forall{S'}. \functionT{Subs}{S', X, S} \limplies G(\gamma,S'),\quad \text{where $S'$ is a fresh SO variable.}
    \end{align}
    Finally, define
    \begin{align*}
        F(\varphi_0) \coloneqq
        \begin{cases}
        {\neg\functionT{Empty}{}} \land (\exists{S}. G(\varphi_0, S) \land \forall{y} S(y)) \lor \functionT{Empty}{} & \text{if $\varepsilon \in \lang{\varphi_0}$} \\
        {\neg\functionT{Empty}{}} \land (\exists{S}. G(\varphi_0, S) \land \forall{y} S(y)) & \text{otherwise,}
        \end{cases}
    \end{align*}
    where $S$ and $y$ are fresh variable symbols, and
    $\functionT{Empty}{} \equiv \neg\exists{x}. x = x$.

The proof of \cref{lem:logic-reduces-to-NRA} breaks down into the following
lemmas.

\begin{lemma}
    \label{lem:formula-translation-desired-property}
    A word $\Conv{w, \varpi} \in (\Sigma \times \A)^*$ belongs to $\lang{\varphi_0}$
    if and only if there exist families $\vu\colon {\VarIni} \to \B^*$ and
    $\vv\colon \VarIni \to \A^*$ such that
    $\Conv{{w, \vu, \varpi, \vv}} \in \lang{F(\varphi_0)} \cap
    \lang[\big]{\textstyle\bigwedge_{x \in \VarIni}\eta_x}$.
\end{lemma}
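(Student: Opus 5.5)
The plan is to prove a single invariant relating $\varphi$ and $G(\varphi,S)$ by structural induction on subformulas of $\varphi_0$, and then specialise it to $F(\varphi_0)$. The invariant compares two satisfaction relations. The first is $\modelConv{w,\varpi},\nu,I \models \varphi$ with scope $I$. The second is $\modelConv{w,\vu,\varpi,\vv},\nu[S\mapsto I] \models G(\varphi,S)$, evaluated over the full position set, so that only $S$ carries the scope. The tracks are intended to work as follows. On each scope in which a top-level variable $x$ gets bound, the track $u_x$ reads $\ubegin\utail^*$. By $\eta_x$, the track $v_x$ is then constant on that scope, and its value is the datum $\varpi[\nu(x)]$ of the chosen witness, as enforced by $\theta_{x,S}$.

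I would state the invariant in two directions. \emph{Soundness}: if the tracks satisfy $\eta_x$ for all $x\in\VarIni$ and $G(\varphi,S)$ holds under $\nu[S\mapsto I]$, then $\varphi$ holds at scope $I$. \emph{Completeness}: if $\varphi$ holds at $I$ under $\nu$, then the tracks can be modified \emph{only on positions of $I$ and only for variables bound inside $\varphi$} so that $\eta_x$ and $G(\varphi,S)$ hold. Both directions assume a side condition: for every free top-level $x$ of $\varphi$, $v_x$ equals $\varpi[\nu(x)]$ throughout $I$.

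The atomic cases rest on this side condition and on \cref{rem:at-most-one-nested}. If the atom has exactly one nested variable $x$, then every other argument $x_i$ is top-level and free, so $v_{x_i}[x]=\varpi[\nu(x_i)]$ because $\nu(x)\in I$. If there is no nested variable, all arguments can be read at any position $z\in I$, and $I\neq\emptyset$ supplies such a $z$. The cases $\exists X$ and $\exists x$ for nested $x$ are direct, since quantifiers are relativised to $S$.

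For a top-level $\exists x$, soundness uses $\theta_{x,S}$ together with $\eta_x$: the block $\ubegin\utail^*$ on $I$ makes $v_x$ constant on $I$ and equal to $\varpi[x]$, which re-establishes the side condition. For completeness we pick a witness $p$, write the block $\ubegin\utail^*$ into $u_x$ on $I$, and set $v_x:=\varpi[p]$ on $I$. Writing $\ubegin$ at the start of $I$ decouples this block from the values outside $I$, so $\eta_x$ survives.

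For $\segment{X}$, \cref{lem:subscope-is-mso-definable} makes the $S'$ range exactly over $\Subscopes{I,\nu(X)}$. Completeness applies the induction hypothesis independently on each subscope. Because these modifications are confined to disjoint subscopes, they commute and glue, and the side condition for outer variables still holds on subscopes, since they are subsets of $I$.

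Negation is the main obstacle. For $\neg\gamma$ we need $G(\gamma,S)$ to be \emph{independent} of the tracks of variables bound inside $\gamma$. These variables are nested, with one exception: top-level variables under a $\segmentSymbol$ that is itself under negation. That exception cannot arise, because top-level means not under any negation, and $\segmentSymbol$ preserves this status. Hence every variable bound inside $\gamma$ is nested, so $G(\gamma,S)$ mentions no $u_y,v_y$ for bound $y$ and no $\theta$. Its truth is therefore determined by $\varpi$, $\nu$, and the side condition, and the two directions swap cleanly.

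Finally, for $F(\varphi_0)$ we take $S$ to be all positions and the empty valuation, and handle the empty word separately via $\functionT{Empty}{}$. This gives the lemma: for ``only if'' we start from arbitrary tracks, for instance $u_x=\ubegin\utail^*$ with constant $v_x$, and apply completeness; for ``if'' we apply soundness.
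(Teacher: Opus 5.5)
Your architecture matches the paper's. It derives this lemma from the two-claim invariant of \cref{lem:formula-translation-fixed}, proved by structural induction, with $\ubegin\utail^*$ blocks for bound top-level variables, gluing over subscopes, and negation handled by swapping the two claims. However, your invariant is localised to the current scope $I$ in two places where that is wrong, and in both places a step you rely on fails.

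\textbf{First gap: the side condition is too weak for the atomic case.} The condition ``$v_x=\varpi[\nu(x)]$ throughout $I$'' does not support the atomic case, where you use $\nu(z)\in I$ for the nested variable $z$. A $\segmentSymbol$ between the binder of $z$ and the atom shrinks the scope while $\nu(z)$ stays where it is. For example, in $\exists{y}\exists{X}\neg\exists{z}\,\segmentSymbol_X R(y,z)$ the atom is evaluated on a subscope $J\in\Subscopes{\PosOf{w},\nu(X)}$, and $\nu(z)$ may lie outside $J$. Yet $G(R(y,z),S')=R(v_y[z],\varpi[z])$ reads $v_y$ exactly at $\nu(z)$. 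With constancy of $v_y$ assumed only on $J$, both directions of your invariant fail for this subformula.

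The paper handles this with super-scopes. $P_{\cX}$ asks for $u_x[I_x]\in\ubegin\utail^*$ and $v_x$ constantly equal to $\varpi[\nu(x)]$ on some $I_x\supseteq I$ that contains $\nu(y)$ for every free nested $y$. The super-scope $I_x$ is taken to be the scope at which $x$ is bound; no nested variable is free there, since $x$ is top-level. It is left unchanged when passing to subscopes or under nested quantifiers.

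\textbf{Second gap: overwriting tracks on $I$ does not preserve $\eta_x$.} In completeness, overwriting $u_x,v_x$ on $I$ decouples only the left end of $I$. Suppose $u_x[\max I+1]=\utail$ and $v_x[\max I+1]\neq\varpi[p]$. Then $\eta_x$ fails at $\max I+1$, a position you are not allowed to touch. So the completeness statement you induct on is already false for a top-level $\exists{x}\psi$ on a proper prefix scope. In the $\segmentSymbol$ case, ``commute and glue'' breaks for instance for $\exists{X}\,\segmentSymbol_X(\psi_1\lor\exists{x}\psi_2)$, when $x$ is bound on $J_1$ but only $\psi_1$ holds on $J_2$.

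The paper avoids this by letting (C1) produce \emph{infix} families on $I$. These are indexed by all of $\Y{\varphi}$, with unused variables getting any $\ubegin$-initial constant block, and they are required to satisfy $\eta$ only \emph{on the infix}. The paper then asserts $G(\varphi,S)$ for \emph{every} extension consistent with them on $I$. Subscope infixes concatenate safely because each begins with $\ubegin$. Global $\eta$ is needed only at the top level, where $I=\PosOf{w}$ has no right neighbour.

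With these two reformulations, your case analysis, including the negation swap, goes through.
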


\begin{lemma}
    \label{lem:locally-testing-have-nras}
    For every locally testing $\varphi \in \SMSO[\Sigma', \A']$ that does not use
    the scope modality $\segmentSymbol$, there exists an
    $\cA \in \NRA_0[\Sigma', \A']$ such that $\lang{\varphi} = \lang{\cA}$.
\end{lemma}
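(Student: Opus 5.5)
The plan is to reduce to the finite-alphabet Büchi--Elgot--Trakhtenbrot theorem (\cref{thm:buchi-elgot-trakhtenbrot}) by replacing each position's data by its quantifier-free type. Since $\varphi$ is locally testing, every data atom has the form $R^{(i_1,\ldots,i_r)}(x,\ldots,x)$; it only inspects the tuple $\v\alpha\in U^{N+1}$ sitting at the single position $\nu(x)$. Let $T$ be the finite set of quantifier-free types of one-element tuples over $\A'$, equivalently the types of $(N+1)$-tuples over $\A$ restricted to the variables $x_1,\dots,x_{N+1}$. We pass to the finite alphabet $\Gamma \coloneqq \Sigma'\times T$. Each data word $\Conv{w',\varpi'}$ over $\Sigma'\times\A'$ is mapped to $\tau(w',\varpi')\in\Gamma^*$, obtained by replacing $\varpi'[p]$ with its type $\type{\varpi'[p]}$.

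Next, translate $\varphi$ into $\hat\varphi\in\MSO[\Gamma]$. Every atom $R^{(i_1,\ldots,i_r)}(x,\ldots,x)$ becomes $\bigvee\set{(a,\tau)(x)\suchthat a\in\Sigma',\ \tau\in T,\ R(x_{i_1},\ldots,x_{i_r})\in\tau}$. Every letter atom $a(x)$ with $a\in\Sigma'$ becomes $\bigvee_{\tau\in T}(a,\tau)(x)$. All other constructs are left unchanged. Since $\varphi$ contains no $\segmentSymbol$, its scope never changes and stays $\PosOf{w}$, so the $\SMSO$ semantics coincides with ordinary MSO semantics. A routine induction on $\varphi$ then gives $\modelConv{w',\varpi'},\nu\models\varphi$ iff $\model{\tau(w',\varpi')},\nu\models\hat\varphi$. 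The only nontrivial case is the data atom, which holds by the definition of type. Here the well-formedness condition plays no role. One must also check that negation and quantifiers commute with the translation; they do, because the position set is the same on both sides.

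By \cref{thm:buchi-elgot-trakhtenbrot}, there is an \NFA $\cB=(Q,\Gamma,Q_\ini,Q_\fin,\delta_\cB)$ with $\lang{\cB}=\lang{\hat\varphi}$. From it we build $\cA\in\NRA_0[\Sigma',\A']$ with the same states. For every transition $p\xrightarrow{(a,\tau)}q$ of $\cB$, we add the rule $p\xRightarrow{a,\varPhi_\tau}q$, where $\varPhi_\tau\coloneqq\bigwedge_{\psi\in\tau}\psi[\cur]$ expresses that $\cur$ has type $\tau$. This conjunction is a quantifier-free formula over $\A'$ in the single variable $\cur$: each literal $R(x_{i_1},\dots,x_{i_r})$ is rewritten as $R^{(i_1,\ldots,i_r)}(\cur,\ldots,\cur)$. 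Because there are no registers, a run of $\cA$ on $\Conv{w',\varpi'}$ is exactly a run of $\cB$ on $\tau(w',\varpi')$. Each $\varpi'[p]$ has exactly one type, so the choice of rule is forced up to the state component. Hence $\lang{\cA}=\tau^{-1}(\lang{\cB})=\lang{\varphi}$.

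The main obstacle is the bookkeeping around the type alphabet, not any single proof step. One must ensure that $T$ is finite, which holds since the signature is finite and there are only $N+1$ coordinates. One must also ensure that the $\emptyReg$ extension of $\A'$ is harmless; it is, because $\cA$ has no registers and only $\cur$ is constrained. Finally, the empty word must be treated consistently on both sides, but $\tau$ preserves length, so this is automatic.
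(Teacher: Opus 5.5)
Your proposal is correct and matches the paper's proof. The paper likewise enlarges the alphabet to $\Sigma' \times \mathcal{P}(\mathcal{R})$, recording which diagonal atoms of $\mathcal{A}'$ hold at each position, and translates data and letter atoms into disjunctions of letters. It then applies \cref{thm:buchi-elgot-trakhtenbrot} and turns each letter back into the complete quantifier-free constraint on the current data value in a $0$-register automaton. The only differences are cosmetic: you restrict to realisable types and start from an NFA rather than a DFA, and neither change affects the argument.
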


\begin{lemma}
    \label{lem:etas-cap-nra-projection-can-be-recognised-by-an-nra}
    For every $\cA \in \NRA_0[\Sigma', \A']$
    there is an \NRA that recognises the language
    \begin{align*}
        \set{ \Conv{w, \varpi} \suchthat \exists{\vu, \vv}. \Conv{w,\vu, \varpi, \vv} \in \lang{\cA} \cap \lang[\big]{\textstyle\bigwedge_{x \in \VarIni} \eta_x}}\,.
    \end{align*}
\end{lemma}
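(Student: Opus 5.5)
We build an \NRA $\cB$ with $N$ registers $\set{r_x \suchthat x\in\VarIni}$ whose states are those of $\cA$. The automaton $\cB$ nondeterministically guesses the extra tracks $\vu$ and $\vv$ online, and simulates $\cA$ on the expanded word. The finite tracks $\vu$ cost nothing: at each position $\cB$ guesses a vector $b\in\B^N$ and uses it as part of the letter $(\sigma,b)\in\Sigma'$ fed to $\cA$. The data tracks $\vv$ are held in registers, with $r_x$ storing the current value $v_x[p]$. The constraint $\eta_x$ says exactly that $v_x$ is constant on every maximal block of the form $\ubegin\utail^*$ of $u_x$, and that $u_x$ begins with $\ubegin$. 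Hence $\cB$ needs to change the value of $r_x$ only at positions where the guessed $u_x$-letter is $\ubegin$, and must keep it unchanged at $\utail$ positions.

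Concretely, since $\cA\in\NRA_0[\Sigma',\A']$, every rule of $\cA$ has the form $p\xRightarrow{(\sigma,b),\varPsi} q$, where $\varPsi$ is a quantifier-free formula about the single current letter of $\A'=\A^{(N+1)}$. Equivalently, $\varPsi$ is a Boolean combination of atoms $R^{(i_1,\ldots,i_r)}(\cur)$. For each such rule, $\cB$ gets a rule $p\xRightarrow{\sigma,\varPsi'} q$, where $\varPsi'$ is a conjunction of two parts:
\begin{itemize}
    \item the translation of $\varPsi$, obtained by replacing each coordinate $1$ (the $\varpi$-track) by $\cur$ and each coordinate corresponding to $x$ by $\valpost{r_x}$;
    \item for every $x$ with $b_x=\utail$, the conjunct $\valpost{r_x}=\valpre{r_x}\land\neg\emptyReg(\valpre{r_x})$.
\end{itemize}
For $b_x=\ubegin$ no restriction is imposed, so the new value of $r_x$ is guessed freely. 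This is where guessing is used, and strong guesses are permitted. Writing $\valpost{r_x}$ in place of a data value is legitimate because register constraints are quantifier-free formulas over $\cV_\cR\cup\set{\cur}$. The conjunct $\neg\emptyReg(\valpost{r_x})$ is added everywhere so that register values are genuine elements of $\A$. The requirement that $u_x$ starts with $\ubegin$ is enforced by the $\neg\emptyReg(\valpre{r_x})$ clause at $\utail$ positions, since registers are initially empty. Alternatively, one can carry a one-bit flag ``first letter read'' in the state. If $\varepsilon$ should be accepted, initial states of $\cA$ are kept accepting as before.

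Correctness then needs both inclusions.
\begin{itemize}
    \item \textbf{From words to runs of $\cB$.} Suppose $\Conv{w,\vu,\varpi,\vv}$ is accepted by $\cA$ and satisfies all $\eta_x$. Take the accepting run of $\cA$ and set $\mu_p(r_x)\coloneqq v_x[p]$. Each rule's translated constraint holds at position $p$ because $\varPsi$ held on the letter $(\varpi[p],(v_x[p])_x)$. The preservation conjuncts hold by $\eta_x$, and the non-emptiness conjuncts hold since the $v_x[p]$ lie in $\A$.
    \item \textbf{From runs of $\cB$ to words.} Given an accepting run of $\cB$ on $\Conv{w,\varpi}$, read off $u_x$ from the letters $b$ of the used rules and set $v_x[p]\coloneqq\mu_p(r_x)$. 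The preservation conjuncts give the first conjunct of $\eta_x$. The emptiness clause forces the first letter of $u_x$ to be $\ubegin$. The translated $\varPsi$ shows that the same state sequence is an accepting run of $\cA$ on the expanded word.
\end{itemize}

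The main obstacle is a bookkeeping one, not a conceptual one: the translation of the constraints must be faithful. This requires checking that the expanded signature symbols $R^{(i_1,\ldots,i_r)}$ evaluated at a single position correspond exactly to $R$ evaluated on the matching tuple of $\cur$ and $\valpost{r_x}$. It also requires that the translation respects the post-valuation convention, so that $v_x[p]$ is the value after reading position $p$ rather than before.
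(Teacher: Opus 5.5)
Your proposal is correct and follows essentially the same construction as the paper. It uses registers $r_x$ that hold $v_x[p]$ as post-values, guesses the marker tuple through the choice of rule, translates each $\A'$-test by reading the $v_x$-coordinates as $\valpost{r_x}$, and forces $\valpre{r_x}=\valpost{r_x}$ at $\utail$ positions. The one deviation is that you replace the paper's one-bit first-position flag with $\emptyReg$-tests, and this is if anything more careful: it explicitly enforces $u_x[1]=\ubegin$, keeps register values in $\A$ rather than $\emptyRegElement$, and correctly preserves $\varepsilon$-acceptance, all of which the paper's flag-based version leaves unaddressed.
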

\Cref{lem:etas-cap-nra-projection-can-be-recognised-by-an-nra} should not be surprising, as $\eta_x$ is recognisable by NRA in a straightforward way, and the construction needs to implement intersection (product construction) and existential track projection.
We decided to provide a monolithic construction to avoid technical difficulties related to presence of $\A$ and $\A'$: register automata over $\A'$ would formally need to hold tuples of atom values in registers, which would be cumbersome.
Assuming \crefrange{lem:formula-translation-desired-property}{lem:etas-cap-nra-projection-can-be-recognised-by-an-nra},
\cref{lem:logic-reduces-to-NRA} follows by a short combination argument.

\begin{proof}[Proof of \cref{lem:logic-reduces-to-NRA}, assuming \crefrange{lem:formula-translation-desired-property}{lem:etas-cap-nra-projection-can-be-recognised-by-an-nra}]
    Fix an arbitrary word $\Conv{w, \varpi} \in (\Sigma \times \A)^*$.
    By \cref{lem:formula-translation-desired-property}, we have
    $\Conv{w,\varpi} \in \lang{\varphi_0}$ if and only if there exist families
    $\vu \in {\VarIni} \to \B^*$ and $\vv \in {\VarIni} \to \A^*$ such that
    $\Conv{{w, \vu, \varpi, \vv}} \in \lang{F(\varphi_0)} \cap
    \lang[\big]{\textstyle\bigwedge_{x \in \VarIni}\eta_x}$.

    By \cref{lem:locally-testing-have-nras}, we obtain $\cA \in \NRA_0[\Sigma', \A']$ recognising $\lang{F(\varphi_0)}$.
    Finally, \cref{lem:etas-cap-nra-projection-can-be-recognised-by-an-nra} yields $\cB \in \NRA[\Sigma, \A]$ such that $\Conv{w,\varpi}\in\lang{\cB}$ if and only if $\Conv{w,\varpi}\in\lang{\varphi_0}$.
    Hence $\lang{\cD} = \lang{\varphi_0}$, as required.
\end{proof}

\namedparagraph{Proof roadmap.}
We first prove the more straightforward lemmas
\crefrange{lem:locally-testing-have-nras}{lem:etas-cap-nra-projection-can-be-recognised-by-an-nra},
and then turn to the main technical step,
\cref{lem:formula-translation-desired-property}.

\begin{proof}[\of{\cref{lem:locally-testing-have-nras}}]
    Fix a locally testing sentence $\varphi \in \SMSO[\Sigma', \A']$ that does not
    use the modality $\segmentSymbol$.
    Our goal is to construct an $\cA \in \NRA_0[\Sigma',\A']$ such that
    $\lang{\varphi} = \lang{\cA}$.
    Equivalently, for every word $\Conv{w, \varpi}$ we want
    $\Conv{w, \varpi} \in \lang{\cA}$ if and only if
    $\modelConv{w, \varpi} \models \varphi$.
    Note that all subformulas of $\varphi$ are interpreted over the initial scope
    $\PosOf{w}$.

\namedparagraph{Translation $\bm{H}$.}
    Let $\cR$ be the set of all relation names in the signature of $\A'$.
    We extend $\Sigma'$ with quantifier-free types of data values: $\Sigma'' = \Sigma' \times \powerset{\cR}$.
    Let us define a translation $H(\psi)$ that maps subformulas of $\varphi$ to formulas of $\MSO[\Sigma'']$.
    Recall that due to locally-testing assumption, all data atomic formulas test a single position variable $x$.
    We define
    \begin{align*}
        H(R^{(i_1, \ldots, i_r)}(x, \ldots, x)) &\equiv \textstyle\bigvee_{\substack{\sigma = (a, \cR') \in \Sigma''\\R^{(i_1, \ldots, i_r)} \in \cR'}} \sigma(x) &
        H(a(x)) &\equiv \textstyle\bigvee_{\sigma = (a, \cR') \in \Sigma''} \sigma(x)\,.
    \end{align*}
    and define $H$ trivially for all the remaining kinds of formulas: $H(\psi \land \gamma) \equiv H(\psi) \land H(\gamma)$, $H(\neg\psi) \equiv \neg H(\psi)$, $H(\exists{x} \psi) = \exists{x} H(\psi)$, \ldots.

    Given a data word structure $\modelConv{w, \vu, \varpi, \vv}$ of length $L$ we define a
    word $\rho(\varpi, \vv) \in \powerset{\cR}^{L}$ such that for every position $i \in \rangeOne{L}$ the letter
    $\rho[i]$ is $\set[\big]{R \in \cR \suchthat \alpha_i^{\arity{R}} \in R}$, where $\alpha_i$ is the tuple $\alpha_i = (\varpi[i], \vv[i]) \in \A'$.

    \begin{claim}
        Fix a subformula $\psi$ of $\varphi$, and an interpretation $(\modelConv{w, \vu, \varpi, \vv}, \nu)$.
        Then $\modelConv{w, \vu, \varpi, \vv},\nu \models \varphi$ if and only if $\modelConv{w, \vu, \rho(\varpi, \vv)} \models G(\varphi)$.
    \end{claim}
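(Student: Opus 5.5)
This claim is proved by structural induction on the subformula $\psi$ of $\varphi$, for all valuations $\nu$ simultaneously. Read correctly, the claim compares $\modelConv{w,\vu,\varpi,\vv},\nu \models \psi$ with $\model{\Conv{w,\vu,\rho(\varpi,\vv)}},\nu \models H(\psi)$; the occurrences of $\varphi$ and $G$ in the displayed statement should be $\psi$ and $H$. Both structures have the same set of positions $\rangeOne{L}$ and the same order. Moreover, since $\varphi$ contains no $\segmentSymbol$, every subformula is evaluated over the full scope $\PosOf{w}$. Hence quantifiers on both sides range over the same domain, and a valuation for one structure is literally a valuation for the other. The letter at position $i$ of the expanded word is $(\Conv{w,\vu}[i],\rho[i]) \in \Sigma''$, so each letter of $\Sigma''$ records the finite label together with the full quantifier-free type of the tuple $\alpha_i$.

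The base cases are handled first. For $x<y$ and $X(y)$ the translation is the identity and the interpretations coincide. For a letter predicate $a(x)$ with $a \in \Sigma'$, the disjunction $\bigvee_{(a,\cR')\in\Sigma''}\sigma(x)$ holds iff the first component of the letter at $\nu(x)$ is $a$, which is exactly the meaning of $a(x)$. The only substantive base case is a data atom. By the locally testing assumption it has the form $R^{(i_1,\ldots,i_r)}(x,\ldots,x)$, with all arguments equal to the same variable. In the original structure it holds iff $R$ holds in $\A$ on the coordinates $i_1,\ldots,i_r$ of $\alpha_{\nu(x)}$, that is, iff $R^{(i_1,\ldots,i_r)} \in \rho[\nu(x)]$. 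This is precisely the condition that the letter at $\nu(x)$ lies in the disjunction defining $H$.

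The inductive cases are immediate because $H$ commutes with every connective. The cases $\land$, $\lor$ and $\neg$ follow directly from the induction hypothesis. For $\exists{x}$ and $\exists{X}$, both sides quantify over $\rangeOne{L}$ or its subsets, so the induction hypothesis applied to $\nu[x\gets p]$ or $\nu[X\gets P]$ closes the case.

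The only step requiring care, and the one I expect to be the main obstacle, is the data-atom case. It uses two facts: that locally testing really forces all arguments to be one variable, and that $\rho[i]$ is defined as the set of \emph{all} relation symbols of $\A'$ true at $\alpha_i$. Once the claim holds for $\psi = \varphi$ with the empty valuation, I would obtain the automaton as follows. Apply \cref{thm:buchi-elgot-trakhtenbrot} to $H(\varphi)$ to get an NFA over $\Sigma''$. Then replace each transition $p \xrightarrow{(a,\cR')} q$ by the $0$-register rule $p \xRightarrow{a,\varPhi_{\cR'}} q$, where $\varPhi_{\cR'}$ is the conjunction of the literals $R(\cur)$ for $R\in\cR'$ and $\neg R(\cur)$ for $R\notin\cR'$. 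Since $\varPhi_{\cR'}$ holds at position $i$ exactly when $\rho[i]=\cR'$, this automaton accepts exactly the data words whose expanded word is accepted by the NFA, and so it recognises $\lang{\varphi}$.
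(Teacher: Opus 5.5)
Your proposal is correct and follows the paper's proof. Both argue by structural induction in which the only substantive cases are the data atom $R^{(i_1,\ldots,i_r)}(x,\ldots,x)$, matched via $R^{(i_1,\ldots,i_r)}\in\rho(\varpi,\vv)[\nu(x)]$, and the letter predicate, with all remaining cases immediate because $H$ commutes with the connectives and quantifiers and every scope is $\PosOf{w}$. Your reading of the statement with $\psi$ and $H$ in place of $\varphi$ and $G$ is the intended one, and your subsequent $0$-register automaton construction matches the paper's.
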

    We prove the claim by structural induction.
    Fix a subformula $\psi$ of $\varphi$, and an interpretation $(\modelConv{w, \vu, \varpi, \vv}, \nu)$.
    \proofCase{1}{$\psi \equiv R^{(i_1, i_2, \ldots, i_r)}(x, x, \ldots, x)$}
    Let $i = \nu(x)$.
    The following statements are equivalent: $\modelConv{w, \vu, \varpi, \vv},\nu \models \varphi$ iff $(\varpi[i], \vv[i])^{\arity{R}} \in R$ iff $R \in \rho(\varpi, \vv)[i]$ iff $\modelConv{w, \vu, \rho(\varpi, \vv)}, \nu \models G(\psi)$.

    \proofCase{2}{$\psi \equiv a(x)$}
    Let $i = \nu(x)$.
    The following statements are equivalent:
    $\modelConv{w, \vu, \varpi, \vv},\nu \models \varphi$ iff
    $w[i] = a$ iff
    $\modelConv{w, \vu, \rho}, \nu \models G(\psi)$.

    \proofCase{3-10}{otherwise}
    All the remaining cases are trivial and were omitted.

    Let $\cA = (Q, \Sigma'', Q_\ini, Q_\fin, \delta) \in \DFA[\Sigma'']$ be an automaton for $G(\varphi)$ obtained through \cref{thm:buchi-elgot-trakhtenbrot}.
    \begin{claim}
        Let $\Conv{w, \vu, \varpi, \vv}$ be any word over $(\Sigma' \times \A')^*$.
        Then $\Conv{w, \vu, \varpi, \vv} \in \lang{\varphi}$ if and only if $\Conv{w, \vu, \rho(\varpi, \vv)} \in \lang{\cA}$.
    \end{claim}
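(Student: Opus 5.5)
The claim follows by chaining the previous claim (applied to the sentence $\varphi$ with the empty valuation) with the Büchi--Elgot--Trakhtenbrot correspondence, \cref{thm:buchi-elgot-trakhtenbrot}. The previous claim is stated with a typo: it writes $\varphi$ and $G(\varphi)$ where $\psi$ and $H(\psi)$ are meant. I will read it as: for every subformula $\psi$ and every valuation $\nu$, we have $\modelConv{w,\vu,\varpi,\vv},\nu\models\psi$ iff $\modelConv{w,\vu,\rho(\varpi,\vv)},\nu\models H(\psi)$. Here the right-hand structure is the ordinary word structure over $\Sigma''$ of the word $\Conv{w,\vu,\rho(\varpi,\vv)}$.

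\textbf{Step 1 (structural induction behind the previous claim).} For atomic formulas $R^{(i_1,\dots,i_r)}(x,\dots,x)$, I unfold $\rho$: position $i$ carries exactly the relation names satisfied by the tuple $(\varpi[i],\vv[i])$. For letter predicates, I project the $\Sigma''$-letter to its $\Sigma'$-component. For $x<y$ and $X(x)$, nothing changes. Boolean connectives are immediate. For quantifiers, both structures have the same position set $\PosOf{w}$. Because $\varphi$ contains no $\segmentSymbol$, every subformula is evaluated on the full scope, so the scoped quantifiers of $\SMSO$ range over exactly the same positions and sets as the MSO quantifiers of $H(\psi)$. Local testing is what makes the atomic case work: every data atom refers to a single position, so its truth value is a function of the letter $\rho[i]$ alone.

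\textbf{Step 2 (specialise to the sentence).} Taking $\psi=\varphi$ and $\nu=\nu_0$ gives
\[
\Conv{w,\vu,\varpi,\vv}\in\lang{\varphi}\iff \model{\Conv{w,\vu,\rho(\varpi,\vv)}}\models H(\varphi).
\]
Since $H(\varphi)\in\MSO[\Sigma'']$ and $\cA$ was chosen via \cref{thm:buchi-elgot-trakhtenbrot} with $\lang{\cA}=\lang{H(\varphi)}$, the right-hand side is equivalent to $\Conv{w,\vu,\rho(\varpi,\vv)}\in\lang{\cA}$. This proves the claim.

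\textbf{Main obstacle.} There is no deep difficulty; the work is bookkeeping. Two points need care. First, the quantifier case must be justified by the absence of $\segmentSymbol$, which keeps the scope fixed at $\PosOf{w}$. Second, the letters of $\Sigma''$ that occur in the image of $\rho$ are only the consistent types, but this does not matter for the claim, because both sides are evaluated on the actual word $\rho(\varpi,\vv)$. That consistency restriction only becomes relevant later, when $\cA$ is turned into a $0$-register NRA whose guards are quantifier-free type formulas on $\cur$.
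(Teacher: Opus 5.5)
Your proposal is correct and follows the same route as the paper. The paper gives no separate argument for this claim and obtains it exactly as you do: it specialises the preceding structural-induction claim (read, as you note, with $\psi$, $\nu$ and $H(\psi)$ in place of the mistyped $\varphi$ and $G(\varphi)$) to the sentence $\varphi$ under the empty valuation, then uses that $\cA$ is the automaton for $H(\varphi)$ given by \cref{thm:buchi-elgot-trakhtenbrot}. Your observation that the absence of $\segmentSymbol$ keeps every quantifier ranging over all of $\PosOf{w}$ is precisely what the quantifier cases, which the paper omits as trivial, rely on.
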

    Define $\cB = (Q, \Sigma', \A', Q_\ini, F, \delta') \in \NRA_0[\Sigma', \A']$ such that for every $\cR' \subseteq \cR$, $\sigma \in \Sigma'$ and $p \in Q$ relation $\delta'$ contains a rule
    $p \xRightarrow{\sigma, \Phi(\cR')} q$ where $\delta(p, (\sigma, \cR'))$ and
    \begin{align*}
        \Phi(\cR') = \textstyle\bigwedge_{R \in \cR'} R(\cur, \ldots, \cur) \land \bigwedge_{R \in \cR \setminus \cR'} \neg R(\cur, \ldots, \cur)
    \end{align*}
    and $\delta'$ has no other rules.
    It is not hard to see that for any $(\sigma, \alpha) \in \Sigma' \times \A'$ and $p,q \in Q$ automaton
    $\cA$ has a transition $p \xrightarrow{\sigma, \rho(\alpha))} q$
    if and only if $\cB$ has a transition $p \xrightarrow{\sigma, \alpha} q$.
    Therefore, by straightforward induction, we obtain that
    $\Conv{w, \vu, \rho(\varpi, \vv)} \in \lang{\cA}$ if and only if $\Conv{w, \vu, \varpi, \vv} \in \lang{\cB}$.
    Thus, $\cB$ is the postulated $\NRA_0[\Sigma', \A']$ recognising $\lang{\varphi}$.
\end{proof}

\begin{proof}[\of{\cref{lem:etas-cap-nra-projection-can-be-recognised-by-an-nra}}]
        Let $\cA=(Q,\Sigma',\A',\emptyset,Q_\ini,Q_\fin,\delta)$.
        Since $\cA$ has $0$ registers, each rule $p\xRightarrow{\sigma',\Phi}q$ tests only the current $\A'$-letter.
        We build $\cB$ over $\Sigma\times\A$ that (i) guesses the projected-away tracks $\vu,\vv$ on the fly,
        (ii) checks $\cA$ on the guessed letter, and (iii) enforces $\bigwedge_x\eta_x$ by remembering $v_x[i-1]$ in a register.

        \namedparagraph{Registers and the $\bm{\eta}$-guard.}
        Let $\cR\coloneqq \set{r_x \suchthat x\in\VarIni}$, intended so that after reading position $i$ we have $\mu_i(r_x)=v_x[i]$.
        For a marker tuple $\gamma=(\gamma_x)_{x\in\VarIni}$, where $\gamma_i \in \B$, define
        \begin{align*}
            \Eta(\gamma)
            \coloneqq
            \textstyle\bigwedge_{\substack{x\in\VarIni\\\gamma_x=\text{\raisebox{-0.5pt}{$\utail$}}}}
            \bigl(\valpre{r}_x=\valpost{r}_x\bigr)\,.
        \end{align*}
        This is exactly the one-step form of $\eta_x$: if $u_x[i]=\utail$ then $v_x[i]=v_x[i-1]$.

        \namedparagraph{Compiling $\bm{\A'}$-tests.}
        Fix a rule $p\xRightarrow{(a,\gamma),\Phi}q$ of $\cA$.
        Because $\cA$ is $0$-register,
        $\Phi$ is equivalent to some constraint over $\A$ whose free variables are $\cur$ (the input data value)
        and the components $v_x(\cur)$ of the $(N+1)$-tuple from $\A'$. Let $\tau(\Phi)\in\constr_\A{\cR}$ denote the corresponding
        register-constraint obtained by reading $v_x(\cur)$ as $\valpost{r}_x$.
        Thus $\tau(\Phi)$ checks $\cA$'s $\A'$-condition against the tuple guessed into the \emph{post} register values.

        \namedparagraph{Construction of $\bm{\cB}$.}
        We extend the states with a one-bit ``first position'' flag to avoid enforcing $\eta_x$ at $i=1$ (since $\eta_x$ is quantified over $i>1$).
        Let $Q'\coloneqq Q\times\set{0,1}$, initial states are $Q_\ini'\coloneqq Q_\ini \times \set{0}$, and accepting states $Q_\fin' \coloneqq Q_\fin \times \set{1}$.
        Define $\cB=(Q',\Sigma,\A,\cR,Q_\ini',Q_\fin',\delta_B)$ where for every rule $p\xRightarrow{(a,\gamma),\Phi}q$ of $\cA$ we include:
        \begin{align*}
        (p,0) \xRightarrow{a,\ \tau(\Phi)} (q,1)
        \qquad\text{and}\qquad
        (p,1) \xRightarrow{a,\ \tau(\Phi)\land \Eta(\gamma)} (q,1).
        \end{align*}
        Reading only $a\in\Sigma$ implements projection of $\vu$: automaton $\cB$ nondeterministically chooses the marker tuple $\gamma$ by choosing which rule to take.
        Projection of the data value track $v_x$ is done by guessing the register value $\valpost{r}_x$ at each step.

        \namedparagraph{Correctness.}
        We need to show that $\Conv{w,\varpi}\in\lang{\cB}$ if and only if there exist $\vu,\vv$ such that
        $\Conv{w,\vu,\varpi,\vv}\in\lang{\cA}\cap\lang{\bigwedge_x\eta_x}$.

        \proofDir{Direction ``$\bm{\Rightarrow}$''.} Fix an accepting run of $\cB$ on $\Conv{w,\varpi}$:
        \begin{align*}
            ((q_0,0),\mu_0)\xrightarrow{w[1],\varpi[1]}((q_1,1),\mu_1)\xrightarrow{}\cdots
            \xrightarrow{w[n],\varpi[n]}((q_n,1),\mu_n),
        \end{align*}
        with $\mu_0(r)=\emptyReg$ for all $r\in\cR$.
        Let $\vu[i]=\gamma_i$ be the marker tuple chosen by the transition at step $i$, and fix $\vv$ such that set $v_x[i]\coloneqq \mu_i(r_x)$.
        At each step $i$, the chosen rule in $\cB$ came from some rule
        $q_{i-1}\xRightarrow{(w[i],\vu[i]),\Phi_i}q_i$ of $\cA$, and $\tau(\Phi_i)$ holds with $\valpost{r}_x=v_x[i]$,
        so $\cA$ accepts $\Conv{w,\vu,\varpi,\vv}$ along the same state sequence.
        Moreover, for $i>1$, $\Eta(\vu[i])$ enforces $v_x[i]=v_x[i-1]$ whenever $u_x[i]=\utail$, hence
        $\Conv{w,\vu,\varpi,\vv}\models \bigwedge_x\eta_x$.

        \proofDir{Direction ``$\bm{\Leftarrow}$''.} Conversely, assume $\Conv{w,\vu,\varpi,\vv}\in\lang{\cA}\cap\lang{\bigwedge_x\eta_x}$ and fix an accepting run of $\cA$
        \begin{align*}
            q_0 \xRightarrow{(w[1],\vu[1]),\Phi_1} q_1 \xRightarrow{}\cdots \xRightarrow{(w[n],\vu[n]),\Phi_n} q_n.
        \end{align*}
        Define register valuations by $\mu_0(r)=\emptyReg$ and $\mu_i(r_x)=v_x[i]$.
        Since $\cA$'s rule at step $i$ is enabled on the current $\A'$-letter, $\tau(\Phi_i)$ holds in $\cB$ at step $i$.
        And for $i>1$, $\bigwedge_x\eta_x$ yields $v_x[i]=v_x[i-1]$ whenever $u_x[i]=\utail$, i.e. $\Eta(\vu[i])$ holds.
        Therefore $\cB$ can follow the same state sequence and accept $\Conv{w,\varpi}$.
        This proves the required language equality.
\end{proof}

To prove \cref{lem:formula-translation-desired-property}, which concerns the
sentence $F(\varphi_0)$, we establish the more technical
\cref{lem:formula-translation-fixed} below.
This lemma also captures the
behaviour of $G(\varphi,S)$ for subformulas $\varphi$ of $\varphi_0$, enabling an
inductive proof on the structure of $\varphi_0$.
For a subformula $\varphi$ of $\varphi_0$, we distinguish the following two
subsets of $\VarIni$:
\begin{align*}
    \X{\varphi} &\coloneqq \VarIni \cap \FreeFOVarOf{\varphi}\,, &
    \Y{\varphi} &\coloneqq \VarIni \cap \FOVarOf{\varphi} \setminus \FreeFOVarOf{\varphi}\,.
\end{align*}
Fix $L \in \N$ and a scope $I \subseteq \rangeOne{L}$.
Let $\vu\colon A \to B^L$ and $\vcu\colon A' \to B^{\size{I}}$ be families of
words, where $A' \subseteq A$.
We say that \emph{$\vu$ is consistent with $\vcu$ on $I$} if
$u_x[I] = \check{u}_x$ for every $x \in A'$.
\begin{example}
    Let $X = \set{\ttx, \tty, \ttz}$, $X' = \set{\tty, \ttz}$, and
    $I = \range{3}{5}$. Let $\v{u}$ and $\v{\check{u}}$ be the following families:
    \begin{align*}
        \Conv{\vcu} &= {\footnotesize
        {\begin{matrix}
             \check{u}_\tty\!\colon\! \\
             \check{u}_{\ttz}\!\colon\!
        \end{matrix}}\!
        \underset{1}{
            \begin{bmatrix}
                \color{red}\ttd \\
                \color{red}\ttg
            \end{bmatrix}
        }\!
        \underset{2}{
            \begin{bmatrix}
                \color{red}\tte \\
                \color{red}\tth
            \end{bmatrix}
        }\!
        \underset{3}{
            \begin{bmatrix}
                \color{red}\ttf \\
                \color{red}\tti
            \end{bmatrix}
        }}\,,
        &
        \Conv{\vu} &= {\footnotesize
        {\begin{matrix}
             u'_\ttx\!\colon\!\\
             u'_\tty\!\colon\!\\
             u'_\ttz\!\colon\!
        \end{matrix}}\!
        \underset{1}{
            \begin{bmatrix}
                \tta \\
                \ttb \\
                \ttc
            \end{bmatrix}
        }\!
        \underset{2}{
            \begin{bmatrix}
                \tta \\
                \ttb \\
                \ttc
            \end{bmatrix}
        }\!
        \smash{\underbrace{
            \underset{3}{
                \begin{bmatrix}
                    \tta \\
                    \color{red}\ttd \\
                    \color{red}\ttg
                \end{bmatrix}
            }\!
            \underset{4}{
                \begin{bmatrix}
                    \tta \\
                    \color{red}\tte \\
                    \color{red}\tth
                \end{bmatrix}
            }\!
            \underset{5}{
                \begin{bmatrix}
                    \tta \\
                    \color{red}\ttf \\
                    \color{red}\tti
                \end{bmatrix}
            }}_{\normalsize\range{3}{5}}}\!
        \underset{6}{
            \begin{bmatrix}
                \tta \\
                \ttb \\
                \ttc
            \end{bmatrix}
        }}\,.
    \end{align*}
    Then $\vu$ is consistent with $\vcu$ on $I$.
\end{example}

\begin{lemma}
    \label{lem:formula-translation-fixed}
    Fix a subformula $\varphi$ of $\varphi_0$ and an interpretation
    $\fI = (\modelConv{w, \varpi}, \nu, I)$, where $I$ is nonempty.
    Let $L = \length{w}$.
    Fix a family of super-scopes $(I_x)_{x \in \X{\varphi}}$
    such that $I \subseteq I_x$ and $\nu(y) \in I_x$ for every $x \in \X{\varphi}$ and $y \in \VarOther \cap \FreeFOVarOf{\varphi}$.
    Let $S$ be a fresh SO variable, and let $\fI_{\vu, \vv}$ be the interpretation $(\modelConv{w, \vu, \varpi, \vv}, \nu[S \gets I])$ of $\SMSO[\Sigma', \A']$.
    Define two properties $P_{\cX}$ and $P_{\cY}$:
    \begin{itemize}
        \item
        \makebox[20mm][l]{$P_{\cX}(\varphi, \vu, \vv)$} holds iff
        \makebox[59mm][c]{$u_x[I_x] \in {\ubegin} {\utail}^*$ and $v_x[I_x] = \varpi[\nu(x)]^{\size{I_x}}$}
        for every $x \in \X{\varphi}$,
        \item \makebox[20mm][l]{$P_{\cY}(\varphi, \vcu, \vcv)$} holds iff
        \makebox[59mm][c]{$\modelConv{w[I], \vcu, \varpi[I], \vcv} \models \eta_y$}
        for every $y \in \Y{\varphi}$.
    \end{itemize}
    Then the following two claims hold:
    \begin{enumerate}[label=\textup{{(C\arabic*)}}, ref=\textup{(C\arabic*)}, leftmargin=*,topsep=\smallskipamount, partopsep=0pt]
        \item \label[claim]{claim:left-to-right}
        If $\fI \models \varphi$, then there exist infix families
        $\vcu\colon \Y{\varphi} \to \B^{\size{I}}$ and
        $\vcv\colon \Y{\varphi} \to \A^{\size{I}}$ satisfying
        $P_\cY(\varphi, \vcu, \vcv)$ such that, for all
        $\vu\colon \VarIni \to \B^L$ and $\vv\colon \VarIni \to \A^L$ that are
        consistent with $\vcu$ and $\vcv$ on $I$, if
        $P_{\cX}(\varphi, \vu, \vv)$ holds then
        $\fI_{\vu, \vv} \models G(\varphi, S)$.
        \item \label[claim]{claim:right-to-left}
        If $\fI_{\vu, \vv} \models G(\varphi, S)$ and the properties
        $P_{\cX}(\varphi, \vu, \vv)$ and $P_{\cY}(\varphi, \vu[I], \vv[I])$ hold
        for some $\vu\colon \VarIni \to \B^L$ and $\vv\colon \VarIni \to \A^L$,
        then $\fI \models \varphi$.
    \end{enumerate}
\end{lemma}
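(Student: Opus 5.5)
We prove \cref{claim:left-to-right,claim:right-to-left} simultaneously by structural induction on $\varphi$, following the clauses defining $G(\varphi,S)$. Throughout, $S$ is interpreted as $I$. We also use two facts about the super-scopes $I_x$. First, $P_\cX$ makes $v_x$ constant, equal to $\varpi[\nu(x)]$, on all of $I_x \supseteq I$. Second, every free nested variable $y$ satisfies $\nu(y)\in I_x$. Together these give $v_x[\nu(y)]=\varpi[\nu(x)]$ and $v_x[z]=\varpi[\nu(x)]$ for every $z\in I$.

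\textbf{Base cases.} The atoms $a(x)$, $X(y)$ and $x<y$ are immediate. For a data atom $R(x_1,\dots,x_n)$, split according to \cref{rem:at-most-one-nested}.
\begin{itemize}
\item If the nested variable is $x$, then $G$ replaces every top-level $x_i\neq x$ by $v_{x_i}[x]$, and by the fact above this equals $\varpi[\nu(x_i)]$.
\item If there is no nested variable, then for any $z\in I$ we have $v_{x_i}[z]=\varpi[\nu(x_i)]$, and $I$ is nonempty. So the existential over $z\in S$ is equivalent to $R(\varpi[\nu(x_1)],\dots)$.
\end{itemize}
Here $\Y{\varphi}=\emptyset$, so $\vcu,\vcv$ are empty and both claims are equivalences.

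\textbf{Connectives.} For $\land$ and $\lor$, the sets $\Y{\gamma}$ and $\Y{\gamma'}$ are disjoint, since bound variables are pairwise distinct. So the witnesses from the two sides can be merged in \cref{claim:left-to-right}, and restricted in \cref{claim:right-to-left}. For $\neg\gamma$, every variable bound inside $\gamma$ is nested, so $\Y{\gamma}=\emptyset$. Both claims for $\gamma$ then say, given $P_\cX$, that $\fI\models\gamma$ iff $\fI_{\vu,\vv}\models G(\gamma,S)$, and this equivalence is preserved by negation.

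\textbf{Quantifiers.} Second-order $\exists X$ and nested $\exists x$ follow directly: the guard $x\in S$, resp.\ $X\subseteq S$, matches the scoped semantics, and the super-scope condition for the new nested $x$ holds because $\nu(x)\in I\subseteq I_{x'}$.

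For top-level $x\in\VarIni$ with $\varphi\equiv\exists x\,\gamma$, we have $\Y{\varphi}=\Y{\gamma}\cup\{x\}$.
\begin{itemize}
\item For \cref{claim:left-to-right}, pick a witness $p\in I$. Set $\check u_x=\ubegin\utail^{|I|-1}$ and $\check v_x=\varpi[p]^{|I|}$, and take $I_x:=I$ for the induction hypothesis. Then $P_\cX(\gamma)$ follows from consistency, $\theta_{x,S}$ holds, and $\eta_x$ holds on $I$.
\item For \cref{claim:right-to-left}, $\theta_{x,S}$ together with $\eta_x$ on $I$ (part of $P_\cY$) forces $u_x[I]\in\ubegin\utail^*$ and $v_x[I]=\varpi[\nu(x)]^{|I|}$. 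This is exactly $P_\cX$ for $x$ with $I_x=I$, so the induction hypothesis applies.
\end{itemize}

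\textbf{Segment modality (main obstacle).} For $\varphi\equiv\segment{X}\gamma$, apply the hypothesis to each $I'\in\Subscopes{I,\nu(X)}$, keeping the same super-scopes $I_x$, which still contain $I'$. \Cref{lem:subscope-is-mso-definable} aligns $\forall S'$ with the subscopes.
\begin{itemize}
\item For \cref{claim:left-to-right}, the witnesses $\vcu^{I'},\vcv^{I'}$ obtained on the various subscopes are concatenated in order to form witnesses on $I$. The difficulty is checking $\eta_y$ on all of $I$. Its first-position clause holds because $I$'s first position is the first position of the first subscope. Its succession clause holds at every boundary between consecutive subscopes, because the next block starts with $\ubegin$ (this comes from $\theta$ inside each subscope, or we normalise the witnesses to have this form).
\item For \cref{claim:right-to-left}, we must derive $\eta_y$ on each $I'$ from $\eta_y$ on $I$. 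This needs $u_y$ to carry $\ubegin$ at each subscope start, which is not implied in general. I would first try to strengthen $P_\cY$ to this block shape. Failing that, I would use that $\theta_{y,S'}$ in $G(\gamma,S')$ resets the track on the relevant scope, and that values only matter where $y$'s quantifier actually is evaluated.
\end{itemize}
This bookkeeping of $\eta$ across cut points is where I expect the real work.
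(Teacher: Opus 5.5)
Your induction follows the paper's proof case for case: the same base-case analysis, the same witnesses $\ubegin\utail^{|I|-1}$ and $\varpi[p]^{|I|}$ with $I_x:=I$ for a top-level $\exists x$, and the same concatenation of per-subscope witnesses for $\segment{X}\psi$. As written, though, it is incomplete at the step you flag yourself, namely \textup{(C2)} for $\varphi\equiv\segment{X}\psi$.

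Your diagnosis is correct. Because of the first-position conjunct of $\eta_y$, knowing $\eta_y$ on $I$ does not give $\eta_y$ on a subscope $J_i$ with $\min J_i>\min I$. For example, it fails when $u_y[I]=\ubegin\utail^{|I|-1}$ and $G(\psi,J_2)$ is satisfied through a disjunct not involving $y$. The paper asserts that $P_\cY(\psi,\vu[J_i],\vv[J_i])$ is ``immediate'' here, which is not literally true.

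Your first remedy, strengthening $P_\cY$ to block shape, cannot work. \textup{(C2)} is ultimately applied in the proof of \cref{lem:formula-translation-desired-property} to arbitrary tracks satisfying $G(\varphi_0,S)\wedge\bigwedge_x\eta_x$ on the whole word. Those tracks are only guaranteed to carry $\ubegin$ at position $1$. The cut points, moreover, depend on second-order values chosen inside the formula. So a block-shape hypothesis could never be discharged at the top level.

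The missing idea is to weaken the hypothesis of \textup{(C2)}, which is your second remedy made precise. The only case where $P_\cY$ does real work in \textup{(C2)} is a top-level $\exists x\,\gamma$. There $\theta_{x,S}$ already forces $u_x[I]\in\ubegin\utail^*$, so only the successor conjunct of $\eta_x$ is needed: if $u_x[s+1]=\utail$ then $v_x[s]=v_x[s+1]$. That conjunct suffices to propagate $v_x[p]=\varpi[p]$ to all of $I$.

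So state \textup{(C2)} with $P_\cY$ replaced by this successor conjunct alone. Then check the following:
\begin{itemize}
\item The successor conjunct restricts to every subinterval, so the $\segment{X}$ case of \textup{(C2)} becomes immediate.
\item Keep the full $\eta_y$ in the conclusion of \textup{(C1)}. Your boundary argument for concatenated witnesses needs each block to start with $\ubegin$, and the induction hypothesis \textup{(C1)} on $J_i$ supplies exactly that.
\item The two claims invoke each other only in the negation case, where $\cY=\emptyset$, so the asymmetry between them causes no mismatch.
\item At the top level, full $\eta_x$ implies the weakened form.
\end{itemize}
With this change your argument goes through.

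One smaller repair: in \textup{(C1)} for $\lor$, only one disjunct supplies witnesses. For the $\cY$-variables of the other disjunct you must add filler tracks, such as $\ubegin\utail^{|I|-1}$ with a constant datum. ``Merging'' the two sides is only correct for $\land$.
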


Note that in \cref{lem:formula-translation-fixed} we always evaluate translated
formulas $G(\varphi, S)$ over the ambient scope $\PosOf{w}$, and so we omit the
interval component in the corresponding interpretations.
We first explain how \cref{lem:formula-translation-fixed} implies
\cref{lem:formula-translation-desired-property}.

\begin{proof}[Proof of \cref{lem:formula-translation-desired-property} assuming \cref{lem:formula-translation-fixed}.]
    Fix an arbitrary word $\Conv{w, \varpi} \in (\Sigma \times \A)^*\!$.
    The claim is immediate when $w$ is empty: indeed, $\functionT{Empty}{}$ holds
    if and only if $\PosOf{w} = \emptyset$.
    Assume henceforth that $w$ is nonempty.
    We must show that
    \begin{align*}
        \Conv{w, \varpi} \in \lang{\varphi_0}
        &\iff
        \exists{\vcu, \vcv}.
        \Conv{{w, \vcu, \varpi, \vcv}} \in \lang[\big]{\exists{S}. G(\varphi_0, S) \land \forall{y} S(y) \land \textstyle\bigwedge_{x \in \VarIni}\eta_x}\,.
    \end{align*}
    By definition of language membership on the left-hand side, and since
    $I_0 \coloneqq \PosOf{w}$ is the unique set of positions satisfying
    $\forall{y} S(y)$, it suffices to show that
    \begin{align*}
        \modelConv{w, \varpi}, \nu_0, I_0 \models \varphi_0
        &\iff
        \exists{\vcu, \vcv}.
        \modelConv{{w, \v{\check u}, \varpi, \v{\check v}}}, \nu_0[S \gets I_0] \models G(\varphi_0, S) \land \textstyle\bigwedge_{x \in \VarIni}\eta_x\,,
    \end{align*}
    where $\nu_0$ is the empty valuation.
    Note that $\X{\varphi_0} = \emptyset$ and $\Y{\varphi_0} = \VarIni$.
    The desired equivalence follows directly from
    \cref{lem:formula-translation-fixed} with $I = I_0$.

    \proofDir{Direction ``$\bm{\Rightarrow}$''.}
By \cref{claim:left-to-right}, we obtain $\vcu, \vcv$ satisfying
$P_\cY(\varphi_0, \vcu, \vcv)$.
Any families $\vu, \vv$ consistent with $\vcu, \vcv$ on $I_0$ must coincide
with $\vcu, \vcv$, respectively.
Moreover, $P_\cX(\varphi_0, \vcu, \vcv)$ holds vacuously since
$\X{\varphi_0} = \emptyset$.
Hence $\modelConv{w, \vcu, \varpi, \vcv}, \nu_0[S \gets I_0] \models
G(\varphi_0, S)$, as required.

\proofDir{Direction ``$\bm{\Leftarrow}$''.}
Fix $\vcu, \vcv$ such that
$\modelConv{w, \vcu, \varpi, \vcv}, \nu_0[S \gets I_0] \models
G(\varphi_0, S) \land \textstyle\bigwedge_{x \in \VarIni}\eta_x$.
Then $P_\cY(\varphi_0, \vcu, \vcv)$ holds by definition, while
$P_\cX(\varphi_0, \vcu, \vcv)$ holds vacuously since $\X{\varphi_0}=\emptyset$.
By \cref{claim:right-to-left}, we conclude that
$\modelConv{w, \varpi}, \nu_0, I_0 \models \varphi_0$.
\end{proof}

\begin{proof}[\of{\cref{lem:formula-translation-fixed}}]
Fix a subformula $\varphi$ of $\varphi_0$ and an interpretation
$\fI = (\modelConv{w, \varpi}, \nu, I)$.
Fix a family of super-scopes $(I_x)_{x \in \X{\varphi}}$ such that
$I \subseteq I_x$ and $\nu(y) \in I_x$ for every
$x \in \X{\varphi}$ and $y \in \VarOther \cap \FreeFOVarOf{\varphi}$.
We proceed by induction on the structure of $\varphi$.
There are ten main cases, corresponding to the grammar rules of \SMSO.
In each case we prove
\cref{claim:left-to-right,claim:right-to-left}.

\proofCase{1}{{$\varphi \equiv R(x_1, x_2, \ldots, x_r)$}}
Let $\cV \coloneqq \set{x_1, x_2, \ldots, x_r}$.
Recall that $\size{\cV \cap \VarOther} \leq 1$.

\proofCase{1A}{$\cV \cap \VarOther = \set{z}$ for some $z$}
Let $p = \nu(z)$.
\begin{enumerate}[label=\textup{{(C\arabic*)}},leftmargin=*,topsep=\smallskipamount, partopsep=0pt]\item
Assume $\fI \models \varphi$.
By definition, $(\varpi[\nu(x_1)], \ldots, \varpi[\nu(x_r)]) \in R$.
Note that $\Y{\varphi} = \emptyset$.
Let $\vcu, \vcv$ be the unique empty families.
Then $P_\cY(\varphi, \vcu, \vcv)$ holds vacuously.
Take any $\vu, \vv$ satisfying $P_\cX(\varphi, \vu, \vv)$.
These families are trivially consistent with $\vcu$ and $\vcv$ on $I$.
It remains to show that, for every $i$, the term $t_{i,z}$ in $G(\varphi,S)$
evaluates to $\varpi[\nu(x_i)]$.
This is immediate when $x_i \equiv z$.
Otherwise, $t_{i,z}$ evaluates to $v_{x_i}[p]$.
Since $p \in I_{x_i}$, property $P_\cX$ yields
$v_{x_i}[p] = \varpi[\nu(x_i)]$, as required.

\item
Fix $\vu, \vv$ such that $P_\cX(\varphi, \vu, \vv)$,
$P_\cY(\varphi, \vu[I], \vv[I])$, and $\fI_{\vu, \vv} \models G(\varphi, S)$.
As in \cref{claim:right-to-left}, we obtain
$\varpi[\nu(x_i)] = v_{x_i}[p]$ for each $i$, and hence $\fI \models \varphi$.

\end{enumerate}
\proofCase{1B}{$\cV \cap \VarOther = \emptyset$}
\begin{enumerate}[label=\textup{{(C\arabic*)}},leftmargin=*,topsep=\smallskipamount, partopsep=0pt]\item
Assume that $\fI \models \varphi$.
Let $\vcu, \vcv$ be the unique empty infix families indexed by
$\Y{\varphi}$, and let $\vu, \vv$ be any word families that are consistent
with them on $I$ and satisfy $P_\cX(\varphi, \vu, \vv)$.
Then $P_\cY(\varphi, \vcu, \vcv)$ holds vacuously.
We must show that
$\fI_{\vu, \vv} \models \exists{z}\, (z \in S \land R(t_{1,z}, \ldots, t_{r,z}))$.
Choose any $p \in I$ (recall that $I \neq \emptyset$).
By the same argument as in Case~1A, we have
$\modelConv{w, \vu, \varpi, \vv}, \nu[z \gets p, S \gets I] \models
z \in S \land R(t_{1,z}, \ldots, t_{r,z})$.
Hence $\fI_{\vu, \vv} \models G(\varphi, S)$.

\item
Fix $\vu, \vv$ such that $\fI_{\vu, \vv} \models G(\varphi, S)$ and the
properties $P_\cX(\varphi, \vu, \vv)$ and
$P_\cY(\varphi, \vu[I], \vv[I])$ hold.
By definition, there exists $p \in I$ such that
$\modelConv{w, \vu, \varpi, \vv}, \nu[z \gets p, S \gets I] \models
R(t_{1,z}, \ldots, t_{r,z})$.
By the same argument as in Case~1A, we conclude that $\fI \models \varphi$.

\end{enumerate}
\proofCase{2}{$\varphi \equiv x < y$} Immediate.

\begin{enumerate}[label=\textup{{(C\arabic*)}},leftmargin=*,topsep=\smallskipamount, partopsep=0pt]\item
If $\fI \models x < y$, then $\nu(x) < \nu(y)$, and thus
$\fI_{\vu, \vv} \models x < y$ for any $\vu, \vv$ consistent with the unique
trivial infix families $\vcu, \vcv$ indexed by $\Y{\varphi} = \emptyset$.

\item
If $\fI_{\vu, \vv} \models x < y$ for some $\vu, \vv$, then $\fI \models x < y$.

\end{enumerate}
\proofCase{3}{$\varphi \equiv a(x)$} Immediate.

\begin{enumerate}[label=\textup{{(C\arabic*)}},leftmargin=*,topsep=\smallskipamount, partopsep=0pt]\item
If $\fI \models a(x)$, then $w[\nu(x)] = a$, and thus
$\fI_{\vu, \vv} \models w[x] = a$ for any $\vu, \vv$ consistent with the
unique trivial infix families $\vcu, \vcv$ indexed by $\Y{\varphi} = \emptyset$.

\item
If $\fI_{\vu, \vv} \models w[x] = a$ for some $\vu, \vv$, then $\fI \models a(x)$.

\end{enumerate}
\proofCase{4}{$\varphi \equiv X(y)$} Immediate.

\begin{enumerate}[label=\textup{{(C\arabic*)}},leftmargin=*,topsep=\smallskipamount, partopsep=0pt]\item
If $\fI \models X(y)$, then $\nu(y) \in \nu(X)$, and thus
$\fI_{\vu, \vv} \models X(y)$ for any $\vu, \vv$ consistent with the unique
trivial infix families $\vcu, \vcv$ indexed by $\Y{\varphi} = \emptyset$.

\item
If $\fI_{\vu, \vv} \models X(y)$ for some $\vu, \vv$, then $\fI \models X(y)$.

\end{enumerate}
\proofCase{5}{$\varphi \equiv \psi \land \gamma$}
Note that $\Y{\varphi} = \Y{\psi} \disjointUnion \Y{\gamma}$.

\begin{enumerate}[label=\textup{{(C\arabic*)}},leftmargin=*,topsep=\smallskipamount, partopsep=0pt]\item
Assume $\fI \models \psi \land \gamma$. Then $\fI \models \psi$ and
$\fI \models \gamma$.
By \cref{claim:left-to-right} of the induction hypothesis, there exist infix
families $\vcup, \vcvp$ satisfying $P_\cY(\psi, \vcup, \vcvp)$ such that, for
every $\v{u'}, \v{v'}$ consistent with them on $I$ and satisfying
$P_\cX(\psi, \v{u'}, \v{v'})$, we have $\fI_{\v{u'}, \v{v'}} \models G(\psi, S)$.
Likewise, there exist $\vcupp, \vcvpp$ witnessing the analogous statement
for $G(\gamma,S)$.
Let $\vcu = \vcup \disjointUnion \vcupp$ and $\vcv = \vcvp \disjointUnion \vcvpp$.
Then $P_\cY(\varphi, \vcu, \vcv)$ holds.
Take any $\vu, \vv$ satisfying $P_\cX(\varphi, \vu, \vv)$ and consistent with
$\vcu, \vcv$ on $I$.
Since $\vu, \vv$ are also consistent with $\vcup, \vcvp$ on $I$, and satisfy
$P_\cX(\psi, \vu, \vv)$ and $P_\cY(\psi, \vu[I], \vv[I])$, we obtain
$\fI_{\vu, \vv} \models G(\psi, S)$.
Similarly, $\vu, \vv$ are consistent with $\vcupp, \vcvpp$ on $I$ and satisfy
$P_\cX(\gamma, \vu, \vv)$ and $P_\cY(\gamma, \vu[I], \vv[I])$, so
$\fI_{\vu, \vv} \models G(\gamma, S)$.
Hence $\fI_{\vu, \vv} \models G(\varphi, S)$.

\item
Let $\vu, \vv$ be such that $P_\cX(\varphi, \vu, \vv)$ and
$P_\cY(\varphi, \vu[I], \vv[I])$ hold and $\fI_{\vu, \vv} \models G(\varphi, S)$.
Then $\fI_{\vu, \vv} \models G(\psi, S)$ and $\fI_{\vu, \vv} \models G(\gamma, S)$.
Moreover, $P_\cX(\psi, \vu, \vv)$ and $P_\cY(\psi, \vu[I], \vv[I])$ hold, and
similarly for $\gamma$.
By \cref{claim:right-to-left} of the induction hypothesis, we have
$\fI \models \psi$ and $\fI \models \gamma$, hence $\fI \models \psi \land \gamma$.

\end{enumerate}
\proofCase{6}{$\varphi \equiv \psi \lor \gamma$}
The proof is analogous to Case~5, with $\land$ replaced by $\lor$.

\proofCase{7}{$\varphi \equiv \neg \psi$}
Observe that $\Y{\psi} = \Y{\varphi} = \emptyset$, since no top-level
variables can be quantified under negation.

\begin{enumerate}[label=\textup{{(C\arabic*)}},leftmargin=*,topsep=\smallskipamount, partopsep=0pt]\item
Assume that $\fI \models \neg\psi$.
Let $\vcu, \vcv$ be the unique empty infix families; then
$P_\cY(\varphi, \vcu, \vcv)$ holds vacuously.
Towards a contradiction, suppose that
$\fI_{\vu, \vv} \models G(\psi, S)$ for some $\vu, \vv$ satisfying
$P_\cX(\varphi, \vu, \vv)$ and trivially consistent with $\vcu, \vcv$.
Then \cref{claim:right-to-left} yields $\fI \models \psi$, contradicting the
assumption.

\item
Assume that $\fI_{\vu, \vv} \models \neg G(\psi, S)$ for some $\vu, \vv$ such
that $P_\cX(\varphi, \vu, \vv)$ and $P_\cY(\varphi, \vu[I], \vv[I])$ hold.
Towards a contradiction, suppose that $\fI \models \psi$.
By \cref{claim:left-to-right} of the induction hypothesis, for the unique
empty families $\vcu, \vcv$ (which are consistent with $\vu,\vv$), we would
have $\fI_{\vu, \vv} \models G(\psi, S)$, contradicting the assumption.

\end{enumerate}
\proofCase{8}{$\varphi \equiv \exists{X}\psi$} Immediate.

\begin{enumerate}[label=\textup{{(C\arabic*)}},leftmargin=*,topsep=\smallskipamount, partopsep=0pt]\item
Assume that $\fI \models \exists{X} \psi$.
Then there exists $P \subseteq I$ such that
$\modelConv{w, \varpi}, \nu[X \gets P], I \models \psi$.
By \cref{claim:left-to-right} of the induction hypothesis, there are some infix families $\vcu, \vcv$ satisfying $P_\cY(\psi, \vcu, \vcv)$ such that for every $\vu, \vv$ consistent with them on $I$ and satisfying $P_\cX(\psi, \vu, \vv)$, satisfaction relation $\modelConv{w, \vu, \varpi, \vv}, \nu[X \gets P, S \gets I] \models G(\psi, S)$ holds.
We reuse $\vcu, \vcv$; fix arbitrary $\vu, \vv$ consistent with them on $I$ and such that $P_\cX(\varphi, \vu, \vv)$.
Observe that $P_\cY(\varphi, \vcu, \vcv)$ and $P_\cX(\psi, \vu, \vv)$ hold, and therefore $\modelConv{w, \vu, \varpi, \vv}, \nu[X \gets P, S \gets I] \models X \subseteq S$.
By definition, $\fI_{\vu, \vv} \models G(\varphi, S)$ holds.
\item
Assume that $\fI_{\vu, \vv} \models G(\varphi, S) \equiv \exists{X}. X \subseteq S \land G(\psi, S)$ for some $\vu, \vv$ satisfying $P_\cX(\varphi, \vu,\vv)$ and $P_\cY(\varphi, \vu[I],\vv[I])$.
By definition, there is some $P \subseteq \PosOf{w}$ such that $\modelConv{w, \vu, \varpi, \vv}, \nu[X \gets P, S \gets I] \models X \subseteq S \land G(\psi, S)$.
It is easy to see that $P_\cX(\psi, \vu,\vv)$ and $P_\cY(\psi, \vu[I],\vv[I])$.
Observe that $P \subseteq I$.
By \cref{claim:right-to-left} of the induction hypothesis we get $\modelConv{w, \varpi}, \nu[X \gets P], I \models \psi$, and, consequently, $\fI \models \varphi$.

\end{enumerate}
\proofCase{9}{$\varphi \equiv \exists{x}\psi$}

\proofCase{9A}{$x \not\in \VarIni$} Immediate.
Note that $\Y{\varphi} = \emptyset$, otherwise $x$ would be top-level.
\begin{enumerate}[label=\textup{{(C\arabic*)}},leftmargin=*,topsep=\smallskipamount, partopsep=0pt]\item
Assume that $\fI \models \varphi$; by definition there is $p \in I$ such that $\modelConv{w, \varpi}, \nu[x \gets p], I \models \psi$.
By \cref{claim:left-to-right} of the induction hypothesis, there are $\vcu, \vcv$ satisfying $P_\cY(\psi, \vcu, \vcv)$ such that for every $\vu, \vv$ consistent with them on $I$ and satisfying $P_\cX(\psi, \vu, \vv)$ the satisfaction relation $\modelConv{w, \vu, \varpi, \vv}, \nu[x \gets p, S \gets I] \models G(\psi, S)$ holds, so by definition $\fI_{\vu, \vv} \models G(\varphi, S)$; this is the required claim.
\item
Fix $\vu, \vv$ such that $P_\cX(\varphi, \vu, \vv)$ and $P_\cY(\varphi, \vu[I], \vv[I])$ and $\fI_{\vu, \vv} \models G(\varphi, S)$.
By definition, there is some $p \in \PosOf{w}$ such that $\modelConv{w, \vu, \varpi, \vv}, \nu[x \gets p, S \gets I] \models x \in S \land G(\psi, S)$.
Note that this implies $p \in I$.
By \cref{claim:right-to-left} of the induction hypothesis, we get that $\modelConv{w, \varpi}, \nu[x \gets p], I \models \psi$, and by definition, $\fI \models \varphi$, as required.

\end{enumerate}
\proofCase{9B}{$x \in \VarIni$}
Note $\Y{\varphi} = \Y{\psi} \disjointUnion \set{x}$.
Here, $G(\varphi, S) \equiv \exists{x}. x \in S \land G(\psi,S) \land \theta_{x, S}$.
\begin{enumerate}[label=\textup{{(C\arabic*)}},leftmargin=*,topsep=\smallskipamount, partopsep=0pt]\item
Assume that $\fI \models \varphi$.
Then there exists $p \in I$ such that $\modelConv{w, \varpi}, \nu[x \gets p], I \models \psi$.
By \cref{claim:left-to-right} of the induction hypothesis with $I_x = I$, fix $\vcup, \vcvp$ satisfying $P_\cY(\psi, \vcup, \vcvp)$ such that for all $\vu, \vv$ consistent with them on $I$ and satisfying $P_\cX(\psi, \vu, \vv)$, we have $\modelConv{w, \vu, \varpi, \vv}, \nu[x \gets p, S \gets I] \models G(\psi, S)$.
Let $\vcu = \vcup \disjointUnion \set{ u_x \mapsto \ubegin \utail^{\size{I} - 1}}$ and
$\vcv = \vcvp \disjointUnion \set{ v_x \mapsto \varpi[p]^{\size{I}}}$.
Then $P_\cY(\varphi, \vcu, \vcv)$ holds.
Fix any $\vu, \vv$ consistent with $\vcu, \vcv$ satisfying $P_\cX(\varphi, \vu, \vv)$.
Let $\fI'_{\vu,\vv} = (\modelConv{w, \vu, \varpi, \vv}, \nu[x \gets p, S \gets I])$.
By construction of the infix tracks $\check{u}_x, \check{v}_x$, properties $P_\cX(\psi, \vu, \vv)$ and $P_\cY(\psi, \vcu, \vcv)$ hold as well.
Hence $\fI'_{\vu,\vv} \models G(\psi, S)$.
Moreover, by the choice of $\check{u}_x$ and $\check{v}_x$, we also have $\fI'_{\vu,\vv} \models \theta_{x,S}$.
Therefore $\fI_{\vu, \vv} \models G(\varphi, S)$.

\item
Fix $\vu, \vv$ such that $P_\cX(\varphi, \vu, \vv)$, $P_\cY(\varphi, \vu[I], \vv[I])$ and $\fI_{\vu, \vv} \models G(\varphi, S)$ hold.
Let $\fI'_{\vu,\vv,p} = (\modelConv{w, \vu, \varpi, \vv}, \nu[x \gets p, S \gets I])$.
By definition, there exists $p \in \PosOf{w}$ such that $\fI'_{\vu,\vv,p} \models x \in S \land G(\psi, S) \land \theta_{x,S}$.
In particular $p \in S$.
Since $\fI'_{\vu,\vv,p} \models \theta_{x,S}$, we have $v_x[p] = \varpi[p]$ and $u_x[I] \in \ubegin \utail^*$.
Using the assumption $\modelConv{\vu[I], \vv[I]} \models \eta_x$, a simple induction shows that $v_x[q] = \varpi[p]$ for every $q \in I$.
Hence $P_\cX(\psi, \vu, \vv)$ and $P_\cY(\psi, \vu[I], \vv[I])$ hold.
By \cref{claim:right-to-left} of the induction hypothesis, we obtain $\modelConv{w, \varpi}, \nu[x \gets p], I \models \psi$. Since $p \in I$, we conclude that $\fI \models \varphi$.

\end{enumerate}
\proofCase{10}{$\varphi \equiv \segment{X}\psi$}
Let $S'$ be a fresh SO variable.
Since $I \neq \emptyset$, we have $\Subscopes{I, \nu(X)} \neq \emptyset$.
Let $\Subscopes{I, \nu(X)} = \set{J_1, \ldots, J_k}$, enumerated in increasing order.
\begin{enumerate}[label=\textup{{(C\arabic*)}},leftmargin=*,topsep=\smallskipamount, partopsep=0pt]\item
Assume that $\fI \models \segment{X} \psi$.
Then $\modelConv{w, \varpi}, \nu, J_i \models \psi$ for every $i$.
By \cref{claim:left-to-right} of the induction hypothesis,
for each $i \in \rangeOne{k}$ there exist infix families $\vcui{i}, \vcvi{i}$ satisfying $P_\cY\bigl(\psi, \vcui{i}, \vcvi{i}\bigr)$ such that for every $\vui{i}, \vvi{i}$ consistent with them on $J_i$ and satisfying $P_\cX\bigl(\psi, \vui{i}, \vvi{i}\bigr)$, we have
$\modelConv{w, \vui{i}, \varpi, \vvi{i}}, \nu[S' \gets J_i] \models G(\psi, S')$.
Let $\vcu = \vcui{1} \vcui{2} \cdots \vcui{k}$ and $\vcv = \vcvi{1} \vcvi{2} \cdots \vcvi{k}$, where concatenation is applied pointwise to the corresponding families of tracks.
We claim that $P_\cY(\varphi, \vcu, \vcv)$ holds.
Indeed, every track in $\vcu$ starts with $\ubegin$, since every track in $\vcui{1}$ does so by $P_\cY(\psi, \vcui{1}, \vcvi{1})$.
Moreover, the implication $\check{u}_x[s+1] = \utail \rightarrow \check{v}_x[s] = \check{v}_x[s+1]$ holds whenever $s, s+1$ lie within a single block $\vcui{i}, \vcvi{i}$, respectively, for some $i$.
But since first letters of $\check{u}_x^{(i)}$ are all $\ubegin$, it also holds trivially on block boundaries as well.
Hence, $P_\cY(\varphi, \vcu, \vcv)$ holds.
Now take any $\vu, \vv$ consistent with $\vcu, \vcv$ on $I$ and satisfying $P_\cX(\varphi, \vu, \vv)$, and fix any $i \in \rangeOne{k}$.
Since $\vu, \vv$ are consistent with $\vcui{i}, \vcvi{i}$ on $J_i$ and satisfy $P_\cX(\psi, \vu, \vv)$, we have $\modelConv{w, \vu, \varpi, \vv}, \nu[S \gets I, S' \gets J_i] \models G(\psi, S')$, .
Therefore $\fI_{\vu, \vv} \models G(\varphi, S)$.

\item
Fix $\vu, \vv$ such that $\fI_{\vu,\vv} \models G(\varphi, S)$ and the
properties $P_\cX(\varphi, \vu, \vv)$ and $P_\cY(\varphi, \vu[I], \vv[I])$ hold.
Then, for every $i \in \rangeOne{k}$,
$\modelConv{w, \vu, \varpi, \vv}, \nu[S \gets I, S' \gets J_i] \models G(\psi, S')$.
It is immediate that $P_\cX(\psi, \vu, \vv)$ and
$P_\cY(\psi, \vu[J_i], \vv[J_i])$ hold.
By \cref{claim:right-to-left} of the induction hypothesis, we obtain
$\modelConv{w, \varpi}, \nu, J_i \models \psi$ for every $i$.
Hence $\fI \models \segment{X}\psi$, as required.\qedhere
\end{enumerate}
\end{proof}

        \subsection{From automata to formulas}
        \label{subsec:from-automata-to-formulas}
        In this section we prove the remaining direction of \cref{thm:logic-NRA-equivalence}, as formalised by the following:
\begin{lemma}
    \label{lem:from-NRA-to-formulas}
    Fix finite $\Sigma$ and atoms $\A$.
    For every weakly guessing $\cA \in \NRA[\Sigma, \A]$ there exists a formula $\varphi \in \SMSO[\Sigma, \A]$ such that $\lang{\cA} = \lang{\varphi}$.
\end{lemma}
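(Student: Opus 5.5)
We follow the outline of Section~4: guess a run skeleton with second-order variables, and then check all register constraints. The skeleton is handled in plain MSO; the data checks are where well-formedness forces the use of witnesses and the segment modality.

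\textbf{Normalisation.} We first normalise $\cA$ into a simple form. Each transition constraint $\varPhi$ is replaced by a disjunction of complete quantifier-free types over $\set{\cur}\cup\cV_\cR$, each becoming its own rule. Then every rule determines, for each register $r$, one of three things:
\begin{itemize}
    \item $r$ keeps its value ($\valpost{r}=\valpre{r}$);
    \item $r$ loads the current datum ($\valpost{r}=\cur$);
    \item $r$ gets a guessed value, distinct from all previous register values and from $\cur$.
\end{itemize}
Using extra registers we may assume values are never copied between registers, so each register value lives on one register during its live interval. Emptiness ($\emptyReg$) is tracked in the finite control. Weak guessing is preserved by these transformations.

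\textbf{Run skeleton.} We existentially quantify sets $(T_t)_{t\in\delta}$ labelling each position with a rule. In MSO we enforce control consistency, correct $\Sigma$-labels, an initial first state and an accepting last state. The skeleton determines, for each register $r$, a partition of the positions into live intervals: a new interval starts wherever $r$ is loaded or guessed. Let $B_r$ be the set of these start positions; it is MSO-definable from the $T_t$.

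\textbf{Witnesses.} By weak guessing, in every accepting run each live interval of a value $\xi$ held in $r$ contains a position $m$ with $\varpi[m]=\xi$. If $r$ was loaded, the start position itself is such a witness.

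\textbf{Checking the guards.} A guard at position $p$ only compares $\cur$ (position $p$) with the pre- and post-values of registers. Each of these values is the datum at a witness inside the live interval containing $p-1$ or $p$. The difficulty is that $p$ is universally quantified, hence nested. We therefore need the witnesses to be top-level variables, chosen per interval rather than as a function of $p$.

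This is what $\segment{X}$ provides. For a single register we write
\[
    \segment{B_r}\exists{x_r}\,\bigl(\text{$x_r$ witnesses the interval}\bigr)\land\forall{p}\,\chi(p,x_r),
\]
where $x_r$ is top-level. Every data atom then mixes the nested $p$ with top-level witnesses only, so the formula is well-formed.

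\textbf{Main obstacle.} With $k$ registers, the live intervals of different registers overlap in a staggered way, and each position and its predecessor may lie in different intervals. A single cut set cannot simultaneously isolate one interval per register. We would try to construct, in MSO, a family of operating intervals: cut points at every start in $\bigcup_r B_r$, then recursively nest the segment modalities. Going from outer to inner, each level handles the registers whose intervals fully contain the current scope, choosing top-level witnesses for them before recursing. The intervals not contained in any scope are those crossing a cut. These we handle by extending the scope: we let $X$ cut only at starts of one register at a time, and order the registers by nesting depth.

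An alternative, if nesting fails, is an alternation trick. Mark each register's intervals with parity ($\mathrm{even}/\mathrm{odd}$), and split every position's concern into $2k$ cut-families. Each family's intervals are disjoint, so a witness picked in $\segment{X_{r,b}}$ serves all positions of that interval. The guard check is then conjoined across all families. Since an existential inside one segment cannot be referenced from another segment, the check must be placed inside the innermost scope. So we would nest $\segment{X_{r_1}}\exists{x_1}\segment{X_{r_2}}\exists{x_2}\cdots$. The danger is that each nested cut truncates the outer interval, so a witness chosen at an inner level may not lie inside the correct inner interval.

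The key lemma to prove is the disjointness/alternation property: after suitable splitting, every position $p$ lies in a scope that is contained in the relevant live interval of each register. Its correctness proof would be an induction on nesting depth, and the converse direction reads witnesses back off a satisfying assignment to reconstruct register valuations. We expect this combinatorial layering to be the hardest step.
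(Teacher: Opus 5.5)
Your outline (MSO run skeleton, weak-guessing witnesses, witnesses bound by top-level existentials under the segment modality) is the paper's. The gap is the step that carries the proof. You leave the arrangement of the modalities as tentative strategies, and the key lemma you state targets the wrong inclusion.

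Quantifiers under $\segment{X}$ range only over the current subscope, and weak guessing only places a witness \emph{somewhere} in the live interval. So the scope in which the witness for a live interval $I$ is picked must be able to reach any point of $I$, i.e.\ in general it must \emph{contain} $I$. A scope contained in the relevant live interval of every register, as in your lemma, only says where the final check may happen. With nesting, the outer scopes must therefore contain unions of staggered intervals of several registers. These unions overlap for neighbouring positions, so no single cut set produces them. Cutting at $\bigcup_r B_r$, or register by register, gives truncated pieces that can miss the witness; this is the danger you noticed. No fixed nesting order helps either, because intervals of two registers can be staggered either way.

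This already breaks your one-register formula. The subscope of a reload position $p\in B_r$ starts at $p$, yet the guard at $p$ may mention $\valpre{r}$ (e.g.\ $\valpre{r}\neq\cur$), whose witness lies in the previous subscope. The repair there is to use unions $I^j\cup I^{j+1}$ of consecutive live intervals, split by the parity of $j$ into two cut sets. Your parity trick applied to a single register's intervals does nothing, since those intervals are already disjoint.

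The paper's missing ingredient generalises exactly this repair. It treats $\cur$ and every $\valpre{r_i}$, $\valpost{r_i}$ as separate variables, each with its own partition into live intervals. For each ordered type $(x,y)$ it pairs an $x$-interval with the overlapping $y$-interval satisfying $\min I_x\le\min I_y$ and $\max I_x\le\max I_y$. The operating interval $J=I_x\cup I_y$ contains both, so both witnesses can be picked inside it. Operating intervals of one type may overlap, but \cref{lem:alternating-disjointness} gives $J_i\cap J_{i+2}=\emptyset$, so two cut sets $X^0_{x,y},X^1_{x,y}$ cover all of them. The formula $\Psi_{\cP,\cQ}$ conjoins over all types and both parities. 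Inside each operating interval it picks the two witnesses and recurses on $\cQ\setminus\{x,y\}$. It ends in one $\forall p$ check in which $p$ is the only nested variable. Without this or an equivalent construction, plus an argument that every position is checked in some branch with correct witnesses, your proposal does not yet produce a formula.

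One further point for your converse direction. A live interval typically lies in several operating intervals and is witnessed independently in each. To read off a single register valuation, you must force these witnesses to carry the same datum. One way is to fix, at top level, a second-order set with exactly one witness per live interval of each register, and draw all scope-level witnesses from it.
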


\namedparagraph{Proof roadmap.}
We split the argument into:
\begin{itemize}
    \item an introduction of simple \NRA, a subclass equally expressive as full \NRA,
    \item a standard MSO encoding of a run skeleton, and
    \item a geometric recursion guided by the live intervals of registers that allows us to check all register constraints in $\SMSO$, making use of the $\segmentSymbol$ modality.
\end{itemize}

\namedparagraph{Simple NRAs.}
We first define a class of \emph{simple NRA} which enforces injective register valuations and no register shuffling.
This is a well-known construction already present in literature concerning other kinds of register machines~\cite[Thm.~1]{injective-registers}, and can be adapted to \NRAs in a straightforward way. Therefore, we state the lemma without the proof.

Let $\cA = (Q, \Sigma, \A, R, Q_\ini, Q_\fin, \delta) \in \NRA_k[\Sigma, \A]$.
We say that $\cA$ is \emph{simple}, if every transition rule $(p, \sigma, \varPhi, q) \in \delta$ has a register constraint of the form $\varPhi \equiv\varPhi' \land \textstyle\bigwedge_{i\neq j} \valpre{r}_j \neq \valpre{r}_i \neq \valpost{r}_j \neq \valpost{r}_i \land \bigwedge_{i} \valpre{r}_i \mathbin{\sim_i} \valpost{r}_i$ where ${\sim_i} \in \set{=, \neq}$.

Intuitively, a simple NRA does not
store the same data value in two different registers at the same time,
and never moves a data value between registers, and explicitly says whether $\valpre{r}_i = \valpost{r}_i$ or $\valpre{r}_i \neq \valpost{r}_i$ for every $i$.
\begin{fact}
    For every $\cA \in \NRA[\Sigma, \A]$ there exists a simple $\cA' \in \NRA[\Sigma, \A]$ recognising the same language, and if $\cA$ is weakly-guessing, then $\cA'$ is so, too.
\end{fact}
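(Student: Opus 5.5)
We prove the Fact by a two-stage normalisation, each stage an explicit construction that keeps the language and does not introduce strong guesses. Fix $\cA \in \NRA_k[\Sigma,\A]$ with registers $r_1,\dots,r_k$.

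\emph{Stage one: injectivity and no shuffling.} We construct $\cA'$ with the same registers whose states are pairs $(q,\beta)$. Here $\beta$ is a partial map from $\cR$ to $\cR'$ that is injective on values. The idea is that, in a configuration $(q,\mu)$ of $\cA$, each distinct non-$\emptyReg$ value of $\mu$ is stored exactly once in some physical register of $\cA'$, and $\beta$ records which physical register holds $\mu(r)$; $\beta(r)$ is undefined iff $\mu(r)=\emptyReg$.

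For each rule $p \xRightarrow{\sigma,\varPhi} q$ of $\cA$ we enumerate all possible ``equality patterns'' of the tuple $(\mu,\alpha,\mu')$. A pattern is a partition of $\cV_\cR\cup\set{\cur}$ together with which classes are $\emptyReg$. These are finitely many. For each pattern consistent with $\beta$ we rewrite $\varPhi$ over the physical registers:
\begin{itemize}
\item a post-value equal to some pre-value keeps that value's physical register, with $\valpre{s}=\valpost{s}$;
\item a genuinely new value (equal to $\cur$, or guessed) is put into a physical register freed by a value that dies in this step, with $\valpre{s}\neq\valpost{s}$;
\item all other physical registers are explicitly marked $=$ or $\neq$.
\end{itemize}
We conjoin the pairwise distinctness of non-empty registers, before and after, which the pattern guarantees. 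Since at most $k$ distinct values are live at once, $k$ physical registers suffice.

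\emph{Correctness.} Correctness is a step-by-step bisimulation: configuration $(q,\mu)$ of $\cA$ corresponds to $((q,\beta),\mu')$ with $\mu = \mu'\circ\beta$ on defined registers. The pattern is chosen to be the actual equality type of the step, so every $\cA$-transition is matched, and conversely every $\cA'$-transition projects back to an $\cA$-transition.

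\emph{Stage two: handling $\emptyReg$.} The stated form demands distinctness even among empty registers, which is impossible for $\emptyRegElement$. We read the inequalities as ranging over non-empty registers, or equivalently guard them by $\neg\emptyReg$. If the literal form is required, we instead initially guess distinct dummy values that are never compared. Those dummies would be strong guesses, so we prefer the guarded reading.

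\emph{Main obstacle: preserving weak guessing.} This is the subtle part. In $\cA'$ a value is guessed exactly when a new value enters a physical register that equals neither $\cur$ nor any previous register content. By the injective tracking, the same datum is then freshly introduced into $\mu$ in the matched step of $\cA$. The live interval of that datum in $\cA'$ coincides with its live interval in $\cA$, namely the maximal period during which it appears in some register. Because stage one never moves a datum between registers, it never changes which steps a datum is live on; in particular, reassigning a datum from one $\cA$-register to another is absorbed into $\beta$ and does not create a new guess. Hence every guess of $\cA'$ along an accepting run is witnessed by an input occurrence within its live interval exactly when the corresponding guess of $\cA$ is, so weak guessing transfers.

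The only care needed is that accepting runs of $\cA'$ project to accepting runs of $\cA$, which the bisimulation gives.
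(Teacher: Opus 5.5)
Your proposal is correct and is essentially the standard register-renaming construction that the paper does not prove but only cites from the literature. Your weak-guessing argument fills in what the paper leaves implicit: $\cA'$ holds exactly the same set of non-$\emptyReg$ values as $\cA$ at every step, so guesses and live intervals coincide. You are also right that the distinctness conjuncts of ``simple'' can only be meant for non-$\emptyReg$ registers.

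For that invariant to actually hold, you need to state that a physical register whose value dies and is not reused is reset by $\emptyReg(\valpost{s})$. Constraining it only by $\valpre{s}\neq\valpost{s}$ would let $\cA'$ park untracked, possibly strongly guessed values there. You should also drop the dummy-value fallback. It cannot rescue the literal form anyway, because the pre-valuation of the first transition is the forced all-$\emptyReg$ initial one.
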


For the remainder of the section, let us fix $\cA=(Q,\Sigma,\A,\cR,Q_\ini,Q_\fin,\delta) \in \NRA[\Sigma, \A]$ that is simple and weakly guessing.
Fix an enumeration of $\cR=\set{r_1,\ldots,r_k}$.

\namedparagraph{Run skeleton.}
Let $m=\size{\delta}$ and fix an enumeration of transitions $\kappa:\delta\to\rangeOne{m}$.
We introduce second-order variables $\Delta_1,\ldots,\Delta_m$ and let
$\Run_\cA(\Delta_1,\ldots,\Delta_m)$ express that:
\begin{itemize}
    \item each position carries exactly one transition rule label,
    \item all transition rules are matching the $\Sigma$-letter,
    \item adjacent transition rule labels have matching states, and
    \item the first rule starts in $Q_\ini$, and the last ends in $Q_\fin$.
\end{itemize}
We specify these properties in a standard way:
\begin{align*}
    &\Run_\cA(\Delta_1, \Delta_2, \ldots, \Delta_m) \coloneqq \forall{i, j}. \\
    &\qquad\qquad\!\!\!\!
    \underbrace{\Big[
        \textstyle\bigvee_{\substack{t=(\argumentDot, \sigma, \argumentDot, \argumentDot) \in \delta}} \Delta_{\kappa(t)}(i) \land \sigma(i) \land
        \textstyle\bigwedge_{\substack{t' \in \delta\\t' \neq t}} \neg\Delta_{\kappa(t')}(i)
        \Big]}_{\substack{\textup{Every position has exactly one associated transition}\\\textup{rule and it matches the letter $\sigma \in \Sigma$.}}}
    \land {} \\
    &\qquad\qquad\!\!\!\!
    \underbrace{\Big[\makebox[15mm][r]{$\functionT{succ}{i,j}$} \rightarrow
    \textstyle\bigvee_{\substack{t = (q, \argumentDot, \argumentDot, r) \in \delta\\t' = (r, \argumentDot, \argumentDot, s) \in \delta}}
    \Delta_{\kappa(t)}(i) \land \Delta_{\kappa(t')}(j)\Big]}_{\substack{\textup{Consecutive transition rules have matching states.}}}
    \land {} \\
    &\qquad\qquad\!\!\!\!
    \underbrace{\Big[\makebox[15mm][r]{$\functionT{first}{j}$} \rightarrow \textstyle\bigvee_{\substack{t = (q, \argumentDot, \argumentDot, \argumentDot) \in \delta\\q \in Q_\ini}}
    \Delta_{\kappa(t)}(j) \Big]}_{\textup{First transition rule begins in $q \in Q_\ini$.}}
    \land
    \underbrace{\Big[\makebox[15mm][r]{$\functionT{last}{i}$} \rightarrow \textstyle\bigvee_{\substack{t = (\argumentDot, \argumentDot, \argumentDot, q) \in \delta\\q \in Q_\fin}}
    \Delta_{\kappa(t)}(i) \Big]}_{\textup{Last transition rule ends with $q \in Q_\fin$.}}\,.
\end{align*}
It remains to define $\RegOK(\Delta_1,\ldots,\Delta_m)\in\SMSO$ such that:
\begin{align*}
    \Conv{w,\varpi}\models \exists{\Delta_1}\cdots\exists{\Delta_m}\Big(\Run_\cA(\Delta_1,\ldots,\Delta_m)\land \RegOK(\Delta_1,\ldots,\Delta_m)\Big)\text{ iff }\Conv{w,\varpi}\in\lang{\cA}
\end{align*}

\namedparagraph{Remaining goal.}
Intuitively, the formula $\RegOK$ needs to verify that all the transition constraints hold by checking the word positions on which the relevant register values appear in the word $\varpi$.
This would be straightforward if we had unrestricted access to the universal quantifier.
However, that would lead us to using more than one nested variable in data atomic formulas.
For this reason, we need to use a much more elaborate construction making use of the $\segmentSymbol$ modality.

\namedparagraph{Live intervals induced by $\Delta_1,\ldots,\Delta_m$.}
Let $\cV' \coloneqq \set{\cur}\ \cup\ \set{\valpre{r_i},\valpost{r_i}\suchthat i\in\rangeOne{k}}.$ Fix a tie-breaking order $\prec$ on $\cV'$.
From the $\Delta$-labels, for each $x\in\cV'$ we MSO-define the set $\mathcal{I}_x$ of \emph{live intervals} of $x$:
maximal ranges $I=\range{a}{b}$ on which the value of $x$ is constant in the (unique) run described by the $\Delta$'s.
Write $\Live_x(a,b)$ for the MSO predicate ``$\range{a}{b}\in\mathcal{I}_x$''.

\begin{claim}[Live intervals partition $\PosOf{w}$.]
    Each $\mathcal{I}_x$ is a disjoint partition of $\PosOf{w}$
\end{claim}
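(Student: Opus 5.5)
The claim is essentially a restatement of the definition of $\mathcal{I}_x$ as the set of \emph{maximal} constancy ranges, so the plan is to make that definition precise and check the two partition properties directly. Fix the unique run determined by the $\Delta$-labels, together with its sequence of valuations. For each $x\in\cV'$, associate a value $\mathrm{val}_x(p)\in\A_\emptyReg$ to every position $p\in\PosOf{w}$: take $\mathrm{val}_{\cur}(p)=\varpi[p]$, $\mathrm{val}_{\valpre{r_i}}(p)=\mu_{p-1}(r_i)$ and $\mathrm{val}_{\valpost{r_i}}(p)=\mu_p(r_i)$. Then define the relation $p\approx_x q$ for $p\le q$ to mean that $\mathrm{val}_x$ is constant on $\range{p}{q}$. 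In the intended MSO definition, this constancy is read off the $\Delta$-labels. Because $\cA$ is simple, every rule states explicitly whether $\valpre{r}_i=\valpost{r}_i$, and $\valpost{r_i}$ at position $p$ coincides with $\valpre{r_i}$ at $p+1$. So ``no change between $p$ and $p+1$'' is a local label condition, and constancy on $\range{p}{q}$ is the conjunction of these conditions over consecutive pairs. For $\cur$, the intended intervals should also be fixed by a label-definable convention; the simplest choice is singletons, or blocks determined through the $\sim_i$ information.

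The key step is to show that $\approx_x$ is an equivalence relation whose classes are intervals. Reflexivity and symmetry are immediate from the definition. For transitivity, I would use the fact that constancy on $\range{p}{q}$ and on $\range{q}{r}$ gives constancy on $\range{p}{r}$, because the two ranges share the position $q$. Convexity also follows directly: if $p\approx_x r$ and $p\le q\le r$, then $\range{p}{q}\subseteq\range{p}{r}$, so $p\approx_x q$. Hence each equivalence class is a contiguous range $\range{a}{b}$. Such a class is maximal, since extending it by $a-1$ or $b+1$ would put a non-equivalent position inside it. Conversely, every maximal constancy range is exactly one class. Therefore $\mathcal{I}_x$ is the set of equivalence classes of $\approx_x$, and any set of equivalence classes is pairwise disjoint and covers $\PosOf{w}$.

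The step needing most care is the choice of definition for $\mathcal{I}_x$, since the paper leaves it somewhat informal. One subtlety is that ``constant value'' must mean adjacency-based unchangedness along the run, not equality of values. In a simple NRA a value that leaves a register could in principle be guessed back later, and then two separate ranges carry the same value without being merged. Defining constancy through the consecutive $\sim_i$ labels, as above, avoids this, since the maximality argument only involves adjacent positions. A second subtlety is the boundary at the first position. There I would treat $\valpre{r_i}$ as carrying $\emptyRegElement$, or any value consistent with the initial valuation, so that the first position also lies in some class. With these conventions fixed, the partition property is immediate.
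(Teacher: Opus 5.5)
Your argument is correct and is essentially the reasoning behind the paper's claim. The paper states the claim without proof: $\mathcal{I}_x$ is by definition the family of maximal constancy ranges, cut at the label-definable change points given by the explicit $\valpre{r}_i \sim_i \valpost{r}_i$ of a simple NRA, and your equivalence-relation argument is the careful version of this. Your handling of the loose ends is sensible and does not affect the partition property, which holds for any choice of change points; these loose ends are that the $\Delta$-labels do not by themselves determine the valuations, and the conventions for $\cur$ and for the first position.
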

\begin{claim}[Weak guessing witness]
    If $I\in\mathcal{I}_{\valpre{r_i}}$ or $\mathcal{I}_{\valpost{r_i}}$ carries the data value $\alpha \in \A$,
    then there is a position $p \in I$ such that $\varpi[p]=\alpha$.
\end{claim}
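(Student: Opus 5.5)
We prove the claim by unfolding the definition of weak guessing and following the fixed run along the live interval $I$. Fix the unique run $\pi$ described by the $\Delta$-labels together with its register valuations $\mu_0,\ldots,\mu_n$. Take $I=\range{a}{b}\in\mathcal{I}_x$ with $x\in\set{\valpre{r_i},\valpost{r_i}}$ carrying the value $\alpha$. Because $\cA$ is simple, registers are never shuffled. Hence the value $\alpha$ stays in the single register $r_i$ throughout $I$, and $I$ is (a shifted copy of) the live interval of $\alpha$ in the sense of the definition preceding \cref{def:elimination-of-strong-guessing}. For $x=\valpre{r_i}$ the indices shift by one, since position $p$ sees $\mu_{p-1}$. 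We may assume $\alpha\neq\emptyRegElement$; intervals carrying $\emptyReg$ are handled separately, exactly as in the $\Block$ encoding, so the claim concerns only genuine data values.

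Next, consider the first valuation index $a'$ of the interval on which $r_i$ holds $\alpha$. There are two cases. In the first, $\alpha=\alpha_{a'}$, i.e.\ the value is read from the input at that step; then position $a'$ itself, which lies in $I$, is the witness. In the second, $\alpha$ differs from the current input and from all values in $\mu_{a'-1}(\cR)$. If $a'>1$, this uses maximality of the interval together with injectivity: a simple $\cA$ cannot have $\alpha$ sitting in another register and moving into $r_i$. Either way $\cA$ is guessing $\alpha$ in $\mu_{a'}$.

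Since the run is accepting and $\cA$ is weakly guessing, this guess is not strong. By the definition of strong guessing, some $m$ in the live interval of $\alpha$ has $\alpha_m=\alpha$. Translating $m$ back to a position of $I$ gives $p\in I$ with $\varpi[p]=\alpha$.

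The main obstacle is bookkeeping. We must align the paper's valuation-indexed live intervals $\range{a}{b}$ with the position-indexed intervals of $\mathcal{I}_{\valpre{r_i}}$ and $\mathcal{I}_{\valpost{r_i}}$, taking care of the off-by-one between pre- and post-valuations. We must also check that the witness position $m$, which may be the step at which $\alpha$ is read, falls inside $I$ and not one position outside it. I would handle this by fixing the convention that position $p$ is associated with the triple $(\mu_{p-1},\alpha_p,\mu_p)$ and verifying the two endpoint cases explicitly. A point to double-check is the initial guess at $a'=1$: there $\mu_0\equiv\emptyReg$, so any non-input value in $\mu_1$ counts as a guess and is covered by the same argument.
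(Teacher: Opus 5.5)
Your argument works for intervals in $\mathcal{I}_{\valpost{r_i}}$. It also works for guessed values in $\mathcal{I}_{\valpre{r_i}}$: a guess has $\alpha_{a'}\neq\alpha$, so the witness $m$ lies in $\range{a'+1}{b'}$, which is inside the shifted interval.

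The gap is your first case when $x=\valpre{r_i}$. Suppose $r_i$ holds $\alpha$ exactly at valuation indices $\range{a'}{b'}$. Then $I=\range{a'+1}{\min(b'+1,n)}$, so the position $a'$ where $\alpha$ was read is $\min I-1$. It is not in $I$, contrary to what you assert. Loading a value from the input is not a guess, so weak guessing does not force $\alpha$ to reappear on the data track. The endpoint check you defer to your last paragraph therefore cannot be carried out: the statement is false in this case.

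For a counterexample, take one register $r$ and the rules $q_0\xRightarrow{a,\ \valpost{r}=\cur\land\valpre{r}\neq\valpost{r}}q_1$ and $q_1\xRightarrow{a,\ \valpre{r}=\valpost{r}\land\cur\neq\valpre{r}}q_2$, with $q_2$ accepting. This automaton is simple and never guesses, hence it is weakly guessing. On the word with data track $1,2$, the accepting run has $\mu_1(r)=\mu_2(r)=1$. So $\range{2}{2}\in\mathcal{I}_{\valpre{r}}$ carries $1$, while $\varpi[2]=2$.

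Your reasoning does prove a corrected version of the claim:
\begin{itemize}
\item for $I\in\mathcal{I}_{\valpost{r_i}}$, there is a witness in $I$;
\item for $I\in\mathcal{I}_{\valpre{r_i}}$, there is a witness in $I\cup\set{\min I-1}$, equivalently in the matching $\valpost{r_i}$-interval.
\end{itemize}
The paper gives no argument for this claim beyond remarking that weak guessing guarantees visibility, so it does not handle this case either. Its later choice of $p_x\in\range{a_x}{b_x}$ for pre-valuation variables would need the same one-position adjustment.
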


Defining $\Live_x(a,b)$ from $\Delta_i$ labels is routine but notationally heavy, therefore the exact formula was omitted.
It uses the MSO-definable ``change points'' induced by whether the chosen rule enforces $\valpre{r_i}\neq\valpost{r_i}$.

\namedparagraph{Pairing intervals.}
We define a concept of \emph{operating intervals} of type $(x,y)$.
Fix $x,y\in\cV'$.
We define $\Pairs(x,y) \subseteq \mathcal{I}_x\times\mathcal{I}_y$ to be the set of pairs $(I_x,I_y)$ such that
\begin{align*}
    I_x\cap I_y \neq \emptyset,
    \qquad
    \min I_x \leq \min I_y,
    \qquad
    \max I_x \leq \max I_y.
\end{align*}
For $(I_x,I_y)\in\Pairs(x,y)$ define the \emph{operating interval}
\begin{align*}
    J(I_x,I_y) \;\coloneqq\; I_x\cup I_y \;=\; \range{\min I_x}{\max I_y}.
\end{align*}
If $x=y$, then $\Pairs(x,x)$ reduces to $\set{(I_x,I_x)\suchthat I_x\in\mathcal{I}_x}$, and then
$J(I_x,I_x)=I_x$.

\begin{lemma}[Bijection induced by the type $(x, y)$]
    \label{lem:pairs-bijection}
    Assume $x\neq y$.
    If $(I_x,I_y)\in\Pairs(x,y)$ then $\max I_x\in I_y$ and $\min I_y\in I_x$.
    Consequently:
    \begin{enumerate}
        \item for each $I_x$ appearing in some pair, there is at most one $I_y$ with $(I_x,I_y)\in\Pairs(x,y)$;
        \item for each $I_y$ appearing in some pair, there is at most one $I_x$ with $(I_x,I_y)\in\Pairs(x,y)$;
        \item if we enumerate the pairs $(I_x^1,I_y^1),(I_x^2,I_y^2),\ldots$ in increasing order of $\min I_x^i$,
        then $\min I_y^1<\min I_y^2<\cdots$, and the $I_y^i$ are pairwise disjoint.
    \end{enumerate}
\end{lemma}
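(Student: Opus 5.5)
The whole argument rests on one fact: $\mathcal{I}_x$ and $\mathcal{I}_y$ are partitions of $\PosOf{w}$ into intervals (the claim that live intervals partition $\PosOf{w}$). We use nothing about the automaton beyond this, so the statement is purely combinatorial, about two interval partitions of a discrete range.

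\textbf{First step: the two inclusions.} Let $(I_x,I_y)\in\Pairs(x,y)$, and write $I_x=\range{a}{b}$ and $I_y=\range{c}{d}$. The conditions read $a\le c$, $b\le d$, and $I_x\cap I_y\neq\emptyset$. Non-emptiness of the intersection of two intervals means $\max(a,c)\le\min(b,d)$, which here becomes $c\le b$. From $a\le c\le b$ we get $\min I_y=c\in I_x$. From $c\le b\le d$ we get $\max I_x=b\in I_y$. No use of $x\neq y$ is needed at this point.

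\textbf{Second step: uniqueness (items 1 and 2).} Fix $I_x$ and suppose $(I_x,I_y)$ and $(I_x,I'_y)$ are both in $\Pairs(x,y)$. By the first step, $\max I_x$ lies in both $I_y$ and $I'_y$. Since $\mathcal{I}_y$ is a partition, $I_y=I'_y$. Symmetrically, if $(I_x,I_y)$ and $(I'_x,I_y)$ are both pairs, then $\min I_y\in I_x\cap I'_x$, so $I_x=I'_x$.

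\textbf{Third step: monotonicity and disjointness (item 3).} By item 2, distinct pairs have distinct $I_x$, and distinct blocks of a partition have distinct minima. So the enumeration by $\min I_x^i$ is strict, and for $i<j$ the interval $I_x^i$ lies entirely to the left of $I_x^j$, giving $\max I_x^i<\min I_x^j$. Then
\[
\min I_y^i\le\max I_x^i<\min I_x^j\le\min I_y^j,
\]
where the first inequality uses $\min I_y^i\in I_x^i$ and the last uses the defining condition $\min I_x^j\le\min I_y^j$. Hence the minima of the $I_y^i$ are strictly increasing. Intervals from the partition $\mathcal{I}_y$ with distinct minima are distinct blocks, and therefore pairwise disjoint.

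\textbf{Main obstacle.} There is no serious obstacle. The only point needing care is to invoke the partition claim both for uniqueness and for disjointness, and to keep the inequalities strict in item 3. The hypothesis $x\neq y$ only serves to exclude the degenerate reading already handled in the text ($J(I_x,I_x)=I_x$); the argument itself goes through verbatim even when $x=y$.
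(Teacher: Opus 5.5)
Your proof is correct and follows the paper's route: the two inclusions, uniqueness via the partition property of $\mathcal{I}_x$ and $\mathcal{I}_y$, then the ordering argument, with your chain $\min I_y^i\le\max I_x^i<\min I_x^j\le\min I_y^j$ justifying item 3 more cleanly than the paper's wording. One minor slip: the fact that distinct pairs have distinct $I_x$ follows from item 1, not item 2.
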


\begin{proof}
    Let $(I_x,I_y)\in\Pairs(x,y)$.
    Since $I_x\cap I_y\neq\emptyset$ we have $\min I_y \leq \max I_x$.
    Together with $\min I_x \leq \min I_y$ this implies $\min I_y\in I_x$.
    Similarly, $\max I_x \leq \max I_y$ and $\min I_y\leq \max I_x$ imply $\max I_x\in I_y$.

    For (1), if $(I_x,I_y),(I_x,I_y')\in\Pairs(x,y)$ then both $I_y$ and $I_y'$ contain $\max I_x$.
    Since $\mathcal{I}_y$ is a partition, $I_y=I_y'$.
    (2) is symmetric using $\min I_y\in I_x$.
    For (3), enumerate by increasing $\min I_x^i$.
    Then $I_x^i$ are disjoint intervals from a partition, so $\max I_x^i < \min I_x^{i+1}$.
    Because $\max I_x^i\in I_y^i$, the $y$-interval containing $\max I_x^i$ must end before the $y$-interval containing
    $\min I_x^{i+1}$ begins (again by partition/disjointness of $\mathcal{I}_y$), hence $I_y^i$ lies strictly before $I_y^{i+1}$.
\end{proof}

\begin{lemma}[Alternating disjointness]
    \label{lem:alternating-disjointness}
    Assume that $x\neq y$. Let us enumerate $\Pairs(x,y)$ as $(I_x^1,I_y^1),(I_x^2,I_y^2),\ldots$ by increasing $\min I_x^i$.
    Let $J_i\coloneqq J(I_x^i,I_y^i)=I_x^i\cup I_y^i$.
    Then $J_i\cap J_{i+2}=\emptyset$ for all $i$.
    Thus the operating intervals split into two families of pairwise disjoint intervals:
    $\set{J_1,J_3,\ldots}$ and $\set{J_2,J_4,\ldots}$.
\end{lemma}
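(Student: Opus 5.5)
The plan is to derive everything from \cref{lem:pairs-bijection} together with the fact that $\mathcal{I}_x$ and $\mathcal{I}_y$ are partitions of $\PosOf{w}$ into intervals. Since $J_i = \range{\min I_x^i}{\max I_y^i}$ is an interval, it suffices to prove the strict inequality
\begin{align*}
    \max I_y^i \;<\; \min I_x^{i+2}.
\end{align*}
The monotonicity of the enumeration then upgrades this to $J_i \cap J_j = \emptyset$ for all $j \geq i+2$.

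The inequality comes from chaining through the middle pair $(I_x^{i+1}, I_y^{i+1})$:
\begin{itemize}
    \item By \cref{lem:pairs-bijection}(3), $I_y^i$ and $I_y^{i+1}$ are disjoint intervals with $\min I_y^i < \min I_y^{i+1}$. Hence $I_y^i$ lies entirely before $I_y^{i+1}$, so $\max I_y^i < \min I_y^{i+1}$.
    \item By the first part of \cref{lem:pairs-bijection}, $\min I_y^{i+1} \in I_x^{i+1}$, so $\min I_y^{i+1} \leq \max I_x^{i+1}$.
    \item The intervals $I_x^{i+1}$ and $I_x^{i+2}$ are distinct members of the partition $\mathcal{I}_x$. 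They are enumerated by increasing minimum, so $\max I_x^{i+1} < \min I_x^{i+2}$.
\end{itemize}
Combining the three gives $\max I_y^i < \min I_x^{i+2} = \min J_{i+2}$. Since $\max J_i = \max I_y^i$, the intervals are disjoint.

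For the final sentence of the statement, take $j \geq i+2$ in the same parity class. Then $\min J_j = \min I_x^j \geq \min I_x^{i+2} > \max J_i$, so $J_i \cap J_j = \emptyset$. Thus $\set{J_1, J_3, \ldots}$ and $\set{J_2, J_4, \ldots}$ are each families of pairwise disjoint intervals.

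I expect no real obstacle. The only care point is the strictness in the first step: it relies on disjointness of $I_y^i$ and $I_y^{i+1}$ (they are different elements of $\mathcal{I}_y$ by \cref{lem:pairs-bijection}(2)), not merely on the ordering of their minima. I would also verify that the identity $J = \range{\min I_x}{\max I_y}$ used in the definition is justified by $I_x \cap I_y \neq \emptyset$.
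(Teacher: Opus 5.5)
Your proof is correct and uses the same three facts as the paper's proof: the $y$-intervals are ordered and disjoint, the middle pair satisfies $I_x^{i+1}\cap I_y^{i+1}\neq\emptyset$, and the $x$-intervals are ordered and disjoint. The only difference is that you chain them directly as $\max I_y^i<\min I_y^{i+1}\le\max I_x^{i+1}<\min I_x^{i+2}$, whereas the paper argues by contradiction; your extension from $J_{i+2}$ to all same-parity $J_j$ with $j\ge i+2$ fills in a step the paper leaves implicit in its ``Thus''.
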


See \cref{fig:operating-intervals} for an illustration of the concept of alternative disjointness of operating intervals.

\begin{proof}
    By \cref{lem:pairs-bijection}, both $(I_x^i)_i$ and $(I_y^i)_i$ are sequences of pairwise disjoint intervals, ordered by increasing $\min I_x^i$ and $\min I_y^i$,
    and $I_x^i\cap I_y^i\neq\emptyset$.
    Assume towards a contradiction that $J_i\cap J_{i+2}\neq\emptyset$.
    Then $\min I_x^{i+2}=\min J_{i+2}\leq \max J_i=\max I_y^i$.
    Since the $x$-intervals are ordered and disjoint, $\max I_x^{i+1}<\min I_x^{i+2}$, hence
    \begin{align*}
        \max I_x^{i+1} < \min I_x^{i+2} \leq \max I_y^i.
    \end{align*}
    Since the $y$-intervals are ordered and disjoint, $\max I_y^i < \min I_y^{i+1}$, hence
    \begin{align*}
        \max I_x^{i+1} < \max I_y^i < \min I_y^{i+1}.
    \end{align*}
    Therefore $I_x^{i+1}$ lies strictly to the left of $I_y^{i+1}$, so $I_x^{i+1}\cap I_y^{i+1}=\emptyset$,
    contradicting $(I_x^{i+1},I_y^{i+1})\in\Pairs(x,y)$.
\end{proof}

\namedparagraph{Using $\segmentSymbol$ to eliminate witness variables two-at-a-time.}
The $\SMSO$ condition forbids choosing witness positions \emph{as a function of} a universally quantified position $p$.
Instead we will cover the word by disjoint operating intervals, choose witnesses once per operating interval,
and recurse.
Fix a recursion stage with a partition $\cP\disjointUnion\cQ=\cV'$.
Intuitively, variables in $\cP$ already have fixed witness positions $(p_x)_{x\in\cP}$ in the current scope,
while variables in $\cQ$ remain to be witnessed.

\proofCase{1}{Base case}
If $\cQ=\emptyset$, we can check the guard at all positions with a single nested variable $p$:
\begin{align*}
    \Psi_{\cV',\emptyset}(\bar p)
    \;\coloneqq\;
    \forall{p}\ \bigwedge_{t\in\delta}\Big(\Delta_{\kappa(t)}(p)\rightarrow \widehat{\varPhi}_t(p,\bar p)\Big),
\end{align*}
where $\widehat{\varPhi}_t$ is obtained from the register constraint $\varPhi_t$ by replacing:
each register-value variable $x\in\cV_\cR$ by its witness position $p_x$ and $\cur$ by $p$.
Then every data atom contains at most one nested variable (namely $p$), so this is well-formed.

\proofCase{2}{Inductive step}
Assume $\cQ\neq\emptyset$.
For each ordered pair $(x,y)\in\cQ\times\cQ$ (including $x=y$), we will:
\begin{enumerate}
    \item define the set of operating intervals $J(I_x,I_y)$ relevant for the current stage;
    \item split them into two disjoint families (odd/even) using \cref{lem:alternating-disjointness} (or triviality for $x=y$);
    \item quantify split-point sets $X^0_{x,y},X^1_{x,y}$ that cut the current scope into subscopes containing precisely those operating intervals;
    \item use $\segment{X^b_{x,y}}$ to run a subformula once per operating interval, where we existentially choose witness positions for $x$ and $y$, thus eliminating two variables when $x \neq y$, or just one otherwise.
\end{enumerate}

We write $\Op_\cQ(x,y; a_x,b_x,a_y,b_y)$ for the MSO-definable predicate defining the set $\Pairs(x,y)$ for $x, y \in \cQ$:
\begin{itemize}
    \item $\Live_x(a_x,b_x)\land \Live_y(a_y,b_y)\land \range{a_x}{b_x}\cap\range{a_y}{b_y}\neq\emptyset$,
    \item $a_x\leq a_y$ and $b_x\leq b_y$, and
    \item $\range{a_x}{b_x}$ is left-extremal: for any other $a_z,b_z$ such that $\Live_z(a_z,b_z)$ and $a_y \in \range{a_z}{b_z}$, the left end $a_x \leq a_z$ (breaking ties with $\prec$),
    \item $\range{a_y}{b_y}$ is right-extremal: for any other $a_z,b_z$ such that $\Live_z(a_z,b_z)$ and $b_x \in \range{a_z}{b_z}$, the right end $a_z \leq a_y$ (breaking ties with $\prec$).
\end{itemize}
Again, this condition is MSO-definably, but notationally heavy, thus we omit the exact formula.
Now we can define $\Psi_{\cP,\cQ}(\bar p)$, where $\bar p$ is a tuple of variables from $\cP$, as:
\begin{align*}
    \Psi_{\cP,\cQ}(\bar p)
    \;\coloneqq\;
    \bigwedge_{(x,y)\in\cQ\times\cQ}\ \bigwedge_{b\in\set{0,1}}\
    \exists{X^b_{x,y}}\Big(
    \Split^{b}_{\cQ}(x,y,X^b_{x,y})
    \ \land\
    \segment{X^b_{x,y}}\ \Theta_{\cP,\cQ}(x,y)
    \Big),
\end{align*}
where:
\begin{itemize}
    \item $\Split^{b}_{\cQ}(x,y,X)$ says that the cutpoints $X$ are such that the induced subscopes contain (possibly as a strict subset) all $b$-parity
    operating intervals $J(I_x^i,I_y^i)$ arising from the extremal interval pairs for $(x,y)$, listed in order.
    This is justified by \cref{lem:alternating-disjointness} (and by disjointness of $\mathcal{I}_x$ for $x=y$).

    \item $\Theta_{\cP,\cQ}(x,y)$ is the per-subsco\-pe action:
    \begin{align*}
        \Theta_{\cP,\cQ}(x,y;\bar p)
        \coloneqq {}
        &\Big[
            \exists{a_x,b_x,a_y,b_y}\ \Op_\cQ(x,y;a_x,b_x,a_y,b_y)
            \Big]\rightarrow \Theta'_{\cP,\cQ}(x,y;\bar p)
    \end{align*}
    where for $x \neq y$ formula $\Theta'_{\cP,\cQ}(x,y;\bar p)$ is defined as
    \begin{align*}
        \Theta'_{\cP,\cQ}(x,y;\bar p) \coloneqq \exists{p_x}\exists{p_y}\Big(\underbrace{p_x\in\range{a_x}{b_x}}_{\substack{\text{Witness for $x$ in its}\\\text{live range $\range{a_x}{b_x}$}}}\land \underbrace{p_y\in\range{a_y}{b_y}}_{\substack{\text{Witness for $y$ in its}\\\text{live range $\range{a_y}{b_y}$}}}\land
        \Psi_{\cP\cup\set{x,y},\ \cQ\setminus\set{x,y}}(\bar p,p_x,p_y)\Big)\,,
    \end{align*}
    and when $x = y$ we define
    \begin{align*}
        \Theta'_{\cP,\cQ}(x,y;\bar p) \coloneqq \exists{p_x}\Big(p_x\in\range{a_x}{b_x}\land
        \Psi_{\cP\cup\set{x},\ \cQ\setminus\set{x}}(\bar p,p_x)\Big)\,.
    \end{align*}
\end{itemize}

\begin{figure}[htbp]
    \centering
    \begin{subfigure}[b]{\linewidth}
        \hspace{-1cm}\includegraphics[width=1.2\linewidth]{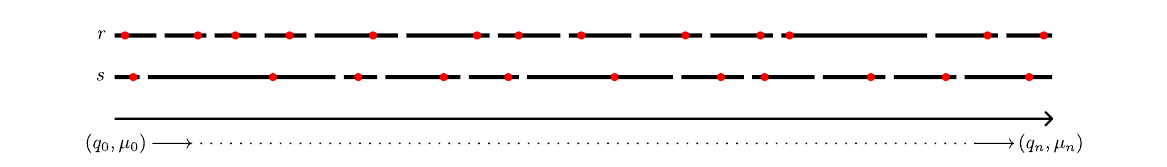}
        \caption{A run $\pi$ of $\cA$ and its two partitions into live intervals of registers $r$ and $s$.}
        \label{fig:pre}
    \end{subfigure}

    \vspace{0.5cm}
    \begin{subfigure}[b]{\linewidth}
        \hspace{-1cm}\includegraphics[width=1.2\linewidth]{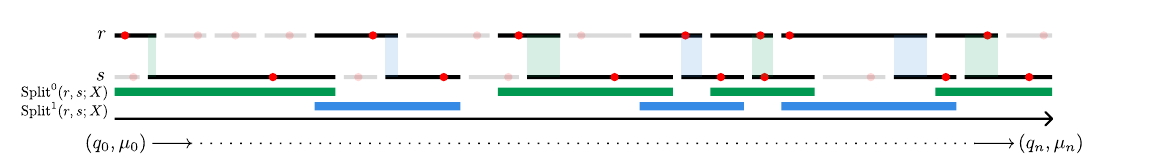}
        \caption{Operating intervals for type $(r, s)$. Intuitively, $r$-interval extends more to the left and $s$-interval to the right. Using two scope modalities in parallel, we can capture witness points (red dots) for every operating interval of that type.}
        \label{fig:rr}
    \end{subfigure}

    \vspace{0.5cm}

    \begin{subfigure}[b]{\linewidth}
        \hspace{-1cm}\includegraphics[width=1.2\linewidth]{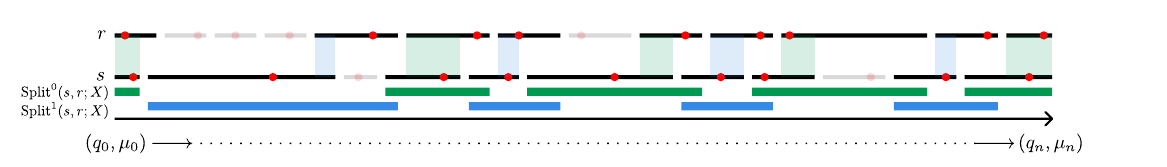}
        \caption{Operating intervals for type $(s, r)$. Intuitively, $s$-interval extends more to the left and $r$-interval to the right. Using two scope modalities in parallel, we can capture witness points (red dots) for every operating interval of that type.}
        \label{fig:ss}
    \end{subfigure}

    \vspace{0.5cm}
    \begin{subfigure}[b]{\linewidth}
        \hspace{-1cm}\includegraphics[width=1.2\linewidth]{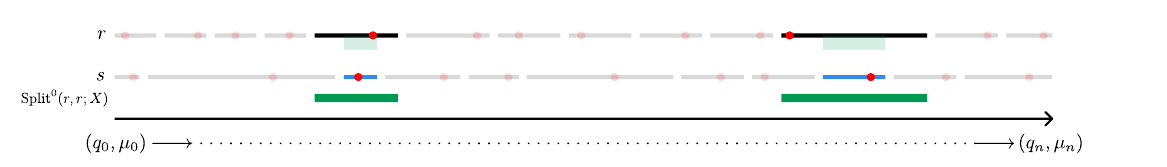}
        \caption{Operating intervals of type $(r, r)$. This degenerate case leads to eliminating only one register $r$.}
        \label{fig:rs}
    \end{subfigure}

    \vspace{0.5cm}
    \begin{subfigure}[b]{\linewidth}
        \hspace{-1cm}\includegraphics[width=1.2\linewidth]{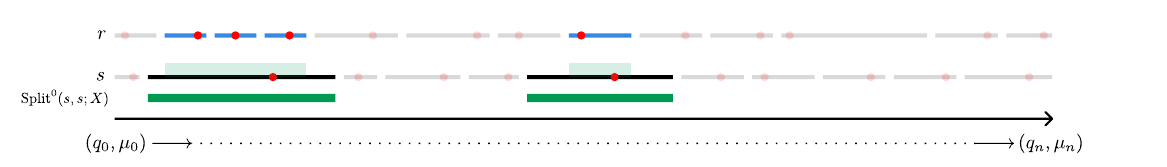}
        \caption{Operating intervals of type $(s, s)$. This degenerate case leads to eliminating only one register $s$.}
        \label{fig:sr}
    \end{subfigure}

    \caption{Formation of operating intervals for each of four types for two registers $\set{r, s}$.}
    \label{fig:operating-intervals}
\end{figure}

\noindent Finally, we define $\RegOK(\Delta_1,\ldots,\Delta_m) \;\coloneqq\; \Psi_{\emptyset,\cV'}$.

\namedparagraph{Correctness (sketch).}
\proofDir{Soundness.} ($\lang{\cA}\subseteq\lang{\varphi}$):
given an accepting run of $\cA$, interpret the $\Delta_i$ accordingly so $\Run_\cA$ holds.
Then, in each operating interval produced by $\segmentSymbol$, pick witness positions $p_x,p_y$
inside the corresponding live intervals that carry the correct register values on the data track
(existence by weak guessing). The base case checks the guard at every position, hence the formula holds.

\proofDir{Completeness.} ($\lang{\varphi}\subseteq\lang{\cA}$):
from $\Run_\cA$ obtain a rule label at every position (a run skeleton).
The recursion provides witness positions for all variables in $\cV'$ inside each operating interval,
and the base case enforces the translated guards everywhere.
Using simplicity (no shuffling + explicit $\valpre{r_i}\sim_i\valpost{r_i}$ + injectivity),
one reconstructs consistent register valuations along the skeleton, yielding an accepting run of $\cA$.
This proves $\lang{\cA}=\lang{\varphi}$.

    \section{Other proofs}
    \label{sec:other-proofs}

        \begin{proof}[Sketch of proof of \cref{lem:elimination-of-strong-guessing}]
    Elimination of strong guessing property for equality atoms $\A_{=}$ is proven as a solution to Exercise 7. \cite[p.~10,~207]{bojanczyk2019slightly}.
    We sketch the proof for dense order atoms $\A_{<} = (\Q, <, =)$.

    Fix an automaton $\cA \in \NRA[\Sigma, \A_{<}]$ for some finite $\Sigma$.
    Consider a run $\pi = (q_0, \mu_0) \xrightarrow{\sigma_1,\alpha_1} (q_1, \mu_1) \xrightarrow{\sigma_2, \alpha_2} \cdots \xrightarrow{\sigma_n, \alpha_n}$ of $\cA$ such that $\cA$ is performing strong guessing on $\pi$.
    Consider the set $\cI$ of all live intervals of all its registers;
    we depict them in \cref{fig:strong-guessing} horizontal line segments.
    The $y$-coordinate of an interval corresponds to the value stored in the register.
    Some of these intervals contain a position $i$ such that the register value is equal to the data value $\alpha_i$.
    We call such intervals weakly guessing; remaining intervals are strongly guessing.
    The idea is to construct an automaton $\cB$ which relaxes some of the transition guards corresponding to strongly guessing intervals.
    More specifically, some of the strict inequalities in a transition constraint $r_1 < r_2$ need to be weakened to $r_1 \leq r_2$ to allow all the strongly guessing intervals to be shifted vertically to reach the closest data value present in the word, making the guesses weak (cf.~\cref{fig:bottom}).
    It is clear that if $\cA$ has an accepting run, then $\cB$ will have one, too.
    For $\lang{\cB} \subseteq \lang{\cA}$, we need to ensure that a reverse operation, strengthening of the constraints, is possible.
    We do it by tracking the upper bound on how much we can shift the live interval in the opposite direction.
    This can be realised by adding a fresh $r'$ for every register $r$ of the original automaton.
\end{proof}

\begin{figure}[htbp]
    \centering
    \begin{subfigure}[b]{\linewidth}
        \includegraphics[width=\linewidth]{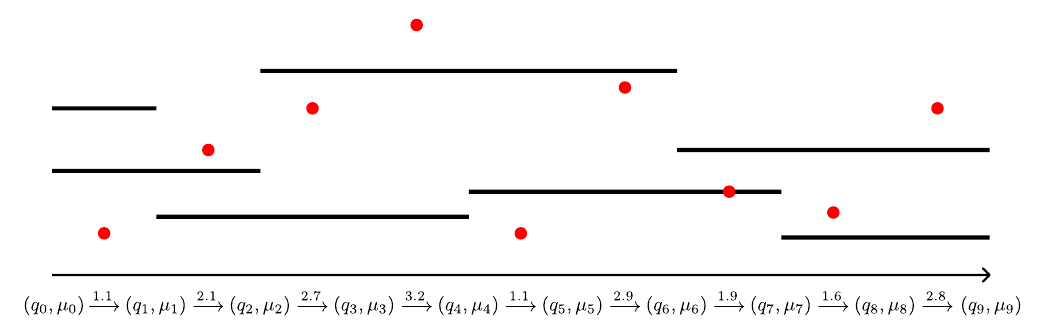}
        \caption{Live intervals in the run $\pi$ of $\cA$.}
        \label{fig:top}
    \end{subfigure}

    \vspace{0.5cm}

    \begin{subfigure}[b]{\linewidth}
        \includegraphics[width=\linewidth]{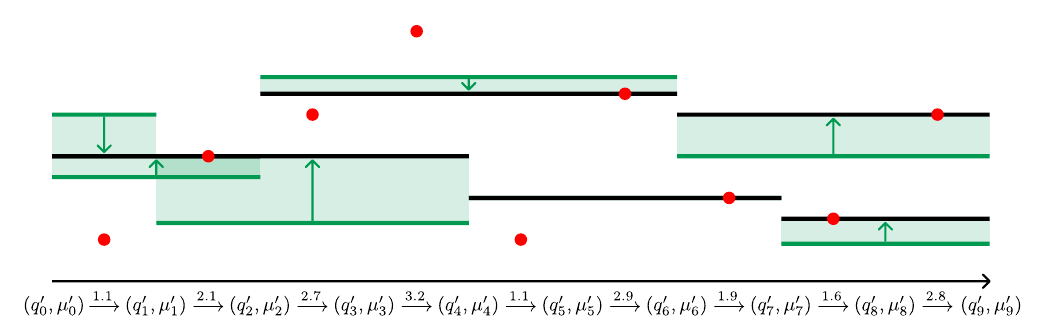}
        \caption{Live intervals in the run $\pi'$ of $\cB$.}
        \label{fig:bottom}
    \end{subfigure}

    \caption{Correspondence between a run $\pi$ of $\cA$ performing strong guesses and a run $\pi'$ of $\cB$ with only weak guesses.}
    \label{fig:strong-guessing}
\end{figure}
\end{document}